\theoremstyle{plain}
\newtheorem{theorem}{Theorem}
\newtheorem{lemma}[theorem]{Lemma}
\newtheorem{obs}[theorem]{Observation}
\theoremstyle{plain}
\theoremstyle{definition}
\newtheorem{defn}[theorem]{\protect\definitionname}
\theoremstyle{plain}
\newtheorem{prop}[theorem]{\protect\propositionname}
\theoremstyle{remark}
\newtheorem{claim}[theorem]{\protect\claimname}
\newenvironment{proof}[1][\protect\proofname]{\par
    \normalfont\topsep6\p@\@plus6\p@\relax
    \trivlist
\itemindent\parindent
\item[\hskip\labelsep\scshape #1]\ignorespaces
}{%
\endtrivlist\@endpefalse
}
\providecommand{\proofname}{Proof}
\author[1]{Waldo G\'alvez}
\author[2]{Arindam Khan} 
\author[3]{Mathieu Mari}
\author[4]{Tobias M\"{o}mke}
\author[5]{Madhusudhan Reddy}
\author[6]{Andreas Wiese}   
\affil[1]{Technical University of Munich, Germany. 
    \texttt{galvez@in.tum.de}\thanks{Supported by the European Research Council, Grant Agreement No. 691672, project APEG.}}
\affil[2]{Indian Institute of Science, Bengaluru, India.
        \texttt{arindamkhan@iisc.ac.in}}
\affil[3]{University of Warsaw, Poland.
            \texttt{mathieu.mari@ens.fr}\thanks{Supported by the ERC CoG grant TUgbOAT no 772346.}}
\affil[4]{University of Augsburg, Germany. \texttt{moemke@informatik.uni-augsburg.de}\thanks{Partially supported by the DFG Grant 439522729 (Heisenberg-Grant)}}
\affil[5]{Indian Institute of Technology,  Kharagpur, India. \texttt{pmsreddifeb18@gmail.com}}
\affil[6]{Universidad de Chile, Chile. \texttt{awiese@dii.uchile.cl}\thanks{Partially supported by FONDECYT Regular grant 1200173.}}
 \newcommand{\awr}[1]{}
 \newcommand{\mmr}[1]{}
 \newcommand{\tmr}[1]{}
 \newcommand{\madr}[1]{}
 \newcommand{\arir}[1]{}
 \newcommand{\note}[1]{{#1}}
 \newcommand{\aw}[1]{{#1}}
 \newcommand{\tm}[1]{{#1}}
 \newcommand{\mm}[1]{{#1}}
 \newcommand{\mad}[1]{{#1}}
 \newcommand{\ari}[1]{{#1}}
\providecommand{\claimname}{Claim}
\providecommand{\definitionname}{Definition}
\providecommand{\propositionname}{Proposition}
\begin{document}
\global\long\def\R{\mathcal{R}}%
\global\long\def\I{\mathcal{I}}%
\global\long\def\OPT{\mathrm{OPT}}%
\global\long\def\N{\mathbb{N}}%
\global\long\def\P{\mathcal{P}}%
\global\long\def\DP{\mathrm{DP}}%
\global\long\def\kone{k_{1}}%
\global\long\def\appone{30}%
\global\long\def\eps{\varepsilon}%
\title{A $(2+\varepsilon)$-Approximation Algorithm for Maximum Independent Set of Rectangles}
\maketitle
\begin{abstract}
We study the Maximum Independent Set of Rectangles (MISR) problem,
where we are given a set of axis-parallel rectangles in the plane
and the goal is to select a subset of non-overlapping rectangles of
maximum cardinality. In a recent breakthrough, Mitchell \cite{mitchell2021approximating}
obtained the first constant-factor approximation algorithm for MISR.
His algorithm achieves an approximation ratio of 10 and it is based
on a dynamic program that intuitively recursively partitions the input
plane into special polygons called \emph{corner-clipped rectangles (CCRs)}, 
without intersecting certain special horizontal line segments called
\emph{fences}.

In this paper, we present a $(2+\varepsilon)$-approximation algorithm for MISR which
is also based on a recursive partitioning scheme. First, we use
a partition into a class of axis-parallel polygons with constant complexity each that are more general than CCRs.
This allows us to provide an arguably simpler
analysis and at the same time already improves the approximation ratio to 6. Then, using a more
elaborate charging scheme and a recursive partitioning into 
general axis-parallel polygons with constant complexity,
we improve our approximation
ratio to~$2+\varepsilon$.
In particular, we construct a recursive partitioning based on more
general fences which can be sequences of up to $O(1/\varepsilon)$ line segments \mmr{shouldn't we say $O(1/\varepsilon)$ here ?}
each. This partitioning routine and our other new ideas may be useful for future work towards a PTAS for MISR.
\end{abstract}
\thispagestyle{empty}
\newpage{}

\setcounter{page}{1}

\section{Introduction}

Maximum Independent Set of Rectangles (MISR) is a fundamental and
well-studied problem in computational geometry, combinatorial optimization
and approximation algorithms. In MISR, we are given a set $\R$ of
$n$ possibly overlapping axis-parallel rectangles in the plane. We
are looking for a subset $\R^{*}\subseteq\R$ of maximum cardinality
such that the rectangles in $\R^{*}$ are pairwise disjoint. MISR
finds numerous applications in practice, e.g., in map labeling \cite{haunert2014labeling,doerschler1992rule},
data mining \cite{fukuda1996data} and resource allocation \cite{lewin2002routing}.

The problem is an important special case of the \textsc{Maximum Independent
Set }problem in graphs, which in general is NP-hard to approximate
within a factor of $n^{1-\eps}$ for any constant $\eps>0$~\cite{haastad1999clique}.
However, for MISR much better approximation ratios are possible, e.g.,
there are multiple $O(\log n)$-approximation algorithms \cite{CC2009,nielsen2000fast,khanna1998approximating}.
It had been a long-standing open problem to find an $O(1)$-approximation
algorithm for MISR. One possible approach for this is to compute an
optimal solution to the canonical LP-relaxation of MISR and round
it. This approach was used by Chalermsook and Chuzhoy in order to
obtain an $O(\log\log n)$-approximation~\cite{ChalermsookW21}.
The LP is conjectured to have an integrality gap of $O(1)$ which
is a long-standing open problem by itself, with interesting connections
to graph theory~\cite{chalermsook2011coloring,ChalermsookW21}. On
the other hand, it seems likely that MISR admits even a PTAS, given
that it admits a QPTAS due to Adamaszek and Wiese~\cite{adamaszek2013approximation},
and in particular one with a running time of only ~$n^{O((\log\log n/\eps)^{\aw{4}})}$
due to Chuzhoy and Ene~\cite{ChuzhoyE16}.

Recently, in a breakthrough result, Mitchell presented a polynomial
time 10-approximation algorithm~\cite{mitchell2021approximating}
and consequently solved the aforementioned long-standing open problem.
Instead of rounding the LP, he employs a recursive partitioning of
the plane into a special type of rectilinear polygons called corner-clipped
rectangles (CCRs). Given a CCR, he recursively subdivides it into
at most five smaller CCRs until he obtains CCRs which essentially
contain at most one rectangle from the optimal solution $\OPT$ each.
At the end, he outputs the rectangles contained in these final CCRs
plus some carefully chosen rectangles from $\OPT$ that are intersected
by these recursive cuts. With a dynamic program, he computes the recursive
partition that yields the largest number of output rectangles, which
in particular ``remembers'' in each step $O(1)$ rectangles that
were intersected by some previous cuts. In a structural theorem he
shows that there exists a set of at least $|\OPT|/10$ rectangles
which can be output by such a recursive partitioning, leading to the
approximation ratio of 10. The structural theorem is proved using
an exhaustive case analysis for defining the subdivision of a given
CCR, with sixty cases in total.

Mitchell's result yields several interesting open questions, most
notably whether one can improve the approximation ratio and whether
one can give a simpler analysis which does not rely on a large case
distinction. In this paper, we answer both questions in the affirmative.

\subsection{Our Contribution }

In this paper, we present a polynomial time $(2+\varepsilon)$-approximation
algorithm for MISR. In a first step, we construct a 6-approximation
algorithm which hence already improves the approximation ratio. It
uses an arguably simpler analysis, and also a less complicated dynamic
program. Similar to the result by Mitchell~\cite{mitchell2021approximating},
we use a recursive decomposition of the plane into a constant number
of (simple) axis-parallel polygons with constant complexity each.
However, instead of requiring the arising polygons to be CCRs, all
our polygons are horizontally 
(or vertically) convex (see Figure \ref{fig:horizontally_convex_polygon}),
i.e., if two points in the polygon lie on the same horizontal (vertical)
line, then any point between them is also contained in the polygon,
and we require the polygons to have a bounded number of edges.
Hence, this class of polygons is larger than CCRs. We prove that we
can recursively partition the input plane into polygons of this class,
such that at the end we extract at least $|\OPT|/6$ rectangles from
$\OPT$. In fact, we can compute such a partition with a simpler dynamic
program that does not need to remember any rectangle intersected
by previous cuts, but which instead just partitions the plane recursively.

\begin{figure}
\centering \includegraphics[width=0.7\textwidth]{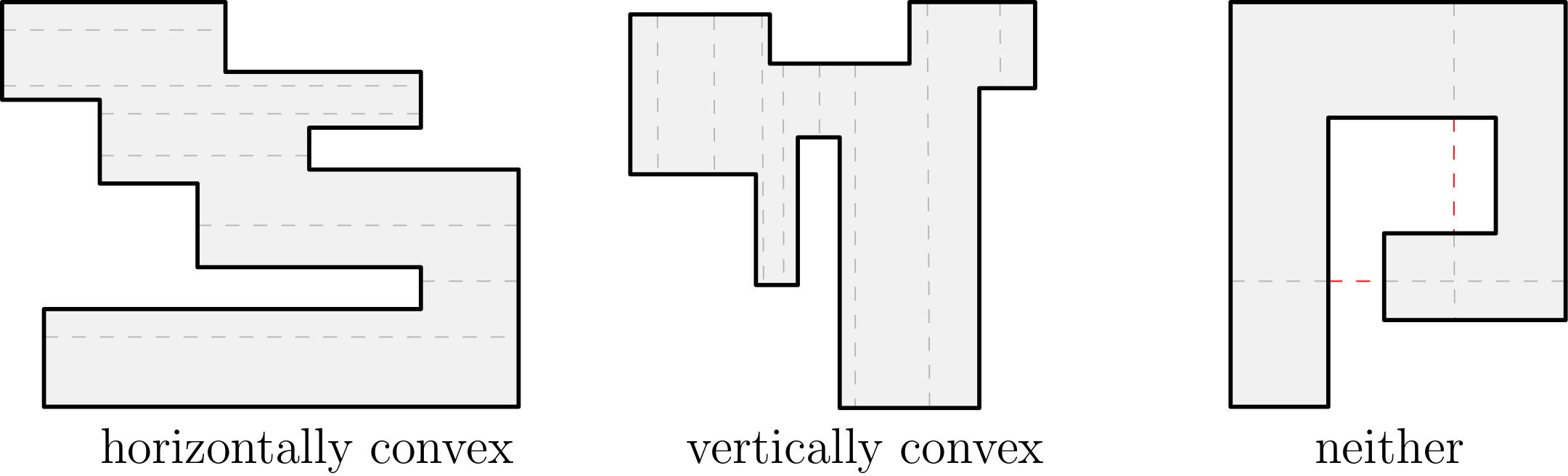}
\caption{On the left a horizontally convex polygon. On the middle, a vertically
convex polygon. On the right, an axis-parallel polygon that is neither
horizontally nor vertically convex.}
\label{fig:horizontally_convex_polygon} 
\end{figure}\awr{there was a different Figure 1, please check!}

Given a horizontally (or vertically) convex polygon, we show that
we can cut it into at most three smaller polygons (with a bounded
number of edges), such that the cut uses only one single line segment
$\ell$ that potentially intersects rectangles from $\OPT$
(and which are hence lost). In contrast, in~\cite{mitchell2021approximating}
there can be two line segments of this type which yields a higher
approximation ratio. Note that with this strategy it is unavoidable
to intersect some rectangles from $\OPT$ (unless the instance is
very easy). However, we ensure that for each rectangle $R\in\OPT$
intersected by~$\ell$, we can find \emph{two }other rectangles from
$\OPT$ that we can charge $R$ to, while in~\cite{mitchell2021approximating}
there was only one rectangle to charge to. Also, when we define our
cut, there are only two different cases that we need to consider,
which leads to an arguably simpler analysis. Based on this, we construct
a fractional charging scheme which yields an approximation ratio of~6.

We extend this approach to improve the approximation ratio even further.
In \cite{mitchell2021approximating} and in our algorithm above, \emph{fences}
are defined which are horizontal line segments, emerging from boundary
edges of the polygon. We are not allowed to cut through them in our
recursive partition, and this intuitively protect some rectangles
in $\OPT$ from being intersected (in particular the rectangles that
we charged other rectangles to). We extend this approach to more general
fences, each of them being a \emph{sequence} of $7$ axis-parallel
line segments, rather than single horizontal line segments. Like the
fences above, they protect some of the rectangles in $\OPT$ from
being deleted, since we do not allow ourselves to cut through them.
However, due to their more elaborate shape, they protect \emph{more}
rectangles in $\OPT$ and they protect them \emph{better}. For example,
we use them to ensure that each rectangle receives charge either only
from rectangles on its left, or only from rectangles on its right. 

We show that in this way we obtain an approximation ratio of 3. Note that in~\cite{mitchell2021approximating} and in
our algorithm above the rectangles in $\OPT$ are subdivided into
three groups that are denoted as \emph{horizontally nested}
rectangles, \emph{vertically nested} rectangles, and rectangles that
are \emph{neither horizontally nor vertically nested}. W.l.o.g.~we
can assume that there are at most $|\OPT|/2$ horizontally nested
rectangles, which loses a factor of 2 (in~\cite{mitchell2021approximating}
and in our argumentation). However, it turns out that our solution
still sometimes contains some horizontally nested rectangles. In this
case, we lose less than a factor of 2 in the previous step. Also,
in some settings we identify more rectangles to charge to than in
the previous argumentation. With these two ingredients we construct
a more involved fractional charging argument which improves the approximation
ratio to 3. In particular, this charging scheme crucially needs
our more general fences in order \ari{to} distribute the charges
appropriately to rectangles that are included in the computed solution.

We need that our recursive partitioning sequence does not intersect
any of our more general fences. To this end, 
we design a new recursive partitioning scheme based on axis-parallel
polygons of constant complexity, \emph{without} imposing additional
conditions on the polygons like, e.g., being horizontally convex or
a CCR. 
For our 3-approximation it would be sufficient to have such a
scheme for fences with up to 7 line segments each. However, we obtain
such a partition even for any set of $x$-monotone fences with an
arbitrarily large constant number $\tau\in\N$ of line segments each.
This can be used directly as a black-box in future work. For example,
it might be that one can design a PTAS for MISR via fences with $\tau=1/\eps^{O(1)}$
many line segments each.

\aw{Finally, we show that by using fences with $\tau=O(1/\eps)$
line segments, we improve our approximation ratio to $2+\eps$. To
this end, we replace the definition of horizontally and vertically
nested rectangles by a related but different definition of \emph{horizontally
}and \emph{vertically nice }rectangles. This simplifies the analysis
since now there are only two groups of rectangles, horizontally nice
and vertically nice rectangles. \mmr{I'm not sure this is really a simplification} Similar as above, by losing a factor
of 2 we assume that at least $|\OPT|/2$ rectangles are horizontally
nice. We construct a fractional charging argumentation in which each
intersected horizontally nice rectangle is charged either to one vertically
nice rectangle in our solution (i.e., that we assumed to be already
lost when we focused on the horizontally nice rectangles) or to $1/\eps$
horizontally nice rectangles in our solution. This yields an approximation
ratio of $2+\eps$.}

We hope that our other new ideas will lead to further progress towards
a PTAS for MISR.
\begin{restatable}{theorem}{threeapx}\label{theorem:3apx}For any $\eps>0$ there is a polynomial-time
$(2+\varepsilon)$-approximation algorithm for \aw{the Maximum Independent
Set of Rectangles problem.} 
\end{restatable}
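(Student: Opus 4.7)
The plan is to prove Theorem~\ref{theorem:3apx} in three stages of increasing sophistication, following the three-level refinement described in the introduction. The overall strategy in each stage is (i) restrict the partitioning polygons to a suitable class, (ii) design a dynamic program that enumerates all valid recursive partitions of the bounding box, and (iii) prove a structural lemma that such a partition exists in which few rectangles of $\OPT$ are cut. The final approximation ratio is then bounded by a fractional charging argument from lost rectangles to surviving ones.

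As a warm-up I would first establish a $6$-approximation using horizontally or vertically convex axis-parallel polygons of constant complexity. The dynamic program cuts each polygon by a single axis-parallel segment $\ell$ into at most three smaller polygons of the same class; unlike in~\cite{mitchell2021approximating}, only one such cutting segment is used per step. The key structural claim is that $\ell$ can always be chosen so that every rectangle $R \in \OPT$ intersected by $\ell$ can be charged to two other rectangles of $\OPT$, one on either side of $\ell$ inside the polygon. Combined with the standard factor-of-$2$ loss from assuming at most $|\OPT|/2$ horizontally nested rectangles, the resulting $1/3$-to-$1$ charging bounds the ratio by $6$; because the polygons are horizontally or vertically convex, only two cases arise in defining the cut.

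To reach $(2+\eps)$, I would introduce \emph{generalized fences}, which are $x$-monotone sequences of $\tau = O(1/\eps)$ axis-parallel segments emerging from boundary edges, and forbid the partition from crossing them. To accommodate these, I would drop horizontal/vertical convexity and allow the partitioning to produce arbitrary axis-parallel polygons of constant complexity, stating as a reusable black-box the fact that such a partitioning routine exists for any constant $\tau$ and runs in time $n^{O(\tau)}$. I would then replace horizontally/vertically nested rectangles by \emph{horizontally and vertically nice} rectangles, which interact better with long fences. After losing a factor of $2$ by assuming $|\OPT|/2$ horizontally nice rectangles in $\OPT$, I would construct the fences so that every horizontally nice rectangle intersected by the partition can be fractionally charged either entirely to a single vertically nice output rectangle, or with weight $1/\eps$ to each of $1/\eps$ horizontally nice output rectangles. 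A routine fractional argument then caps the multiplicative loss from the cuts by $1+\eps$ beyond the initial factor of $2$, yielding the desired $(2+\eps)$-approximation in overall time $n^{O(1/\eps)}$.

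The main obstacle will be the black-box partitioning routine: showing that any collection of $x$-monotone fences with up to $\tau$ segments each can be respected by a recursive decomposition into constant-complexity polygons, and designing a polynomial-size dynamic program that realizes it. A secondary difficulty, specific to the $(2+\eps)$ stage, will be engineering the fences to support both charging modes simultaneously; this will require a careful structural analysis of how horizontally nice rectangles can lie relative to their fence-protected neighbours, and is where the redefinition from nested to nice rectangles should pay off.
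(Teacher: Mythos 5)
Your plan matches the paper's actual proof almost exactly: a DP over constant-complexity axis-parallel polygons, a 6-approximation warm-up via horizontally/vertically convex pieces and single cutting segments, then $\tau$-fences with $\tau = O(1/\eps)$, the replacement of nested by nice rectangles, and a fractional charging that sends each lost horizontally nice rectangle either in full to one non-horizontally-nice survivor or split among $\Theta(1/\eps)$ horizontally nice survivors. Two small remarks: the weight in your second charging mode should read $\eps$ per rectangle (or $\eps/2$ to each of $2/\eps$ rectangles, as in the paper), not $1/\eps$, so that the total charge per lost rectangle is $1$; and the ``careful structural analysis'' you defer is carried out in the paper by building a directed forest $H$ on $\OPT'$ (edge $R \to \hat R$ when $R$ sees the bottom-left corner of $\hat R$ on its right) and following a maximal path from the cut rectangle, with a short lemma translating path length $d$ in $H$ into a $(2d+1)$-fence that guarantees $\tau$-protection of the charged rectangles.
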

\aw{We remark that in order to obtain a better approximation ratio
than 2, substantially new ideas seem to be needed. In the approach
by Mitchell~\cite{mitchell2021approximating} as well as in our argumentations,
the analysis loses a factor of 2 by focusing on the rectangles that
are not horizontally nested or that are horizontally nice, respectively.
This loses a factor of 2 in the approximation ratio. It seems unclear
how to avoid this.} 

In parallel and independently from our work, Mitchell recently improved
his 10-approximation algorithm to a $10/3$-approximation, and he
claims that this algorithm can be improved further to a~$(3+\eps)$-approximation
whose running time depends on $\eps$~\cite{DBLP:journals/corr/abs-2101-00326-v3}.
\awr{Rewrote this paragraph assuming that he will put his $10/3$-approximation
on the arxiv.}\awr{we should cite here the precise version of Mitchells
arxiv paper, since he might update his arxiv-paper further before
the SODA notification. }

\subsection{Other related work}

\awr{added a few author names} For simple geometric objects such
as disks, squares and fat objects, polynomial-time approximation schemes
(PTAS) are known for the corresponding setting of Independent Set~\cite{erlebach2005polynomial,ChanHarPeled2012}.
In the weighted case of MISR, 
each rectangle has an associated weight and the goal is to select
a maximum weight independent set. Recently, Chalermsook and Walczak
obtained an $O(\log\log n)$-approximation \cite{ChalermsookW21},
improving the previous $O(\log n/\log\log n)$-approximation by Chan
and Har-Peled~\cite{ChanHarPeled2012}. Furthermore, Marx~\cite{marx2005efficient}
showed that MISR is W{[}1{]}-hard, ruling out an EPTAS for the problem.
Grandoni et al. \cite{GKW19} presented a parameterized approximation
scheme for the problem. Fox and Pach \cite{fox2011computing} have
given an $n^{\eps}$-approximation for maximum independent set of
line segments. In fact, their result extends to the independent set
of intersection graphs of $k$-intersecting curves (where each pair
of curves has at most $k$ points in common).

MISR also has interesting connections with end-to-end cuts (called
guillotine cuts \cite{pach2000cutting}, also known as binary space
partitions \cite{de1997computational}). Due to its practical relevance
in cutting industry, guillotine cuts are well-studied for packing
problems, \aw{e.g.,} \cite{BansalLS05,KhanSocg21}. It has been
conjectured that, given a set of $n$ axis-parallel rectangles, $\Omega(n)$
rectangles can be separated using a sequence of guillotine cuts~\cite{AbedCCKPSW15}.
If true, this will imply an $O(n^{5})$-time simple $O(1)$-approximation
algorithm for MISR~\cite{AbedCCKPSW15,KhanR20}. 

There are many other related important geometric optimization problems,
such as Geometric Set Cover \cite{chan2012weighted,chan2014exact,mustafa2014settling},
Geometric Hitting Set \cite{chekuri2020fast,agarwal2014near,mustafa2010improved},
2-D Bin Packing \cite{bansal2010new,jansen2016new,bansal2014binpacking},
Strip Packing \cite{harren20145,JansenR19,galvez2020tight}, 2-D Knapsack
\cite{jansen2004rectangle,GalvezGHI0W17,GalSocg21}, Unsplittable
Flow on a Path \cite{chakrabarti2002approximation,bansal2006quasi,bonsma2014constant,anagnostopoulos2018mazing,GMW018},
Storage Allocation problem \cite{bar2001unified,bar2017constant,MomkeW20},
etc. We refer the readers to \cite{CKPT17} for a literature survey.\awr{TODO:
cite more papers in batch-citations}

\section{Dynamic program}

\label{sec:dynamic_program}

We assume that we are given the set $\R$ 
with 
$n$ axis-parallel rectangles in the plane such that each rectangle
$R\in\R$ 
 is specified by its two opposite corners 
$(x_l,y_b)\in\mathbb{N}^{2}$ and $(x_r,y_t)\in\mathbb{N}^{2}$, with $x_l<x_r$
and $y_b<y_t$, so that $R:=\{(x,y)\in\mathbb{R}^{2}\mid x_l<x<x_r \wedge y_b<y<y_t\}$
(i.e., the rectangles are open sets). By a standard preprocessing~\cite{AdamaszekHW19},
we can assume that, for each rectangle $R\in\R$, we have that
$x_l,x_r,y_b,y_t\in\{0,1, \dots, 2n-1\}$.
In particular, all input rectangles are contained in the square $S:=[0,2n-1]\times[0,2n-1]$.

Our algorithm is a geometric dynamic program (similar as in \cite{AdamaszekHW19,ChuzhoyE16,mitchell2021approximating})
which, intuitively, recursively subdivides $S$ into smaller polygons
until each polygon contains only one rectangle from the optimal solution
$\OPT$. For each of the latter polygons, it selects one input rectangle
that is contained in the polygon, and finally outputs the set of all
rectangles that are selected in this way. During the recursion, we
ensure that each arising polygon has only $O(1)$ edges \aw{that} are
all axis-parallel with integral coordinates. 
This ensures that there are only $n^{O(1)}$ possible
polygons of this type, which allows us to define a dynamic program
that computes the best recursive partition of $S$ in time $n^{O(1)}$.
Note that the line segments defining the recursive subdivision of
$S$ might intersect rectangles from $\OPT$ and those will not be
included in our solution.

Our dynamic program has a parameter $k\in\N$. It has a dynamic programming
table with one cell for each \aw{simple} polygon $P\subseteq S$ with at most
$k$ axis-parallel edges, such that the endpoints of each edge have
integral coordinates. 

Denote by $\P(k):=\P$ the
set of polygons corresponding to the DP-cells. For each $P\in\P$,
the dynamic program computes a solution $\DP(P)\subseteq\R$ consisting
of rectangles from $\R$ contained in $P$. For computing these solutions,
we order the polygons in $\P$ according to any partial order $\prec$
in which, for each $P,P'\in\P$ with $P\subsetneq P'$, it holds that
$P\prec P'$. We consider the polygons in $\P$ in this order so as
to compute their respective solutions $\DP(P)$. Consider a polygon
$P\in\P$. If $P$ does not contain any rectangle from $\R$ then
we define $\DP(P):=\emptyset$ and stop. Similarly, if $P$ contains
only one rectangle $R\in\R$ then we define $\DP(P):=\{R\}$ and stop.
Otherwise, 
the DP 
tries all subdivisions of $P$ into at most three 
polygons $P_{1},P_{2},P_{3}\in\P$ with at most $k$ axis-parallel edges each, looks up their corresponding (already
computed) solutions and defines their union $\DP(P_{1})\cup\DP(P_{2})\cup\DP(P_{3})$
as a \emph{candidate solution }for $P$. 
Finally, we define $\DP(P)$ to be the candidate
solution with largest cardinality. At the very end, we output $\DP(S)$. 

\begin{lemma} \label{lemma:DP-running-time}Parameterized by $k\in\N$,
the running time of the dynamic program is $O(n^{5k/2})$. \end{lemma}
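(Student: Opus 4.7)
The plan is a bookkeeping argument: I would bound the running time by counting the total number of (DP cell, subdivision) pairs that the algorithm inspects, since each such pair contributes only $n^{O(1)}$ work for the routine check that the triple partitions $P$ and each piece lies in $\mathcal{P}(k)$.

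First, I would bound $|\mathcal{P}(k)|$. A polygon $P\in\mathcal{P}(k)$ is a simple rectilinear polygon with at most $k$ axis-parallel edges and integer vertices in $\{0,1,\dots,2n-1\}^2$. Because the edges of an axis-parallel polygon alternate horizontal/vertical around the boundary, $P$ is fully specified by at most $k/2$ integer $x$-coordinates (for its vertical edges) and at most $k/2$ integer $y$-coordinates (for its horizontal edges), each from $O(n)$ possible values, together with a combinatorial topology of bounded cardinality for each fixed $k$. This yields $|\mathcal{P}(k)|=O(n^k)$.

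Next, for each $P\in\mathcal{P}(k)$, I would parameterize each valid subdivision into $(P_1,P_2,P_3)\in\mathcal{P}(k)^3$ by its \emph{internal cut} --- the axis-parallel segments interior to $P$ that separate the three subpolygons. A double-counting argument on edges shows that the cut has at most $k$ axis-parallel segments: since each $P_i$ has at most $k$ edges, $\sum_i |\partial P_i|\le 3k$, while at most $k$ of these segments come from $\partial P$ (possibly subdivided) and each internal cut edge is shared by exactly two subpolygons, hence counted twice in the sum. Carefully enumerating such cuts --- exploiting that cut-edge endpoints either lie on $\partial P$ (sharing a coordinate with some boundary edge) or at T-junctions between cut segments, and that the cut edges themselves alternate H/V --- would reduce the naive $3k$ integer degrees of freedom to $3k/2$, giving $O(n^{3k/2})$ subdivisions per cell.

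Combining, the total running time is
\[
|\mathcal{P}(k)|\cdot O(n^{3k/2}) \;=\; O(n^k)\cdot O(n^{3k/2}) \;=\; O(n^{5k/2}).
\]
The hard part will be the sharp $O(n^{3k/2})$ subdivision bound: the naive enumeration (each of the $\le k$ cut edges specified by an axis and two endpoints) yields only $O(n^{3k})$, and the factor-of-$n^{3k/2}$ improvement relies on careful amortized accounting of coordinate-sharing with $\partial P$ and among the cut segments themselves.
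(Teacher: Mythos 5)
Your decomposition matches the paper's proof exactly: $O(n^k)$ DP cells (axis-parallel polygons with $\le k$ edges and integer coordinates need only about $k$ integer degrees of freedom) times $O(n^{3k/2})$ candidate subdivisions per cell, with the subdivision count coming from double-counting the edges of $P_1,P_2,P_3$ and noting that the interior cut segments form two axis-parallel chains so each contributes roughly one new coordinate. One small slip worth noting: the inequality $\sum_i |\partial P_i|\le 3k$ with interior segments counted twice gives at most $3k/2$ interior segments (as the paper states), not ``at most $k$'' --- your bound of ``at most $k$'' boundary pieces is an upper bound on the wrong quantity and does not tighten the interior count --- but since the final tally you use is $3k/2$ integer degrees of freedom per subdivision, the bound $O(n^{3k/2})$ per cell and $O(n^{5k/2})$ overall still comes out right, just as in the paper.
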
 

In order to analyze the DP, we introduce the concept of $k$\emph{-recursive
partitions. }Intuitively, the solution computed by the DP corresponds
to a recursive partition of $S$ into polygons in $\P$, in which
each arising polygon $P$ is further subdivided into at most three
polygons $P_{1},P_{2}$ and $P_{3}$, or instead we select \aw{at most} one rectangle
$R\subseteq P$ and do not partition $P$ further. This can be modeled
as a tree as given in the following definition. 
\begin{defn}
A \emph{$k$-recursive partition for a set }$\R'\subseteq\R$ consists
of a rooted tree with vertices $V$ such that 
\begin{itemize}[noitemsep]
\item for each node $v\in V$ there is a corresponding polygon $P_{v}\in\P(k)$, 
\item for the root node $r\in V$ \aw{it holds that} $P_{r}=S$, 
\item each internal node $v$ has at most {\em three} children $C_{v}\subseteq V$
such that $P_{v}=\dot{\cup}_{v'\in C_{v}}P_{v'}$, 
\item for each leaf $v\in V$, $P_{v}$ contains at most one rectangle in
$\R'$. 
\item for each rectangle $R'\in\R'$, there exists a leaf $v\in V$ such
that $R'\subseteq P_{v}$ and $R'\cap P_{v'}=\emptyset$ for each
leaf $v'$ with $v'\ne v$. 
\end{itemize}
\end{defn}

Note that the rectangles in $\R'$ are
pairwise disjoint due to the last property. Also, for each internal
node $v$, the polygons of its children are disjoint.

\begin{lemma} \label{lemma:k-recursive-sufficient}Given an input set
of rectangles $\R$, if there exists a $k$-recursive partition for
a set $\R'\subseteq\R$, then on input $\R$ our dynamic program computes
a solution $\tilde{\R}\subseteq\R$ with $|\tilde{\R}|\ge|\R'|$.
\end{lemma}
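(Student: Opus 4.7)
The plan is a bottom-up induction on the tree $T$ of the given $k$-recursive partition, proving the stronger statement that for every node $v \in V$, $|\DP(P_v)| \ge |\R'_v|$, where $\R'_v := \{R \in \R' : R \subseteq P_v\}$. Instantiated at the root $r$ (for which $P_r = S$ and $\R'_r = \R'$), this directly gives the desired bound $|\tilde{\R}| = |\DP(S)| \ge |\R'|$.

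For the inductive step, fix an internal node $v$ with children $C_v$. By the definition of a $k$-recursive partition, $|C_v| \le 3$, each child polygon $P_{v'}$ lies in $\P(k)$, and the polygons $\{P_{v'}\}_{v' \in C_v}$ partition $P_v$. Hence this decomposition is one of the subdivisions explicitly enumerated by the DP at cell $P_v$. Since the children's polygons are pairwise disjoint, each rectangle of $\R'_v$ lies in exactly one child, so $\sum_{v' \in C_v} |\R'_{v'}| = |\R'_v|$. The induction hypothesis gives $|\DP(P_{v'})| \ge |\R'_{v'}|$ for each child, so the candidate solution obtained from this subdivision has size at least $|\R'_v|$, and the DP selects the maximum candidate.

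The base case handles a leaf $v$ with $|\R'_v| \le 1$. If $|\R'_v| = 0$, the bound is trivial. If $|\R'_v| = 1$, let $R$ be the unique $\R'$-rectangle inside $P_v$; we must show $|\DP(P_v)| \ge 1$. If $P_v$ contains only $R$ from $\R$, the DP directly sets $\DP(P_v) = \{R\}$. Otherwise, we use a secondary induction on $|\R \cap P_v|$: since $R \subseteq P_v$, one can cut $P_v$ along an axis-parallel line extending an edge of $R$, producing at most three polygons in $\P(k)$, one of which contains $R$ together with a strictly smaller set of $\R$-rectangles (any $\R$-rectangle entirely on the other side of, or straddling, the cut is no longer a subset of this piece). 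The DP enumerates this subdivision, so by the secondary induction the corresponding candidate solution already has size at least $1$.

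The main obstacle is precisely this secondary induction at leaves that contain more than one rectangle of $\R$: one must check that the cuts along $R$'s bounding lines can always be performed while staying inside $\P(k)$, i.e., without blowing the complexity of the intermediate sub-polygons above $k$ edges. This is ensured provided $k$ is chosen with sufficient slack over the edge complexity used by the recursive partition itself, which will be the case in all later invocations of the lemma. Everything else reduces to the basic fact that the DP exhaustively tries all subdivisions into at most three polygons of $\P(k)$, so it can always mimic, and if necessary refine, the given $k$-recursive partition.
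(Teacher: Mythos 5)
Your overall inductive structure---a bottom-up induction proving $|\DP(P_v)|\ge|\R'_v|$ at each node---mirrors the paper's argument, and the internal-node step is essentially the same. The genuine gap is in the leaf case. Your secondary induction claims that at a leaf $v$ with $|\R\cap P_v|>1$ one can always cut along a line through an edge of the target rectangle $R$ so that the piece containing $R$ has strictly fewer $\R$-rectangles. This fails whenever some $R''\in\R$ satisfies $R''\subsetneq R$: every cut along a boundary line of $R$ leaves $R''$ on the same side as $R$, the count in that piece does not decrease, and your induction stalls. The paper anticipates exactly this and first reduces, w.l.o.g., to the case that $\R$ contains no nested pair $R_p\subseteq R_q$ (any feasible solution using $R_q$ can swap it for $R_p$, so dropping $R_q$ from $\R$ changes neither $\OPT$ nor the existence of a $k$-recursive partition). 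You either need that preprocessing step, or you need to allow cuts along the edges of an innermost rectangle rather than $R$ itself.

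A second, acknowledged, weakness is the edge complexity: a line cut through a non-convex $P_v$ can yield more than three pieces and can push some piece above $k$ edges. Your resolution---assume $k$ was chosen with ``sufficient slack'' over the complexity used by the recursive partition---is not available to you: the lemma is a clean conditional statement for a fixed $k$, and the DP is run with that same $k$, so the statement grants no extra room. The paper is itself informal on this point (it merely asserts that the circumventing polygons stay in $\P$), but flagging the obligation and then converting it into an assumption the statement never gives you is not a proof; you would need to argue directly that the refining cuts can be carried out inside $\P(k)$ for the polygons that actually arise.
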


In the following section, we will prove the following lemma. 
\begin{lemma}\label{lemma:recursive-partition}
For an arbitrary input set of rectangles $\R$, there exists a $26$-recursive
partition for some set $\R'\subseteq\R$ with $|\R'|\ge\frac{1}{6}|\OPT|$.
\end{lemma}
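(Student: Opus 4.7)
The plan is to build the $26$-recursive partition top-down, starting from $S$, by repeatedly cutting the current polygon with a single axis-parallel line segment into at most three sub-polygons. Before starting, I would classify every rectangle of $\OPT$ into one of three types: horizontally nested, vertically nested, or neither. By the $x\leftrightarrow y$ symmetry of MISR, I may assume at most $|\OPT|/2$ rectangles are horizontally nested. Let $\OPT'\subseteq \OPT$ be the rest, so $|\OPT'|\ge|\OPT|/2$. I will aim to keep at least $|\OPT'|/3\ge|\OPT|/6$ of these in $\R'$; the rectangles lost will all be those crossed by some cut segment in the recursion.

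To make the recursion well-defined, I would maintain the invariant that every polygon produced is either \emph{horizontally convex} or \emph{vertically convex} and has at most $26$ axis-parallel edges (so it lies in $\P(26)$). The initial polygon $S$ trivially satisfies this. For a polygon $P$ in the class that contains at least two rectangles of $\OPT'$, the core structural claim I would prove is the existence of an axis-parallel segment $\ell\subseteq P$ such that: (a) $\ell$ splits $P$ into at most three sub-polygons, each still horizontally or vertically convex with at most $26$ edges; and (b) for every rectangle $R\in\OPT'$ crossed by $\ell$ one can designate two distinct ``witnesses'' in $\OPT'\cap P$ that are separated from $\ell$ by earlier boundary edges (\emph{fences}) and hence cannot be crossed by the continuation of the recursion. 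I would distinguish only two cases in the construction of $\ell$, determined by whether a cut ``above'' or ``beside'' a chosen \emph{extremal} rectangle of $\OPT'\cap P$ is feasible; the horizontal/vertical convexity of $P$ is what keeps the sub-polygon edge count under $26$ in both cases.

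The analysis is a fractional charging argument. For each lost rectangle $R$ I send charge $\tfrac12$ to each of its two witnesses; property (b) ensures the witnesses belong to $\R'$. Exploiting the geometry of the cut — in particular that the two witnesses of $R$ lie on opposite sides of $\ell$, or are protected by distinct fences — I would show that each surviving rectangle $R^{*}\in\R'$ accumulates total charge at most $2$ across the entire recursion. This gives $L\le 2|\R'|$ for the number $L$ of lost rectangles in $\OPT'$, and hence $|\R'|\ge|\OPT'|/3\ge|\OPT|/6$, as required. Finally, I would wrap the recursion into the formal tree object of Definition~2 to certify it is a $26$-recursive partition.

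The main obstacle I expect is the structural claim in paragraph two: proving, uniformly across all polygons in the class, that a single segment $\ell$ exists which simultaneously (i) keeps all sub-polygons horizontally or vertically convex with the edge budget intact, (ii) provides two witnesses for every rectangle it crosses, and (iii) is compatible with the fences accumulated from previous cuts so that no surviving rectangle is charged more than twice in total. Everything else — the symmetry reduction, the inductive verification that $\R'$ is disjoint, and the conversion of the charging bound into the $|\OPT|/6$ guarantee — is bookkeeping once this lemma is in hand.
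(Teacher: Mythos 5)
Your high-level plan matches the paper's: lose a factor of $2$ by assuming at most half of $\OPT$ is horizontally nested, build a recursion in which each step cuts the current polygon so that only a designated piece of the cut may hit rectangles, charge each lost rectangle $\tfrac12$ to a witness on its left and $\tfrac12$ to a witness on its right, and show that every survivor receives charge at most $2$, giving $|\R'|\ge|\OPT|/6$. However, two concrete ingredients you omit are not bookkeeping --- they are load-bearing, and their absence would make the structural claim in your second paragraph false as stated.

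First, the paper does not run the charging argument on $\OPT$ itself. It first replaces each $R\in\OPT$ by a maximally enlarged rectangle (Lemma~\ref{lemma:OPT-prime-ok}), and only then defines nesting and the ``seeing'' relation. Maximality is what guarantees that a non-horizontally-nested rectangle intersected by the cut actually \emph{sees} a corner of another rectangle to its left and to its right (Lemma~\ref{lemma:charging_process}); its proof hinges on the fact that if a gap existed next to $R$, some maximal rectangle would have filled it. Without this preprocessing, the witnesses you plan to charge need not exist at all.

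Second, the cut is not a single axis-parallel segment. A single segment with both endpoints on $\partial P$ separates a simple polygon into exactly two pieces, so your claim that $\ell$ yields up to three sub-polygons cannot hold. More importantly, a single segment cannot avoid protected rectangles. The paper's cut $C$ is a tree of up to eight segments (Lemma~\ref{lemma:partitionningLemma6}): it starts on a middle-third left edge, travels along a fence $f$, drops along one vertical segment $\ell$, and then follows further fences and rectangle boundaries to exit $P$. Only the vertical piece $\ell$ is permitted to intersect rectangles; all the other segments are fences or rectangle edges, which are disjoint from the interiors of rectangles by construction. The three connected components arise precisely because $C$ branches at the fence endpoint $p'$. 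The ``cut above / cut beside an extremal rectangle'' dichotomy you propose does not match this mechanism and, in particular, does not explain why $\ell$ avoids previously protected rectangles --- that requires the ray from $p'$ to stop the moment it hits a fence interior or a protected top edge, which is a global property of the fence family, not a local property of one extremal rectangle. Finally, a minor but real issue: once you fix the assumption that at most half of $\OPT'$ is horizontally nested, every polygon in the recursion must be \emph{horizontally} convex; allowing a mix of horizontally and vertically convex polygons, as you suggest, would make the fences (which are horizontal segments emanating from vertical edges) inconsistent across levels of the recursion.
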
 This yields the following theorem. 
In Appendix~\ref{sec:3apx} we will improve the approximation ratio to 3, and hence prove Theorem~\ref{theorem:3apx}.

\begin{theorem}\label{theorem:6apx} There
is a polynomial-time 6-approximation algorithm for \aw{the Maximum Independent Set of
Rectangles problem.}
\end{theorem}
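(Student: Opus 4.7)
The plan is simply to combine the three lemmas that immediately precede the statement, since they were clearly set up to yield this theorem. I would instantiate the dynamic program with parameter $k=26$, invoke Lemma~\ref{lemma:recursive-partition} to produce a $26$-recursive partition for a subset $\R'\subseteq\R$ with $|\R'|\ge |\OPT|/6$, and then apply Lemma~\ref{lemma:k-recursive-sufficient} to conclude that the DP returns a solution $\tilde{\R}$ of size at least $|\R'|\ge|\OPT|/6$.

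More concretely, first I would fix $k:=26$. By Lemma~\ref{lemma:DP-running-time}, the dynamic program runs in time $O(n^{5k/2})=O(n^{65})$, which is polynomial in $n$. Since the output $\DP(S)$ is, by construction, a union of singletons $\{R\}$ attached to leaf polygons that contain only a single rectangle from $\R$, and since at every internal node the children polygons are disjoint, the set $\tilde{\R}:=\DP(S)$ is a feasible independent set of rectangles.

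Next I would verify the approximation ratio. Lemma~\ref{lemma:recursive-partition} guarantees the existence of a $26$-recursive partition for some $\R'\subseteq\R$ with $|\R'|\ge \tfrac{1}{6}|\OPT|$. Applying Lemma~\ref{lemma:k-recursive-sufficient} with this $\R'$ yields $|\tilde{\R}|\ge |\R'|\ge \tfrac{1}{6}|\OPT|$. Hence $\tilde{\R}$ is a $6$-approximation, computed in polynomial time, proving the theorem.

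There is really no main obstacle: the entire conceptual content has been pushed into Lemma~\ref{lemma:recursive-partition} (the existence of a good recursive partition), while Lemmas~\ref{lemma:DP-running-time} and~\ref{lemma:k-recursive-sufficient} take care of algorithmic efficiency and of the fact that the DP does at least as well as any $k$-recursive partition. The proof of Theorem~\ref{theorem:6apx} itself is therefore a one-paragraph synthesis; any difficulty is deferred to the structural argument of Lemma~\ref{lemma:recursive-partition}, which is the substantive content of the following section.
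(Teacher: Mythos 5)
Your proof is correct and is exactly the synthesis the paper intends: the paper itself states Theorem~\ref{theorem:6apx} immediately after Lemma~\ref{lemma:recursive-partition} with only the remark ``This yields the following theorem,'' relying on the same combination of Lemmas~\ref{lemma:DP-running-time}, \ref{lemma:k-recursive-sufficient}, and~\ref{lemma:recursive-partition} with $k=26$ that you spell out. Nothing is missing.
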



\section{Recursive cutting sequence}

In this section our goal is to prove Lemma~\ref{lemma:recursive-partition}. 
Consider an optimal solution $\OPT$. 
\mm{
We will construct a $26$-recursive partition for a set $\R'\subseteq \OPT$, such that $|\R'|\ge |\OPT|/6$, and such that the axis-parallel polygons considered in this recursive partition are all horizontally convex (or all vertically convex). 
}
\begin{defn}
A polygon $P$ is \emph{horizontally (resp. vertically) convex} if,
for any two points $x,y\in P$ lying on the same horizontal (resp.
vertical) line $\ell$, the line segment connecting $x$ and $y$
is contained in $P$. 
\end{defn}
\mm{
Note that a rectangle $P$ is both horizontally and vertically convex,
and this holds in particular for $S$. }
Like Mitchell~\cite{mitchell2021approximating},
we first extend each rectangle $R\in\OPT$ in order to make it maximally
large in each dimension. Formally, we consider the rectangles in $\OPT$
in an arbitrary order. For each $R\in\OPT$, we replace $R$ by a
(possibly) larger rectangle $R'$ such that $R\subseteq R'\subseteq S$,
and if we enlarged $R'$ further by changing any one of its four coordinates,
then we would intersect some other rectangle in $\OPT$ or it would no
longer be true that $R'\subseteq S$. Denote by $\OPT'$ the resulting
solution. \begin{lemma} \label{lemma:OPT-prime-ok}For every $k\in\N$,
if there is a $k$-recursive partition for a set $\R'\subseteq\OPT'$,
then there is also a $k$-recursive partition for a set $\tilde{\R}\subseteq\OPT$
with $|\R'|=|\tilde{\R}|$. \end{lemma}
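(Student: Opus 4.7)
\medskip

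The plan is to exploit the natural bijection between $\OPT$ and $\OPT'$ induced by the enlargement procedure. Specifically, let $\phi:\OPT\to\OPT'$ be the map sending each original rectangle $R\in\OPT$ to the (unique) enlarged rectangle $\phi(R)\in\OPT'$ that replaced it. By construction $R\subseteq\phi(R)$ for every $R\in\OPT$, and $\phi$ is a bijection, so in particular for any $\R'\subseteq\OPT'$ the preimage $\tilde\R:=\phi^{-1}(\R')\subseteq\OPT$ satisfies $|\tilde\R|=|\R'|$.

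Given the hypothesized $k$-recursive partition $T$ for $\R'\subseteq\OPT'$, with associated polygons $\{P_v\}_{v\in V}$, my proposal is simply to use the \emph{same} tree $T$ with the same polygons as a $k$-recursive partition for $\tilde\R$. The first three bullets of the definition of a $k$-recursive partition refer only to the polygons $P_v$, their membership in $\P(k)$, the root being $S$, and the children forming a partition of each internal $P_v$; none of these depend on which subset $\R'$ or $\tilde\R$ one is partitioning, so they hold for $\tilde\R$ automatically.

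For the last two bullets I would use the containment $R\subseteq\phi(R)$. For the fifth bullet, take $R\in\tilde\R$; by assumption there is a unique leaf $v$ with $\phi(R)\subseteq P_v$ and $\phi(R)\cap P_{v'}=\emptyset$ for every other leaf $v'$. Since $R\subseteq\phi(R)$, we immediately get $R\subseteq P_v$ and $R\cap P_{v'}\subseteq \phi(R)\cap P_{v'}=\emptyset$ for $v'\ne v$. The fourth bullet (at most one rectangle of $\tilde\R$ in each leaf polygon) is a direct consequence: if two distinct $R_1,R_2\in\tilde\R$ were both contained in the same leaf polygon $P_v$, then by the fifth bullet (applied to the original partition) $\phi(R_1)$ and $\phi(R_2)$ would both be associated to that same leaf $v$, contradicting that each leaf contains at most one rectangle of $\R'$.

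There is no real obstacle here; the statement is essentially a bookkeeping lemma asserting that the enlargement step is without loss of generality. The only point to be careful about is to verify that the uniqueness of the assigned leaf transfers from $\R'$ to $\tilde\R$, which is handled by the monotonicity $R\subseteq\phi(R)$ as above.
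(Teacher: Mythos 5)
Your proof is correct and follows essentially the same approach as the paper's: both define $\tilde\R$ as the set of original rectangles whose enlargements give $\R'$, and both observe that the very same tree $T$ with the same polygons works because each $R\in\tilde\R$ is contained in its enlargement $\phi(R)\in\R'$. The paper states this in a single sentence, whereas you explicitly verify each bullet of the definition, but the underlying argument is identical.
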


Our goal is now to prove that there always exists a $k$-recursive
partition for a subset $\R'\subseteq\OPT'$ \aw{with $|\R'|\ge |\OPT|/6$.} 
As in \cite{mitchell2021approximating}, we define \emph{nesting
relationships} for the rectangles in $\OPT'$ (see Figure~\ref{fig:nested_rectangles}).
Consider a rectangle $R\in\OPT'$. Note that each of its four edges
must intersect the edge of some other rectangle $R'\in\OPT'$ or some
edge of $S$. We say that $R$ is \emph{vertically nested }if its
top edge or its bottom edge is contained in the interior of an edge
of some other rectangle $R'\in\OPT'$ or in the interior of an edge
of $S$. Similarly, we say that $R$ is \emph{horizontally nested
}if its left edge or its right edge is contained in the interior of
an edge of some other rectangle $R'\in\OPT'$ or in the interior of
an edge of $S$.

\begin{figure}
    \begin{minipage}[t]{0.49\textwidth}
\centering \includegraphics[page=15, height=4.5cm]{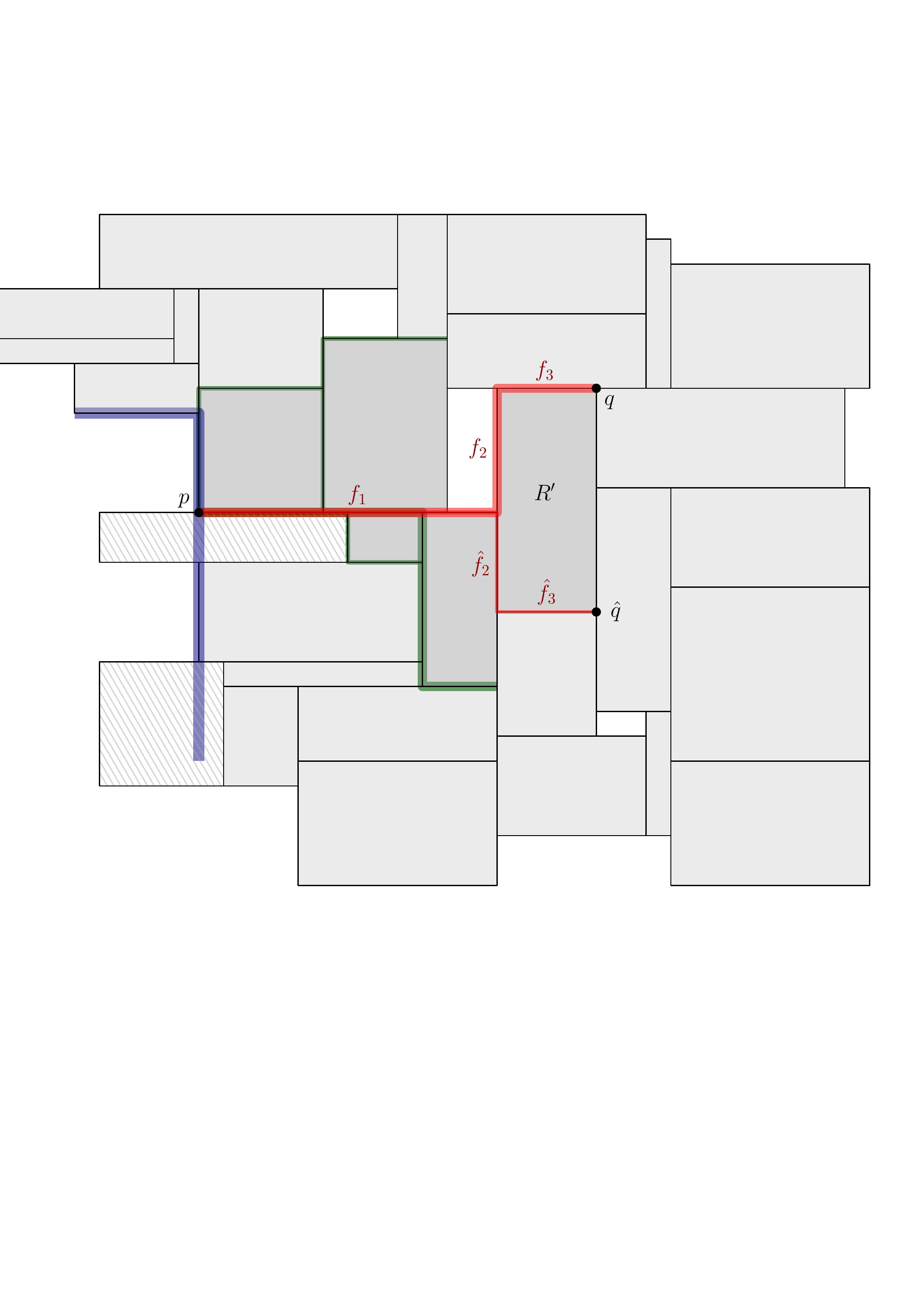}
\caption{Example of maximal rectangles $\protect\OPT'$. Red rectangles are
horizontally nested. Blue ones are vertically nested. Gray rectangles
are neither vertically nested nor horizontally. White areas are not
covered by any rectangle in $\protect\OPT'$. }
\label{fig:nested_rectangles} 
\end{minipage} \hfill
\begin{minipage}[t]{0.49\textwidth}
\centering \includegraphics[page=3, trim=100 110 185 140, clip, height=4.5cm]{figures/figures.pdf}
\caption{Some line fences (red) emerging from the boundary of $P$ (blue).
Shaded rectangles are the rectangles in $\protect\OPT'$ that intersect
the boundary of $P$. 
White spaces indicate points that are not covered
by rectangles of $\protect\OPT'$ contained in $P$. 
Darker gray rectangles
correspond to rectangles in $\protect\OPT'(P)$ that are protected
by some line fence. }
\label{fig:line_fence} 
\end{minipage}
\end{figure}

\begin{prop}[\cite{mitchell2021approximating}]
\label{prop:not-both-nested}A rectangle $R\in\OPT'$ cannot be both vertically
\emph{and }horizontally nested; however, it is possible for $R$ to
be neither vertically nor horizontally nested. 
\end{prop}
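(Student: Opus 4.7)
The plan is to prove the first (nontrivial) statement by contradiction, using the geometric meaning of ``being contained in the interior of an edge'' together with the fact that rectangles in $\OPT'$ are pairwise disjoint. The second statement is purely existential, so an explicit witness (the gray rectangles in Figure~\ref{fig:nested_rectangles}) suffices.

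For the first claim, I would assume $R\in\OPT'$ is simultaneously vertically and horizontally nested. By reflecting $R$ and its surroundings about horizontal and/or vertical axes through $S$, I may assume without loss of generality that the top edge of $R$ is contained in the interior of some edge $e_A$, and the right edge of $R$ is contained in the interior of some edge $e_B$. Since rectangles in $\OPT'$ are disjoint, $e_A$ must be either the bottom edge of some rectangle $A\in\OPT'$ or the top edge of $S$ (an analogous case check rules out $e_A$ being the top edge of another rectangle or the bottom edge of $S$); and similarly $e_B$ must be the left edge of some rectangle $B\in\OPT'$ or the right edge of $S$.

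The main case is $e_A = \text{bot}(A)$ and $e_B = \text{left}(B)$ for rectangles $A,B\in\OPT'$. The ``interior'' condition gives $A.x_l < R.x_l$, $A.x_r > R.x_r$, and $A.y_b = R.y_t$, so $A$ occupies an open rectangle that extends strictly to the right of $R.x_r$ and strictly upward from $R.y_t$. Symmetrically $B.y_b < R.y_b$, $B.y_t > R.y_t$, and $B.x_l = R.x_r$, so $B$ extends strictly to the right of $R.x_r$ and strictly upward from $R.y_t$. Picking any point $(x,y)$ with $R.x_r < x < \min(A.x_r, B.x_r)$ and $R.y_t < y < \min(A.y_t, B.y_t)$ produces a point in $A\cap B$, contradicting the disjointness of $\OPT'$.

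It remains to handle the two boundary subcases. If $e_A$ is the top edge of $S$, the ``interior'' condition forces $R.y_t = 2n-1$ together with $0 < R.x_l$ and $R.x_r < 2n-1$; but then $e_B$ cannot be the right edge of $S$ (which would demand $R.x_r = 2n-1$), and if $e_B$ is the left edge of a rectangle $B\in\OPT'$ then $B.y_t > R.y_t = 2n-1$ forces $B\not\subseteq S$, a contradiction. The case where $e_B$ is the right edge of $S$ is completely symmetric. This exhausts all configurations and proves the first claim. For the second claim, the gray rectangles in Figure~\ref{fig:nested_rectangles} provide explicit examples (their top and bottom edges touch neighbouring rectangles only at a single point or on a shared-endpoint subsegment, and likewise for left and right), witnessing that neither nesting condition need hold. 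The main obstacle is just being careful about the boundary cases involving $S$; the bulk of the content is the overlapping-quadrant argument of the main case.
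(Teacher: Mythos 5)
Your proof is correct and follows essentially the same route as the paper's: both identify a point in the quadrant adjacent to a corner of $R$ that must lie in the interiors of both enclosing rectangles, contradicting disjointness of $\OPT'$ (you exhibit an open region; the paper, exploiting integral coordinates, picks the specific half-integer point $(x-1/2,y+1/2)$ at the top-left corner). The one place you are a bit more thorough than the paper is in explicitly ruling out the boundary subcases where one of the containing edges is an edge of $S$ — the paper's ``w.l.o.g.\ both edges belong to rectangles in $\OPT'$'' silently folds these in, whereas you show they cannot coexist with the other nesting condition.
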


We assume w.l.o.g.~that at most half of the rectangles in $\OPT'$
are horizontally nested (which will lose a factor of 2 in our
approximation ratio). Assuming this, the polygons in our recursive
partition will all be horizontally convex. 
\aw{Intuitively, we want that $\R'$} contains at least \aw{a third of} the rectangles that are not
horizontally nested \aw{which yields a factor of 6.} However, it might also contain rectangles that
are horizontally nested \aw{and that will pay for rectangles in $\OPT' \setminus \R'$ that are not horizontally nested}.  \awr{changed in many places ``nested horizontally'' $\rightarrow$ ``horizontally nested''}

\subsection{\aw{Definition of recursive partition}\label{subsec:simple-recursion}}
In order to describe \aw{our}
recursive partition, we initialize the corresponding tree $T$
with a root $r$ \aw{for which $P_r=S$}. We define now a recursive procedure that takes as input a
so far unprocessed vertex $v$ of the tree, corresponding to some
polygon $P=P_{v}$. It either partitions $P$ further (hence adding
children to $v$) or assigns \aw{at most one} rectangle $R\in\OPT'$ to $v$ and
does not add children to $v$. We denote by $\OPT'(P)\subseteq\OPT'$
the subset of rectangles of $\OPT'$ that are contained in $P$. If
$\OPT'(P)=\emptyset$, then we \aw{do not process $v$ further.} \awr{it said before that we delete $v$, however, then I believe that property 3 of Def. 1 is no longer true for the parent of $v$}
If $|\OPT'(P)|=1$,
then we add the single rectangle in $\OPT'(P)$ to $\R'$, assign
it to $v$, and do not process $v$ further. Assume now that $|\OPT'(P)|>1$.
We classify the vertical edges of $P$ as \emph{left vertical edges
}and \emph{right vertical edges }(see Figure~\ref{fig:vertical_edges}
in the appendix). 
\begin{defn}
For a vertical edge $e$ of an \mm{axis-parallel} polygon $P$, we
say that $e$ is \emph{left vertical }if its interior contains a point
$p=(p_{x},p_{y})$ such that the point $(p_{x}+\frac{1}{2},p_{y})$ is in $P$, and \emph{right vertical}, otherwise. 
\end{defn}

For each point $p$ with integral coordinates on a left vertical edge
of $P$, we define a \emph{line fence emerging from $p$}, see Figure
\ref{fig:line_fence}. If there is a point $p'\in P$ such that (i)
$p'$ and $p$ have the same $y$-coordinate, (ii) the horizontal
line segment connecting $p$ and $p'$ intersects\footnote{\aw{Recall} that the rectangles are open sets. \aw{Thus,} when a line segment $\ell$ intersects
a rectangle $R$, \aw{this means that $\ell$ contains} some point of the interior of $R$.
In particular, a line segment that (completely) contains the edge of a rectangle
$R$ does not intersect $R$.
}\awr{How about omtting the third sentence in the footnote? Is this property used somewhere?} no rectangle of $\OPT'(P)$ and (iii) $p'$ is contained in the interior
of the left side of a rectangle $R\in\OPT'(P)$, or the top right
corner or the bottom right corner of a rectangle $R\in\OPT'(P)$,
then we create a line fence that consists of the horizontal line segment
from $p$ to $p'$. Notice that if $p$ is contained in the interior
of a left edge of a rectangle in $\OPT'(P)$, then $p'=p$ and the
fence emerging from $p$ consists only of a single point. We call
$p'$ the \emph{endpoint} of the fence emerging from~$p$. We define
line fences emerging at points of right vertical edges of $P$ in
a symmetric manner. Denote by $F(P)$ the set of all fences created
in this way.

When we partition $P$, we will cut $P$ along line segments such
that (i) no horizontal line segment intersects a rectangle in $\OPT'(P)$
and (ii) no interior of a vertical line segment 
\mm{intersects the interior of a} line fence in $F(P)$.
\aw{Intuitively,} the line fences protect some rectangles in
$\OPT'(P)$ from being intersected by line segments defined in future
iterations of the partition. This motivates the following definition. 
\begin{defn}
Given a horizontally convex polygon $P$, we say that a rectangle
$R\in\OPT'(P)$ is \emph{protected in $P$} if there exists a line
fence $f\in F(P)$ such that the top edge or the bottom edge of $R$
is contained in $f$. 
\end{defn}

We will ensure that a protected rectangle will not be intersected
when we cut $P$ by means of line fences in $F(P)$. We apply the following lemma to $P$.

\begin{restatable}[Line-partitioning Lemma]{lemma}{lemmapartitionsix}
\label{lemma:partitionningLemma6} Given a horizontally convex (resp.
vertically convex) polygon $P\in\mathcal{P}(26)$, such that $P$
contains at least two rectangles from $\OPT'$, there exists a set
$C$ of line segments with integral coordinates such that: 
\begin{enumerate}[noitemsep]
\item[(1)] $C$ is composed of at most $8$ horizontal or vertical line segments
that are all contained in $P$. 
\item[(2)] $P\setminus C$ has two or three connected components, and each of
them is a horizontally (resp. vertically) convex polygon in $\mathcal{P}(26)$. 
\item[(3)] There is a vertical (resp. horizontal) line segment \tm{$\ell \in C$
    such that $\ell$} intersects all the rectangles in $\OPT'(P)$ that are intersected
    by $C$.
\item[(4)] \tm{The line segment }$\ell$ does not intersect any rectangle that is protected in $P$. 
\end{enumerate}
\end{restatable}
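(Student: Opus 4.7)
The plan is to prove the lemma for horizontally convex $P$; the vertically convex case is completely symmetric. I will construct $C$ as a single vertical segment $\ell$ together with at most seven auxiliary line segments that lie entirely along fences of $F(P)$ or along horizontal boundary edges of $P$. Since fences do not intersect any rectangle in $\OPT'(P)$ (by clause (ii) of their definition), the auxiliary segments are automatically disjoint from the interiors of all rectangles in $\OPT'(P)$. This design immediately yields conditions~(3) and~(4): any rectangle of $\OPT'(P)$ intersected by $C$ is intersected by $\ell$, and $\ell$ itself will be chosen so as to avoid every protected rectangle. Integrality of coordinates comes for free, since all edges of $P$ and all fence endpoints have integral coordinates.

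The first step is to select a pivot rectangle $R^*\in\OPT'(P)$. A natural choice is a boundary-adjacent rectangle that is extremal in some direction; for instance, among the rectangles of $\OPT'(P)$ whose left edge lies on the left boundary of $P$, pick one whose right edge has the smallest $x$-coordinate. I set $x^*$ to be the $x$-coordinate of a suitable vertical edge of $R^*$ and let $\ell$ be the maximal vertical segment at $x=x^*$ inside $P$ that contains the corresponding edge of $R^*$. If this maximal segment already reaches both the top and bottom boundary of $P$ and does not cross any protected rectangle, then $C=\{\ell\}$ works and $P\setminus C$ consists of exactly two horizontally convex pieces. Otherwise, I truncate $\ell$ at a $y$-coordinate where an appropriate fence or horizontal boundary edge of $P$ is available, and complete the cut with horizontal auxiliary segments routed along those fences/edges out to the boundary of $P$. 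In the worst case, carving a pocket around a protected rectangle that blocks $\ell$ produces the third connected component of $P\setminus C$ permitted by the lemma.

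Finally, I would verify all four conditions. Horizontal convexity of each resulting piece follows because vertical cuts preserve horizontal convexity, and each horizontal auxiliary segment spans, at its $y$-coordinate, from a vertical boundary edge of $P$ to the interior endpoint of a fence, which cannot create a horizontal concavity in a resulting piece. The main obstacle will be a careful bookkeeping of edges and segments to match the stated budgets. The edge budget is delicate because each auxiliary horizontal segment can, in principle, introduce two new reflex vertices into the pieces it borders; handling this cleanly requires showing that by choosing $R^*$ judiciously and by always routing auxiliary segments along existing fences of $F(P)$ rather than along arbitrary horizontal lines, the number of new edges contributed to each piece is a small constant, so that starting from $|P|\le 26$ edges no piece exceeds $26$. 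The segment budget of $8$ and the requirement of $2$ or $3$ components should then fall out of a small case analysis of how $\ell$ must be truncated near protected rectangles. I expect this case analysis, which distinguishes whether $\ell$ is blocked above $R^*$, below $R^*$, or both, to be the main source of technical work.
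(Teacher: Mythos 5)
Your proposal misses the key structural idea that makes the edge‐count bound work, and as written it cannot yield condition~(2).

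The paper's construction anchors the cut at a fence $f$ emerging from a point on a vertical edge in the \emph{middle third} of the sorted left vertical edges $e_1,\dots,e_s$ (after assuming $|L(P)|\ge|R(P)|$). This choice is not cosmetic: it guarantees a balancing property, namely that each of the two or three pieces of $P\setminus C$ is disjoint from an entire third of the left vertical edges of $P$. So each piece \emph{loses} at least $\lfloor v/6\rfloor$ vertical edges of $P$ while gaining only $O(1)$ new ones from the cut, and the arithmetic $v-\lfloor v/6\rfloor+O(1)\le 13$ closes for $v\le 13$. Your plan instead anchors $\ell$ at an extremal pivot rectangle $R^*$ (e.g.\ one adjacent to the left boundary whose right edge is leftmost). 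Such a pivot has no reason to sit ``in the middle'' of the polygon's combinatorial structure, so one of the two resulting pieces can inherit essentially all of $P$'s edges and gain several more from the auxiliary segments, blowing past $26$. You acknowledge this is ``the main source of technical work,'' but the fix is not bookkeeping: without a balancing mechanism like the middle‐third selection the bound is simply false for your choice of pivot, and no ``judicious'' extremal $R^*$ repairs this in general (it is easy to draw a $26$-edge horizontally convex polygon where your pivot sits at one end). A second gap: you also need the case that no rectangle of $\OPT'(P)$ touches the left boundary at all, which your pivot selection does not cover, whereas the paper's construction starts from \emph{any} left vertical edge of $P$ and uses the fence structure.

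The third‐component scenario is also misdescribed. You say it arises from ``carving a pocket around a protected rectangle that blocks $\ell$,'' but in the paper the third piece appears when both the upward and downward rays from $p'$ terminate on fences anchored on \emph{left} vertical edges, so the cut $C_t\cup C_b$ forms a sideways ``U'' that isolates a right‐hand piece from two left‐hand pieces; no pocket is carved around a rectangle, and the protected rectangles are precisely the ones the rays stop \emph{before} crossing (via conditions (i)/(ii) in the ray‐stopping rule). Finally, even the easy degenerate case (when $\ell$ and $\overline{pp'}$ both lie on $\partial P$, forcing $\lfloor s/3\rfloor+1=1$ and hence $|P|\le 8$) relies on the middle‐third choice to conclude anything, which again your framing cannot access. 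In short, conditions~(1), (3), (4) are within reach of your sketch, but condition~(2) requires the balancing idea that is absent.
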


We introduce an (unprocessed) child vertex of $v$ corresponding to
each connected component of $P\setminus C$ which completes the processing
of $P$.

We apply the above procedure recursively to each unprocessed vertex
$v$ of the tree until there are no more unprocessed vertices. Let
$T$ denote the tree obtained at the end, and let $\R'$ denote the
set of all rectangles that we assigned to some leaf during the recursion.
One can easily see that if a rectangle $R\in\OPT'$ is protected in
some polygon $P_{v}\supseteq R$ corresponding to a node $v\in T$,
then $R$ will be protected in each polygon $P_{v'}\subseteq P_{v}$
where $v'$ is a descendant of $v$ in $T$. This implies that $R\in\R'$.%

\subsection{Analysis}

We want to prove that $|\R'|\ge|\OPT|/6$. Consider an internal
node $v$ of the tree and let $\ell_{v}$ be the corresponding line
segment $\ell$ due to Lemma~\ref{lemma:partitionningLemma6}, defined
above for partitioning $P_{v}$. We define a charging scheme for the
rectangles in $\OPT'$ that are intersected by $\ell_{v}$ and are
not horizontally nested. For any such rectangle $R$, we will identify
two rectangles $R_{L}$ and $R_{R}$ in $\OPT'(P_{v})$ such that
$R_{L}$ lies on the left of $R$ and $R_{R}$ lies on the right of
$R$, and assign a charge of $1/2$ to each of them, \aw{and} thus \aw{a} total
charge of $1$. More precisely, we will assign each of these charges
to some corner of $R_{L}$ and $R_{R}$, respectively, and ensure
that in the overall process each corner of each rectangle is charged
at most once. Thus, each rectangle receives a total charge of \ari{at most} $2$.
Furthermore, if a rectangle receives a charge (to one of its corners),
then it will be protected by the fences for the rest of the partitioning
process.

One key difference to the algorithm of Mitchell \cite{mitchell2021approximating}
is that, in our algorithm, each application of Lemma~\ref{lemma:partitionningLemma6}
yields only \emph{one} line segment $\ell$ that might intersect rectangles
from $\OPT'$. In the respective routine in \cite{mitchell2021approximating}
there can be two such line segments, and a consequence is that for
each intersected rectangle $R\in\OPT'$ there might be only one other
rectangle from $\OPT'$ to charge, rather than two. Furthermore, our
proof of Lemma~\ref{lemma:partitionningLemma6} is arguably simpler.

\paragraph{The charging scheme.}

We now explain how to distribute the charge from rectangles that are
intersected by $\ell_{v}$ (and are thus not in $\R'$). 
\begin{defn}
	We say that a rectangle $R\in\OPT'$ \emph{sees the top-left corner $c$} of
	a rectangle $R'\in\OPT'$ \emph{on its right} if there is a horizontal line segment $h$ that connects a point $p$ on the right edge of $R$  with $c$, such that $h$ does not intersect any rectangle in $\OPT'$, 
	\aw{$p$ is not the bottom-right corner of $R$},
	\awr{this was written inside parenthesis before. IMO parts of the definition should not be inside parenthesis.}
 	and $h$ does not contain \aw{the top} edge of \aw{any} other rectangle in $\aw{R''\in} \OPT'$.
	
	\label{def:seeing} 
\end{defn}

\begin{figure}
\centering \includegraphics[page=11, trim=170 170 165 60, clip, width=0.4\textwidth]{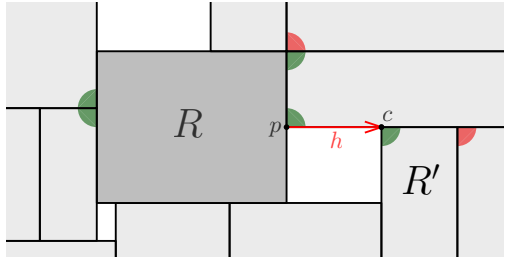}
\caption{Green marks indicate all the corners of the rectangles in $\protect\OPT'$
that are seen by the rectangle $R$. Corners displayed in red are examples of corners not seen by $R$.  The one on the top is a bottom-left corner and here $p$ would be the top-right corner of $R$, which is excluded. The red corner on the right is not seen by $R$ because here $h$ would contain the top edge of $R'$. }
\label{fig:def_see}
\end{figure}
\awr{Can we add in the figure some examples for corners that $R$ does NOT see (e.g., in red)?}

\aw{The last two conditions in the definition ensure that $R'$ is not completely below $R$ (i.e., below the line that contains the bottom edge of $R$), and that $c'$ is not ``behind'' $R''$.}
We define the bottom-left corner \emph{seen by} $R\in\OPT'$ \emph{on its right} as well as the corners \emph{seen by} $R\in\OPT'$ \emph{on its left, top } and \emph{ bottom}
in a symmetric manner. See Figure \ref{fig:def_see}.

It is easy to see that if a rectangle $R$ is horizontally nested,
then there is at least one side (left \aw{or} right) on which $R$ does
not see any corner. On the other hand, if $R$ is not horizontally
nested, then on its left it sees at least one corner of a rectangle
in $\OPT'(P)$, and similarly on its right. Intuitively, we will later
charge $R$ to these rectangles in $\OPT'(P)$. 

\begin{lemma} \label{lemma:charging_process}Let $P$ be an axis-parallel polygon, and let $R$ be a rectangle in $\OPT'(P)$ that is
not protected in $P$ and not horizontally nested. Then, $R$ sees
at least one corner of a rectangle in $\OPT'(P)$ on its left, and
at least one corner of another rectangle in $\OPT'(P)$ on its right.
\end{lemma}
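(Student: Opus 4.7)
Proof plan. The plan is to establish the right-side assertion (the left is entirely symmetric). Write $R = (x_l, x_r) \times (y_b, y_t)$, and I would seek a rectangle $R^{\star} \in \OPT'(P)$ having its top-left or bottom-left corner on $R$'s right edge; once such an $R^{\star}$ is found, a degenerate horizontal segment immediately witnesses ``seeing''.

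\medskip

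First, I would use the maximality of $R \in \OPT'$ together with the not-horizontally-nested hypothesis to produce the corner event. Maximality forces some $R^{\star} \in \OPT'$ to exist with left edge at $x = x_r$ and $y$-range meeting $(y_b, y_t)$, while the not-nested assumption forbids this $R^{\star}$'s $y$-range from strictly containing $[y_b, y_t]$. Therefore at least one such blocker has either a top-left corner at $(x_r, y^{\star})$ with $y^{\star} \in (y_b, y_t]$ or a bottom-left corner with $y^{\star} \in [y_b, y_t)$; among top-left events I would select the one with the maximal $y^{\star}$. Write the corner as $c = (x_r, y^{\star})$. The degenerate case in which $R$'s right edge lies on $\partial S$ forces $R$ to span the full height of $S$, which reduces to $|\OPT'| \le 1$ and leaves the statement vacuous.

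\medskip

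The crucial step is then to argue $R^{\star} \in \OPT'(P)$ using the not-protected hypothesis, by contradiction. Suppose $R^{\star} \not\subseteq P$. Since horizontal edges of polygons produced by the recursion never intersect rectangles of $\OPT'$, the piece of $\partial P$ separating the escaping part of $R^{\star}$ from $P$'s interior must be a right vertical edge of $P$ at some integer $x$-coordinate $x_P \in [x_r, x_r^{\star}]$. By the maximal choice of $y^{\star}$ one can show that this separating edge must extend upward at least to height $y_t$: nothing strictly between $y^{\star}$ and $y_t$ on $R$'s right edge corresponds to a top-left corner event, so either $R^{\star}$ itself or a blocker above it forces the separating vertical edge of $\partial P$ to continue up to $y_t$. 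Consequently $(x_P, y_t) \in \partial P$. From this integer point I would instantiate a line fence running leftward along $y = y_t$: no rectangle of $\OPT'$ can have interior points on the horizontal line $y = y_t$ within the $x$-range $(x_l, x_r)$ without overlapping $R$, so the fence proceeds unimpeded until it reaches the top-left corner $(x_l, y_t)$ of $R$, a valid fence endpoint. The resulting fence contains all of $R$'s top edge, so $R$ is protected in $P$, contradicting hypothesis. The bottom-left corner case is symmetric, yielding a fence along $y = y_b$ that covers $R$'s bottom edge.

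\medskip

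Finally, with $R^{\star} \in \OPT'(P)$ in hand, I would verify the three conditions of Definition~\ref{def:seeing} at $p = c = (x_r, y^{\star})$ with $h = \{c\}$: the degenerate segment lies only on rectangle boundaries, hence meets no rectangle's interior; $p$ is not the excluded corner of $R$ since $y^{\star} > y_b$ in the top-left case and $y^{\star} < y_t$ in the bottom-left case; and a single point contains no horizontal edge. The left-side claim follows by the symmetric argument, and the two seen rectangles are distinct since they lie on opposite sides of $R$.

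\medskip

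The main obstacle is the fence construction: one must verify that the topmost (resp. bottommost) choice of corner event forces the separating right vertical edge of $\partial P$ to reach $y = y_t$ (resp. $y = y_b$), and that the horizontal fence emanating from this edge cannot be blocked by any rectangle of $\OPT'$ before it reaches $R$'s corner. Both facts rely delicately on the maximality of rectangles in $\OPT'$ and their pairwise disjointness.
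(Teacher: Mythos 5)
Your approach is genuinely different from the paper's: you reason about maximality directly on the abutting blockers at $x = x_r$ and pick a canonical one (topmost top-left corner event), whereas the paper shoots a horizontal ray $h$ from $R$'s top-right corner until it exits $P$, then lets $R'$ be the first rectangle of $\OPT'(P)$ whose interior or top edge is met by $h$, and derives the seen corner from the fact that $R'$ is the \emph{first} such rectangle. This difference matters, because your choice of $R^\star$ need not belong to $\OPT'(P)$ even when the lemma's hypotheses hold, and the way you try to repair this is where the gap sits.

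Concretely, the claim ``By the maximal choice of $y^\star$ one can show that this separating edge must extend upward at least to height $y_t$'' is not justified and does not hold in general. Suppose $R = (0,5)\times(0,10)$, $R^\star = (5,10)\times(0,5)$, and $R^{\star\star} = (5,10)\times(5,12)$. The only top-left event on $R$'s right edge is at $c=(5,5)$, so $y^\star = 5 < y_t = 10$. Now let $\partial P$ have a right vertical edge on $x=5$ from $y=0$ to $y=5$ that excludes $R^\star$, followed by a horizontal edge from $(5,5)$ to $(10,5)$ (which touches only boundaries of rectangles and hence respects the recursion invariant), so that $R^{\star\star}\subseteq P$. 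The separating vertical edge stops at $y=5$, not at $y_t=10$; the point $(5,10)$ lies in the \emph{interior} of $P$, so ``\;$(x_P, y_t)\in\partial P$\;'' fails and there is no fence along $y = y_t$ to anchor your contradiction. Yet $R$ need not be protected and need not be horizontally nested, so the lemma's hypotheses can hold. The conclusion is of course still true here --- $R$ sees the bottom-left corner of $R^{\star\star}\in\OPT'(P)$ --- but your argument is committed to producing $R^\star\in\OPT'(P)$ for that \emph{particular} $R^\star$, and it does not. What you would need instead is a fallback: when your distinguished blocker escapes $P$, re-select a different rectangle that is in $\OPT'(P)$. The paper's proof builds this in automatically by defining $R'$ as the closest $\OPT'(P)$-rectangle hit by the ray, and the disjointness/maximality argument inside the rectangular window $W$ is exactly what rules out a straddler getting in the way. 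A secondary, smaller gap is that you only describe picking a topmost \emph{top-left} event; if all blockers yield only bottom-left events you must symmetrize, and the fence you then construct runs along $y=y_b$, not $y=y_t$, which you gesture at but do not carry through.
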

For every node $v\in T$ and every rectangle $R\in\OPT'(P_{v})$ \aw{that is} not
horizontally nested \aw{and} that is intersected by $\ell_{v}$, we assign
a (fractional) charge of $1/2$ to a corner of a rectangle in $\OPT'(P_{v})$
that $R$ sees on its left, and a charge of $1/2$ to a corner of
a rectangle in $\OPT'(P_{v})$ that $R$ sees on its right.

We prove now that if some rectangle $R\in\OPT'$ is charged at some
point, then $R\in\R'$. The reason is that when $R$ is charged due
to a vertical line segment $\ell_{v}$, then in the subsequent subproblems
(i.e., corresponding to the children of $v$) $R$ will be protected.

\begin{lemma} If a rectangle $R'\in\OPT'$ receives a charge to at
least one of its corner, then $R'\in\R'$. \label{lemma:charged_implies_protected}
\end{lemma}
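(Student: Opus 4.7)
The plan is to show that if $R'$ receives a charge at some internal node $v\in T$ due to a rectangle $R\in\OPT'(P_v)$ that is intersected by $\ell_v$ and sees a corner $c$ of $R'$, then $R'$ is protected in the subpolygon $P_{v'}$ of $P_v$ that contains $R'$ after the cut. Once this is established, the conclusion follows from the observation (already noted right after Lemma~\ref{lemma:partitionningLemma6}) that fences pass to descendant polygons, combined with property~(4) of Lemma~\ref{lemma:partitionningLemma6}, which guarantees that cuts never intersect protected rectangles. Thus $R'$ survives the entire recursion intact and eventually is the unique rectangle of some leaf polygon, i.e., $R'\in\R'$.

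By the symmetry of the four possible charging configurations (top-left or bottom-left corner of $R'$ seen on the right of some $R$, or top-right or bottom-right corner seen on the left), I may assume that $R$ sees the top-left corner $c$ of $R'$ on its right; let $h$ be the horizontal segment from a point on the right edge of $R$ to $c$ guaranteed by Definition~\ref{def:seeing}, and let $c_y$ denote the common $y$-coordinate of $c$ and of the top edge of $R'$. I first argue that $R'$ is not intersected by $\ell_v$: since $h$ is horizontal at height $c_y$ and crosses no rectangle of $\OPT'$, disjointness of $R$ and $R'$ forces $h$ to run rightward, and in particular the right $x$-coordinate of $R$ is at most the left $x$-coordinate of $R'$. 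Because $\ell_v$ has its $x$-coordinate in the $x$-range of $R$, $\ell_v$ cannot enter the open interior of $R'$, so $R'$ sits entirely in the subpolygon $P_{v'}$ on the right of $\ell_v$, and $\ell_v$ contributes a left vertical edge to $P_{v'}$.

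The heart of the argument is to exhibit an explicit fence in $F(P_{v'})$ whose segment contains the top edge of $R'$. Let $q$ be the point of $\ell_v$ with $y$-coordinate $c_y$; since $\ell_v$ passes vertically through $R$ and $c_y$ lies in the $y$-range of $R$ strictly above its bottom edge, $q$ is a well-defined integer point of $\ell_v$, and hence of a left vertical edge of $P_{v'}$. Tracing the horizontal ray from $q$ rightward at height $c_y$, one checks it is free of $\OPT'(P_{v'})$-rectangles: the portion inside $R$ is harmless because $R\notin\OPT'(P_{v'})$, the portion from the right edge of $R$ to $c$ coincides with $h$, and the portion past $c$ lies along the top edge of $R'$, avoiding the open interior of $R'$ and, by pairwise disjointness, any other rectangle of $\OPT'(P_{v'})$. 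Since the top-right corner of $R'$ is an allowed endpoint in condition~(iii) of the fence definition, there is a fence in $F(P_{v'})$ starting at $q$ and containing the whole top edge of $R'$, so $R'$ is protected in $P_{v'}$.

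The main subtlety I anticipate is ruling out, in the last step, any unnoticed rectangle of $\OPT'(P_{v'})$ that could block the ray before it reaches the top-right corner of $R'$: obstructions strictly to the left of $R'$ are excluded by the ``sees'' hypothesis on $h$, while any rectangle meeting the horizontal line $y=c_y$ to the right of $R'$ must, by disjointness from $R'$, start beyond that corner. The other three charging configurations are handled by the symmetric construction using right vertical edges and fences of $P_{v'}$.
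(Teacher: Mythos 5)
Your proposal is correct and follows essentially the same approach as the paper: reduce to one symmetric case, observe $R\notin\OPT'(P_{v'})$, use the segment $h$ from Definition~\ref{def:seeing} to show the top edge of $R'$ lies on a fence of $F(P_{v'})$, and invoke property~(4) of Lemma~\ref{lemma:partitionningLemma6} together with the persistence of fences to descendants. The only cosmetic difference is that you anchor the fence at a specific point $q$ on $\ell_v$ (which implicitly relies on $\ell_v$ spanning the full height of $R$, since neither of its endpoints can lie in $R$'s interior), whereas the paper more simply takes $p$ to be wherever the leftward ray from $c$ first meets a left vertical edge of $P_{v'}$.
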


In the next lemma, we show that each corner of a rectangle $R'\in\R'$
can be charged at most once. Hence, each rectangle receives a total
fractional charge of at most 2. 

\begin{lemma} Each corner of a rectangle in $\R'$ is charged at
most once. \label{lemma:corner_charged_once} \end{lemma}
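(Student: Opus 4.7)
My plan is to argue by contradiction: suppose some corner $c$ of $R'\in\R'$ is charged twice, by rectangles $R_1$ at node $v_1\in T$ and $R_2$ at node $v_2\in T$, with $v_1$ visited no later than $v_2$. The first step would rule out $R_1=R_2$: since $R_1$ is intersected by the vertical cut $\ell_{v_1}$, part~(3) of Lemma~\ref{lemma:partitionningLemma6} implies $R_1$ is split across the children of $v_1$, so $R_1\not\subseteq P_v$ for any descendant $v$; hence $R_1\notin\OPT'(P_{v_2})$, contradicting $R_1$ charging at $v_2$.

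The second step fixes the intended corner-selection rule and uses it to restrict the side from which $c$ can be charged. Without loss of generality $c=(R'_l,R'_t)$ is the top-left corner of $R'$. The natural convention is that when a rectangle $R$ charges on its right, it picks a left-facing corner (top-left or bottom-left) of the rectangle it charges, and symmetrically for the left side. Lemma~\ref{lemma:charging_process} ensures such a corner always exists. Under this rule, any charger of $c$ must see $c$ on its right (otherwise the rectangle would charge a right-facing corner instead), so both $R_1$ and $R_2$ lie strictly to the left of $R'$.

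The third step is a geometric uniqueness argument: at most one rectangle in $\OPT'$ sees $c$ on its right. Suppose $R_a\neq R_b$ both did, with (after relabeling) $R_{a,r}<R_{b,r}\le R'_l$. The seeing condition forces $R_{b,b}<R'_t<R_{b,t}$ and similarly for $R_a$. Since rectangles in $\OPT'$ are pairwise disjoint, $R_{a,r}\le R_{b,l}$. But then for any $x\in(R_{b,l},R_{b,r})$ the point $(x,R'_t)$ lies in the interior of $R_b$ while also lying on the horizontal segment $[R_{a,r},R'_l]\times\{R'_t\}$ witnessing $R_a$'s seeing of $c$, contradicting the clearance condition of Definition~\ref{def:seeing}. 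Hence the charger is unique, so $R_1=R_2$, contradicting the first step.

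The main obstacle is the second step: the excerpt does not explicitly commit to a specific corner-selection rule, so the proof must either pin down the rule as above or deduce the restriction from the fence structure. A backup argument uses Lemma~\ref{lemma:charged_implies_protected}: once $R_1$'s charge protects $R'$ via a horizontal fence along its top edge in every descendant polygon, any hypothetical $R_2$ to the right of $R'$ seeing $c$ across that fence would also see the top-right corner of $R'$ along the same clear corridor, and any reasonable nearest-corner tie-breaking rule would then force $R_2$ to charge the top-right corner rather than $c$.
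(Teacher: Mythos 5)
Your route is genuinely different from the paper's. The paper splits on whether the two charging events happen at the same node $v_1=v_2$ or at nested nodes $P_{v_2}\subsetneq P_{v_1}$; the first case uses a direct geometric argument and the second relies crucially on the fence/protection machinery (Lemmas~\ref{lemma:charged_implies_protected} and~\ref{lemma:partitionningLemma6}(4)), showing $R_2$ would already be protected. You instead try to establish a stronger, purely geometric fact --- at most one rectangle of $\OPT'$ sees a given left corner on its right --- which would render the fence argument unnecessary for this lemma. That fact is indeed true (and the paper itself invokes it, without proof, in the 3-approximation analysis in Appendix~\ref{sec:3apx}), so the high-level plan is sound, and your worry about the ``corner-selection rule'' is unfounded: Definition~\ref{def:seeing} already pairs a left corner with being seen ``on the right,'' so a top-left corner can only be charged by a rectangle lying to its left.

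However, the execution of step~3 has real gaps. You assert ``$R_{b,b}<R'_t<R_{b,t}$,'' but the seeing definition only forbids $p$ from being the \emph{bottom}-right corner, so $R'_t=R_{b,t}$ (i.e.\ $p_b$ is $R_b$'s top-right corner) is possible. In exactly that case, $(x,R'_t)$ for $x\in(R_{b,l},R_{b,r})$ lies on the \emph{top edge} of $R_b$, not in its interior, so the contradiction you claim via the no-intersection clause fails; you must instead invoke the third clause of Definition~\ref{def:seeing} (that $h_a$ must not contain the top edge of $R_b$), which in turn requires first deriving $R_{a,r}\le R_{b,l}$ from the disjointness of $R_a,R_b$ and their overlapping $y$-ranges. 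Similarly, ``$R_{b,r}\le R'_l$'' is slipped in as ``relabeling'' but is actually a geometric fact (if $R'_l<R_{b,r}$ then $h_b$ would enter $R_b$ or contain its top edge, or $R'$ and $R_b$ would overlap) that deserves a line of justification. Finally, step~1's argument that $R_1\notin\OPT'(P_{v_2})$ only bites when $v_1\neq v_2$; when $v_1=v_2$ one must argue separately that the same rectangle's two half-charges go to opposite sides and hence to distinct corners. None of these are fatal --- the uniqueness claim survives once the $R'_t=R_{b,t}$ case and the disjointness step are filled in --- but as written the proof does not go through.
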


As a consequence, each rectangle in $\R'$ needs to pay for at most two other
rectangles that are not horizontally nested and that were intersected,
which loses a factor 3. We lose another factor 2 since we assumed
that at most half of the rectangles in $\OPT'$ are not horizontally nested.
This yields a factor of 6 overall. 

\begin{lemma} \label{lemma:factor-6}We have $|\R'|\ge|\OPT|/6$. \end{lemma}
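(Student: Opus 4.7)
The plan is a straightforward counting argument that combines the three preceding lemmas (\ref{lemma:charging_process}, \ref{lemma:charged_implies_protected}, \ref{lemma:corner_charged_once}) with the WLOG assumption that at most half of the rectangles in $\OPT'$ are horizontally nested. Let $N \subseteq \OPT'$ denote the set of rectangles that are \emph{not} horizontally nested; by the WLOG assumption and since $|\OPT'|=|\OPT|$, we have $|N| \ge |\OPT|/2$. I will show that $|N| \le 3 |\R'|$, which directly yields $|\R'| \ge |\OPT|/6$.

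First, I would classify each rectangle $R \in N$ as either \emph{surviving} (i.e., $R \in \R'$) or \emph{lost} (i.e., intersected by some $\ell_v$ during the recursive partitioning). Denote these sets by $N_{\mathrm{surv}}$ and $N_{\mathrm{lost}}$. For a lost rectangle $R \in N_{\mathrm{lost}}$, the cutting line segment $\ell_v$ responsible for losing $R$ does not intersect any protected rectangle (property (4) of Lemma~\ref{lemma:partitionningLemma6}), so $R$ is not protected in $P_v$. Combined with the hypothesis that $R$ is not horizontally nested, Lemma~\ref{lemma:charging_process} then guarantees that $R$ sees a corner of some rectangle in $\OPT'(P_v)$ on its left and a corner of another on its right, so the charging scheme can assign a charge of $1/2$ to each, for a total distributed charge of $1$ per lost rectangle.

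Next I would bound the total charge received. By Lemma~\ref{lemma:charged_implies_protected}, every corner that receives charge belongs to a rectangle in $\R'$, so all the charge from $N_{\mathrm{lost}}$ ends up concentrated on rectangles in $\R'$. By Lemma~\ref{lemma:corner_charged_once}, each of the four corners of any rectangle in $\R'$ receives charge at most once, i.e., at most $1/2$, so each rectangle in $\R'$ receives a total charge of at most $4 \cdot \tfrac{1}{2} = 2$. Summing the charges both ways gives
\[
|N_{\mathrm{lost}}| \;=\; \sum_{R \in N_{\mathrm{lost}}} 1 \;\le\; \sum_{R' \in \R'} 2 \;=\; 2|\R'|.
\]

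Finally, since $N_{\mathrm{surv}} \subseteq \R'$ we get $|N_{\mathrm{surv}}| \le |\R'|$, and therefore
\[
|N| \;=\; |N_{\mathrm{surv}}| + |N_{\mathrm{lost}}| \;\le\; |\R'| + 2|\R'| \;=\; 3|\R'|.
\]
Combined with $|N| \ge |\OPT|/2$, this yields $|\R'| \ge |\OPT|/6$. There is no real obstacle here — all the technical content sits in the three previously stated lemmas (existence of the fractional charging, safety of charge, and the at-most-once property of corners); the only thing to double-check is that every lost non-horizontally-nested rectangle is indeed eligible for charging, which is precisely the role of property (4) of Lemma~\ref{lemma:partitionningLemma6} in verifying the "not protected" hypothesis of Lemma~\ref{lemma:charging_process}.
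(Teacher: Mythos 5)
Your proof is correct and follows essentially the same argument as the paper: decompose the non-horizontally-nested rectangles into the surviving ones (at most $|\R'|$) and the lost ones (at most $2|\R'|$ by the charging bound from Lemmas~\ref{lemma:charged_implies_protected} and~\ref{lemma:corner_charged_once}), then combine with the WLOG bound $|N|\ge|\OPT|/2$. The only cosmetic difference is that you make the eligibility-for-charging step (via property~(4) of Lemma~\ref{lemma:partitionningLemma6}) explicit, whereas the paper leaves it implicit.
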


\subsubsection{Proof of the Line-partitioning Lemma}

Now we prove Lemma~\ref{lemma:partitionningLemma6}. We assume w.l.o.g.~that
$P$ is horizontally convex with at most $26$ edges. We denote by $v\le 13$ the number of vertical edges of $P$.  We denote by $L(P)$ and $R(P)$ the
set of the left and right vertical edges of $P$, respectively. Assume
w.l.o.g.~that $|L(P)|\ge|R(P)|$. Let $e_{1},...,e_{s}$ denote the
left vertical edges of $P$, ordered from top to bottom. We have $s\ge v/2$.  
Consider
the edges $E_{M}:=\{e_{\left\lfloor s/3\right\rfloor +1},...,e_{\left\lceil 2s/3\right\rceil }\}$,
which are essentially the edges in the middle third of $e_{1},...,e_{s}$,
see Figure \ref{fig:cuts_line}. Let $f\in F(P)$ be a line fence
in $F(P)$ emerging from a point $p$ on an edge in $E_{M}$ such
that, among all such fences, its endpoint $p'$ is the furthest to
the right. Imagine that we define a vertical ray that emerges in $p'$ and
that is oriented downward. We follow this ray until we reach
a point $q'_{b}$ such that (i) $q'_{b}$ is contained in the interior
of the top edge of some rectangle $R_{b}\in\OPT'(P)$ that is protected
by some fence $g_{b}\in F(P)$, or (ii) $q'_{b}$ is contained in
some fence $g_{b}\in F(P)$ such that $q'_{b}$ is neither the first
nor the last point of $g_{b}$ (hence, intuitively $q'_{b}$ is in
the interior of $g_{b}$), or (iii) if we continued further we would
leave $P$. In the first two cases, we define $q_{b}$ to be the point
that $g_{b}$ emerges from, in the latter case we simply define $q_{b}:=q'_{b}$.
In a symmetric manner, we define a vertical ray that is oriented upward,
emerging from $p'$, and we define corresponding points $q_{t},q'_{t}$,
and possibly a corresponding fence $g_{t}\in F(P)$ and possibly a
corresponding rectangle $R_{t}\in\OPT'(P)$.

We define $C_{b}$ to be a sequence of line segments that connect
$p$ with $q_{b}$, using only points in $f$, $\overline{p'q'_{b}}$,
the top edge of $R_{b}$, the \ari{left} edge of $R_{b}$ and $g_{b}$, see Figure
\ref{fig:cuts_line}.
\arir{I changed to left edge to be consistent with the figure. Actually both left or right edge will work. May be we can just mention boundary of $R_b$ instead of top and left/right edges.}
We define $C_{t}$ similarly. Then we define
our cut $C$ by $C:=C_{b}\cup C_{t}$ with $\ell:=\overline{q_{t}'q'_{b}}$.
Note that $C$ has at most 8 edges. By construction, $\ell$ is the
only line segment in $C$ that can intersect rectangles in $\OPT'$,
however, $\ell$ does not intersect any protected rectangle. Also,
the length of $\ell$ is strictly larger than zero.

We need to argue that each connected component of $P\setminus C$
has at most 26 edges, i.e., at most $13$ vertical edges. First, $P\setminus C$ consists of at most three connected components: \aw{apart from the boundary of $P$,} the first component $P_1$ is \aw{enclosed} by $C_t$, the second component $P_2$ is \aw{enclosed} by $C_b$ and the last component $P_3$ is \aw{enclosed} by the sequence of line segments $C_r\subseteq C$ that connects $q_t$ and $q_b$. Now observe that the boundary of \ari{$P_1$} is disjoint from the edges \ari{$E_{B}:=\{e_{\left\lceil 2s/3\right\rceil +1}, \dots, e_{s}\}$}, the boundary of \ari{$P_2$} is disjoint from $E_{T}:=\{e_{1}, \dots, e_{\left\lfloor s/3\right\rfloor }\}$, and the boundary of \ari{$P_3$} is disjoint from $E_M$. 
Since $C_b$ (and $C_t$) has at least $\lfloor s/3\rfloor\ge \lfloor v/6\rfloor$ vertical edges, and $C_b$ (and $C_t$) has at most $2$ vertical edges,  the number of vertical edges of $P_1$ (and $P_2$) is at most  $v-\lfloor v/6\rfloor +2= \lceil \frac{5}{6}v\rceil +2\le\lceil \frac{5}{6} \cdot 13\rceil +2= 13$. 
Similarly,  since \ari{$E_M$} has at least $\lceil s/3\rceil\ge \lceil v/6\rceil$ vertical edges and $C_r$ has at most $3$ vertical edges,  the number of vertical edges of $P_3$ is at most  $v-\lceil v/6\rceil +3= \lfloor \frac{5}{6}v\rfloor+3\le\lfloor \frac{5}{6} \cdot 13\rfloor+3= 13$. 


Finally, we would like $P\setminus C$ to have at least two connected
components. This is clearly true if $\ell$ or $\overline{pp'}$ contain
a point in the interior of $P$. One can show that if this is not
the case, then $\overline{pp'}$ must be identical with the top or
the bottom edge of $P$ and $p'$ must be the top-right corner of
a rectangle $R\in\OPT'(P)$ (see Figure~\ref{fig:cuts_line}), we refer to Appendix~\ref{apx:details-Lemma6} for details. Due to
our choice of $f$ this implies that $\left\lfloor s/3\right\rfloor +1=1$
and hence $P$ has at most $4s\le8$ edges. Thus, we can
for example define a cut $C'$ that consists of the horizontal line
segment between $p$ and the top-left corner of $R$, the left edge
of $R$, and the bottom edge of $R$.


\begin{figure}
\centering \includegraphics[page=21,width=0.75\textwidth]{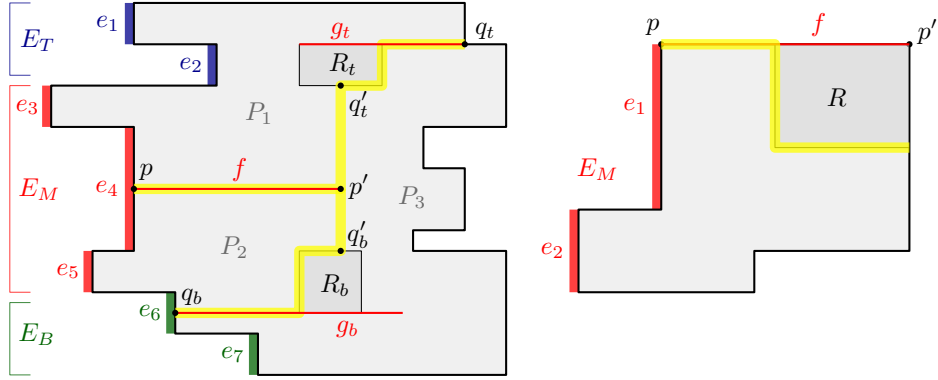}
\caption{Partitioning an horizontally convex polygon with $26$ edges (left). 
\mm{If the partitioning process fails to create two connected components, the polygon must have at most 8 edges (right).}
\aw{In both cases, the set of line segments $C$ used to cut the polygon is shown in yellow. 
}}
\label{fig:cuts_line}
\end{figure}

\section{Improving the approximation ratio to 3}

In this section, we give an overview of our additional ideas to obtain
an approximation ratio of 3; we refer to Appendix~\ref{sec:3apx}
for details. One key idea is to use more elaborate fences that are
no longer just horizontal line segments, but instead $x$-monotone
sequences of $\tau=O(1)$ line segments, see Figure~\ref{fig:5fences} in Appendix.
Like before, each of them emerges on a point of some vertical edges
of the corresponding polygon $P$. We refer to these new fences as
$\tau$\emph{-fences }and hence our previous fences are 1-fences.
In particular, $\tau$-fences protect more rectangles from $\OPT$
(if $\tau$ is sufficiently large), see Figure~\ref{fig:5fences}.

One benefit of these larger fences is the following. Suppose that
we apply a cut $C$ to a polygon $P$, let $\ell$ denote the single
(vertical) line segment in $C$ that intersects rectangles from $\OPT'$.
Assume that due to $C$, the top-right or bottom-right corner of some
rectangle $R$ is charged. In the argumentation in Section~\ref{subsec:simple-recursion},
it could happen that later the top-left or bottom-left corner of $R$
is charged by some rectangle $R'$. However, if we use $\tau$-fences
for some $\tau\ge3$, then after applying the cut $C$, such a rectangle
$R'$ is protected by a $\tau$-fence that emerges from \aw{a point
on} $\ell$ (which is a vertical edge in the connected component
of $P\setminus C$ that contains $R$). Therefore, at most two corners
from each rectangle receive a charge: either only its left or only
its right corners. Therefore, each rectangle $R\in\R'$ receives a
fractional charge of at most 1. Thus, due to this we lose only a factor
of 2, in addition to the factor of 2 that we lost by assuming that
there are at most $|\OPT|/2$ horizontally nested rectangles (and
we assumed that we lost them completely). Hence, already this improves
the approximation ratio to 4. 

\awr{Rewrote this paragraph completely AGAIN, please check} Then, we improve the approximation
ratio to 3 with the following argumentation. Suppose that a rectangle
$R\in\OPT'$ is intersected by some cut $C$ such that $R$ is not
horizontally nested. If $R$ sees two corners of (one or two) other
rectangles in $\OPT'$ on its right, then we can charge $R$ to these
two corners, instead of charging it only to one corner like in the
argumentation above. If we can do this for all intersected rectangles
$R\in\OPT'$ (towards their respective left and right), then one can show
that this already improves the approximation ratio to 3. However,
it might be that $R$ sees only one such corner $c$ on its right. 
Then we show that one of the following two cases applies. The first
case is that we identify two corners $c',c''$ belonging to rectangles
$R',R''$ on the right of $R$ such that $R$ sees $c'$ but $R$
does \emph{not} see $c''$. \aw{However, we show that if $\tau \ge 7$ then this ensures} 
that $R''\in\R'$, that $c''$ will not be charged again \mad{if $R''$ is not horizontally nested and at most twice \aw{in total} otherwise}, \note{and that the corners of $R''$ on the opposite side will never be charged.} Hence,
we can assign a charge of $1/4$ each to $c'$ and $c''$. The second
case is that $R$ sees a corner $\hat{c}$ of a horizontally nested
rectangle $\hat{R}$ on the right of $R$. In this case, we assign
a charge of $1/2$ to $\hat{c}$.  One may wonder why we can afford
to assign a charge of $1/2$ to $\hat{c}$ \aw{(and similarly before two charges of $1/4$ each to $c''$ if $R''$ is horizontally nested)}. The intuitive reason is
that we had already given up on the horizontally nested rectangles,
assuming a loss of a factor of 2. However, after charging $\hat{c}$
we ensure that $\hat{R}\in\R'$ \aw{(we show that $\tau \ge 3$ suffices)} while we had assumed that we had already
lost $\hat{R}$. Hence, we can assign a charge of 1 unit to $\hat{R}$
``for free''.

\awr{Please check whether this is correct}
\aw{We choose $\tau := 7$ and hence we} need a recursive partitioning scheme
that does not cut through any $\aw{7}$-fence. In
fact, we prove even a stronger statement that could be useful for
future work: we show that for any constant $\tau\in\N$ there is a
recursive partitioning scheme into polygons in $\P(30\tau+18)$ such
that each recursive cut does not cut through any $\tau$-fence. In
particular, our routine for cutting one given polygon is a generalization
of Lemma~\ref{lemma:partitionningLemma6}. The resulting polygons
might no longer be horizontally convex; we allow them to be arbitrary
simple axis-parallel polygons with $O(\tau)$ vertices and integral
coordinates. 
However, we can still establish the necessary partitioning scheme
via distinguishing only a few different cases.

\threeapx*

\appendix
\bibliographystyle{plainurl}
\bibliography{bibdb}

\section{A recursive partitioning for a 3-approximate set}\label{sec:3apx}

In this section, we construct a $228$-recursive partition for a set $\R'\subseteq \OPT'$, such that $|\R'|\ge \frac{1}{3}|\OPT'|$, where $\OPT'$ denotes the set of maximal rectangles in $S$. We first show how to construct this recursive partition to obtain $\R'$, using a more general Partitioning Lemma (Lemma \ref{lemma:partitionningLemma}) than for the $6$-approximation. Then, we prove the Partitioning Lemma, and finally we describe a charging scheme for the rectangles that are intersected during the partitioning process, and analyze it to show that $\R'$ is a $3$-approximate solution.

\subsection{Recursive Partitioning}

As before, we assume without loss of generality that at most half of the rectangles in $\OPT'$ are horizontally nested. Otherwise, it must be that at most half of the rectangles in $\OPT'$ are vertically nested, and we can for example rotate $\OPT'$ by $90$ degrees to obtain an equivalent instance $\OPT'$ where at most half of the rectangles are horizontally nested. 

In order to describe this recursive partition, we initialize the corresponding
tree $T$ with root $r$. We define now a recursive procedure that takes
as input a so far unprocessed vertex $v$ of the tree, corresponding
to some polygon $P=P_{v}$. It either partitions $P$ further (hence
adding children to $v$) or assigns a rectangle $R\in\OPT'$ to $v$
and does not add children to $v$. We denote by $\OPT'(P)\subseteq\OPT'$
the subset of rectangles of $\OPT'$ that are contained in $P$. 
If $\OPT'(P)=\emptyset$, then we simply delete $v$ from $T$. If $|\OPT'(P)|=1$
then we add the single rectangle in $\OPT'(P)$ to $\R'$, assign
it to $v$, and do not process $v$ further. 

Assume now that $|\OPT'(P)|>1$. 

\begin{figure}[h]
        \centering \includegraphics[page=19,width=0.9\textwidth]{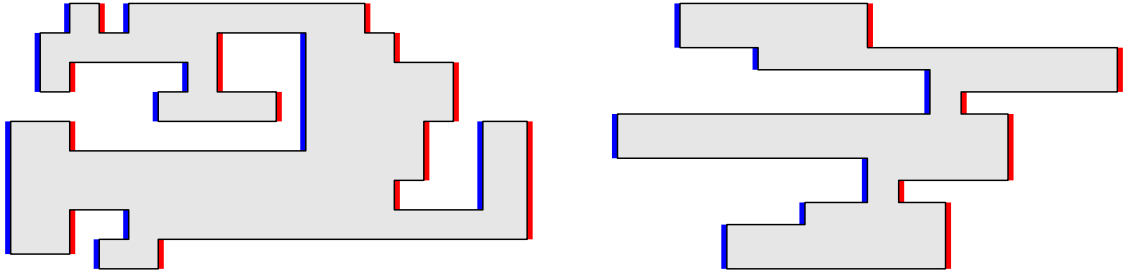}
    \caption{The left vertical edges (blue) and the right vertical edges (red) of an axis-parallel polygon. The polygon on the right is horizontally convex.}
    \label{fig:vertical_edges}     
\end{figure}

\begin{figure}[h]
    \centering \includegraphics[page=4,trim=80 110 60 60, clip, width=0.65\textwidth]{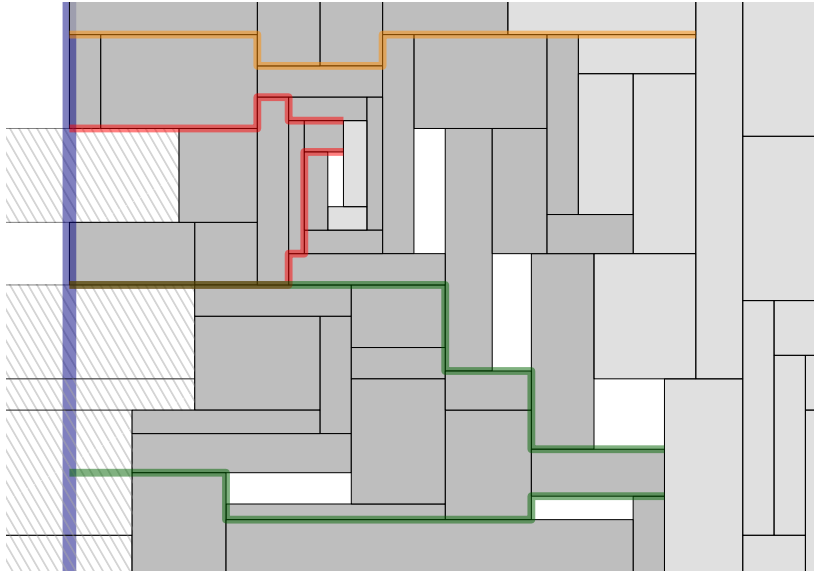}
    \caption{Some $5$-fences emerging from a left vertical edge (blue) on the boundary of an axis-parallel polygon.
        Shaded rectangles are the rectangles in $\protect\OPT'$ that intersect
        the boundary of $P$.
        White spaces indicate points that are not covered by rectangles of
        $\protect\OPT'$ contained in $P$. Darker gray rectangles correspond to rectangles in $\protect\OPT'(P)$
    that are $5$-protected. }
    \label{fig:5fences} 
\end{figure}
\arir{4th rectangle from left in the top row in Fig. 7, should be darker gray as it is 5-protected.}

We say that a curve $\gamma$ of the plane is \emph{$x$-monotone} (resp. \emph{$y$-monotone}) if for any vertical (resp. horizontal) line $\ell$, the intersection $\ell\cap \gamma$ is connected. 

Given an integral parameter $\tau\ge 1$ and a vertical edge $e$ of $P$, a $\tau$-fence \emph{anchored} on $e$ is a $x$-monotone curve defined by a sequence of at most $\tau$ horizontal and vertical line segments that is contained in $P$, has one endpoint on $e$ and intersects no rectangles of $\OPT'$. 
 See Figure \ref{fig:5fences}. 
 Denote by $F_\tau(P)$ the set of all fences
created in this way.

When we partition $P$, we will cut $P$ along a sequence of line segments such
that (i) no horizontal line segment intersects a rectangle in $\OPT'(P)$
and (ii) the interior of any line segment is disjoint from the interior of any $\tau$-fence. Therefore, the $\tau$-fences intuitively protect some rectangles in
$\OPT'(P)$ from being intersected by line segments defined in future iterations of the partition. This motivates the following definition.
\begin{defn}
    Given a polygon $P$, we say that a rectangle
    $R\in\OPT'(P)$ is \emph{$\tau$-protected in $P$} if there exists a vertical edge $e$ in $P$ such that the top side of $R$ is contained in a $\tau$-fence emerging from $e$ and the bottom of $R$ is contained in another $\tau$-fence also emerging from $e$. 

\end{defn}

\begin{obs}
Given an axis-parallel polygon $P$ and a parameter $\tau\ge 3$, if a rectangle $R\in \OPT'(P)$ has its top or bottom edge contained in a $(\tau-2)$-fence, then it is $\tau$-protected. 
\end{obs}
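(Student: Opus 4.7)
The plan is to take the hypothesized $(\tau-2)$-fence $f$ containing, say, the top edge of $R$ and anchored on a vertical edge $e$, and use $f$ itself as the $\tau$-fence for the top of $R$, while constructing a second $\tau$-fence $f'$ anchored on the same edge $e$ that contains the bottom edge of $R$. Since $R$ is open, $R\subseteq P$, and $e$ lies on $\partial P$, the $x$-coordinate $x_e$ of $e$ cannot lie strictly between the left and right coordinates $a$ and $b$ of $R$; WLOG assume $x_e\le a$ (the case $x_e\ge b$ is handled by a symmetric detour around $R$ on the right side, and the case in which the bottom edge of $R$, instead of the top, lies in a $(\tau-2)$-fence is handled by detouring upward).

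Let $(a,y_t)$ denote the top-left corner of $R$ and $(a,y_b),(b,y_b)$ the bottom-left and bottom-right corners. Since $f$ contains the top edge of $R$, the point $(a,y_t)$ lies on $f$; let $f_L$ be the subcurve of $f$ from its endpoint on $e$ to $(a,y_t)$. By construction $f_L$ has at most $\tau-2$ segments. Define $f'$ to be the concatenation of $f_L$ with the vertical segment from $(a,y_t)$ down to $(a,y_b)$ along the left edge of $R$, followed by the horizontal segment from $(a,y_b)$ to $(b,y_b)$, which is the bottom edge of $R$. This yields at most $(\tau-2)+2=\tau$ segments.

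It remains to verify that $f'$ is a valid $\tau$-fence anchored on $e$. The $x$-coordinate along $f_L$ is monotonically non-decreasing and ends at $x=a$; the appended vertical segment keeps $x=a$; and the final horizontal segment increases $x$ from $a$ to $b$, so $f'$ is $x$-monotone. Containment in $P$ follows because $f_L\subseteq f\subseteq P$ and both the left edge and the bottom edge of $R$ lie in $\overline{R}\subseteq P$. Finally, $f_L$ inherits from $f$ the property of not intersecting any rectangle of $\OPT'$; and no rectangle of $\OPT'$ can contain a point of the left or bottom edge of $R$ in its open interior, because this would force it to extend across the edge and overlap with $R$, contradicting the pairwise disjointness of $\OPT'$. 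Thus the appended segments are also disjoint from $\OPT'$.

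The main obstacle, albeit a minor one, is bookkeeping around degenerate configurations: the last segment of $f_L$ may already be vertical at $x=a$ (in which case it merges with the appended vertical segment, lowering the count), or the endpoint of $f$ on $e$ may lie above $y_t$ (in which case $f_L$ traces a vertical sub-segment of $e$ and the image of $f'$ at $x=a$ remains an interval, preserving $x$-monotonicity). In every such case the segment bound of $\tau$ still holds. Since $f$ itself is a $(\tau-2)$-fence and hence a $\tau$-fence containing the top edge of $R$, and $f'$ is a $\tau$-fence anchored on the same edge $e$ containing the bottom edge of $R$, the rectangle $R$ is $\tau$-protected in $P$.
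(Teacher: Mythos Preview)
Your proof is correct and follows the same approach as the paper: take the given $(\tau-2)$-fence containing one horizontal edge of $R$, and detour around $R$ via one vertical edge and the opposite horizontal edge to obtain a $\tau$-fence on the same anchor edge. Your write-up is in fact considerably more careful than the paper's two-sentence sketch.

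One small remark: your justification that $x_e\notin(a,b)$ is not quite right as stated---the fact that $R$ is open, $R\subseteq P$, and $e\subseteq\partial P$ does not by itself rule out a vertical boundary edge of $P$ lying above or below $R$ with $x$-coordinate strictly between $a$ and $b$. The correct reason is that $f$ is $x$-monotone, starts at $x=x_e$, and contains the entire top edge of $R$ (which spans $[a,b]$); hence $x_e\le a$ or $x_e\ge b$. This is a minor fix and does not affect the rest of your argument.
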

\begin{proof}
Without loss of generality, we assume that the top edge of $R$ is contained in a $(\tau-2)$-fence $f$ anchored on a point $p$ on a left vertical edge of $P$. Using $f$, the left edge of $R$ and the bottom edge of $R$, we can construct another fence an $x$-monotone sequence of at most $\tau$ line segments anchored on $p$ that contains the bottom edge of $R$. Thus, $R$ is $\tau$-protected. 
\label{obs:necessary_condition_tau_protected}
\end{proof}

We will ensure that no protected rectangle will be intersected
when we cut $P$, and the fences in $F_\tau(P)$ will help us to ensure
this. We apply the following lemma to $P$ with $\tau=7$.

\begin{restatable}[Partitioning Lemma]{lemma}{lemmapartition}
    \label{lemma:partitionningLemma}
    
    Let $\tau\ge 1$ be an integer. Let $P$ be a simple axis-parallel polygon with at most  $30\tau +18$ edges of integral coordinates, 
    such that $P$ contains at least two rectangles from $\OPT'$. Then there
    exists a sequence $C$ of line segments with integral coordinates such that: 
    \begin{enumerate}
        \item[(1)] $C$ is composed of at most $2\tau +1$ horizontal or vertical line segments
            that are all contained in $P$. 
        \item[(2)] $P\setminus C$ has exactly two connected components, and each of
            them is a simple axis-parallel polygon with at most $30\tau +18$ edges. 
        \item[(3)] There is a vertical line segment $\ell$ of $C$
            that intersects all the rectangles in $\OPT'(P)$ that are intersected
            by $C$. 
        \item[(4)] $\ell$ does not intersect any rectangle that is $\tau$-protected in $P$. 
    \end{enumerate}
\end{restatable}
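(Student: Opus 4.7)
The plan is to generalize the construction of Lemma~\ref{lemma:partitionningLemma6} to $\tau$-fences (which are $x$-monotone curves of up to $\tau$ axis-parallel segments each) in place of single-segment line fences, and to work on arbitrary simple axis-parallel polygons $P$ rather than only horizontally convex ones. The cut $C$ will consist of a $\tau$-fence $f$, a single vertical segment $\ell$, and a second $\tau$-fence $g$, joined end-to-end into a single curve from one boundary point of $P$ to another. This is a more restricted structure than in the $\tau=1$ case (where the cut was allowed to branch and potentially create three pieces), so $C$ will always split $P$ into exactly two components.

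For the construction, I would assume without loss of generality that the number $s$ of left vertical edges of $P$ is at least the number of right ones, so that $s \ge v/2$, where $v \le 15\tau + 9$ denotes the total number of vertical edges of $P$. I would label the left vertical edges $e_1, \ldots, e_s$ from top to bottom, and focus on the middle third $E_M := \{e_{\lfloor s/3\rfloor+1}, \ldots, e_{\lceil 2s/3\rceil}\}$. Among all $\tau$-fences anchored on some edge of $E_M$, I would pick the fence $f$ whose right endpoint $p'$ is furthest to the right. From $p'$, I would extend a maximal vertical segment $\ell$ downward (the upward case is symmetric) until one of the following events occurs: (i) $\ell$ leaves $P$, (ii) $\ell$ meets another $\tau$-fence $g$, or (iii) $\ell$ reaches the top edge of a $\tau$-protected rectangle --- but this top edge lies on some $\tau$-fence $g$ by the spirit of Observation~\ref{obs:necessary_condition_tau_protected}, so (iii) reduces to (ii). In case (i), I set $C := f \cup \ell$; in case (ii)/(iii), I trace $g$ from the point where $\ell$ hits it back to its anchor on $\partial P$, using at most $\tau$ of its segments, and set $C := f \cup \ell \cup g$.

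Properties (1), (3), (4) then follow easily. Property (1): $f$ and $g$ each contribute at most $\tau$ segments and $\ell$ contributes one, for a total of at most $2\tau + 1$. Property (3): by definition $\tau$-fences do not intersect any rectangle of $\OPT'$, so $\ell$ is the only segment of $C$ that can intersect an input rectangle. Property (4): $\ell$ is extended only up until the first moment it would enter a protected rectangle or a fence, so it cannot cross the interior of any $\tau$-protected rectangle. For property (2), $C$ is a connected axis-parallel curve with both endpoints on $\partial P$; since $P$ is simply connected, $P \setminus C$ has exactly two connected components, provided $C$ is not entirely contained in $\partial P$. Analogously to the degenerate case in the $\tau = 1$ proof, this situation can only arise when $P$ has very few vertical edges, and it can be handled by a direct alternative cut along the boundary of a single rectangle of $\OPT'(P)$.

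The main obstacle, I expect, is showing that each of the two resulting pieces has at most $30\tau + 18$ edges. Each piece inherits a subset of the edges of $\partial P$ (with up to two edges split at the endpoints of $C$) and gains the at most $2\tau + 1$ segments of $C$. By the choice of $f$ in the middle third $E_M$, each piece contains at most $\lceil 2s/3 \rceil$ of the $e_i$'s, losing at least $\lfloor s/3 \rfloor \ge \lfloor v/6 \rfloor$ left vertical edges of $P$. A counting argument analogous to the one in the $\tau = 1$ proof, with a case distinction depending on whether $g$ is anchored on a left or a right vertical edge, then gives each piece at most $\lceil 5v/6\rceil + O(\tau)$ vertical edges. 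Using $v \le 15\tau + 9$ and the fact that in a simple rectilinear polygon the number of horizontal edges equals the number of vertical ones, this yields the desired bound of $30\tau + 18$, with the $1/3$-slack from $E_M$ being just enough to absorb the $O(\tau)$ new segments introduced by $C$.
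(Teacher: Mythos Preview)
Your proposal has a genuine gap: the direct generalization of the Lemma~\ref{lemma:partitionningLemma6} construction does not survive the passage from horizontally convex polygons to arbitrary simple axis-parallel polygons.

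The issue is your use of the ordering $e_1,\dots,e_s$ of the left vertical edges ``from top to bottom'' and the middle-third set $E_M$. For a horizontally convex polygon these edges are consecutive along the boundary, so the boundary distance between $e_i$ and $e_j$ is essentially $2|i-j|$, and your claim that ``each piece contains at most $\lceil 2s/3\rceil$ of the $e_i$'s'' follows. For a general simple polygon there is no such top-to-bottom ordering (left vertical edges can interleave with right ones along the boundary, appear at the same height, etc.), and even if you pick some ordering, losing $\lfloor s/3\rfloor$ of the $e_i$'s tells you nothing about the \emph{boundary distance} between the two anchor edges of $C$. But it is precisely this boundary distance that controls the edge count of the two pieces: if $C$ has $c\le 2\tau+1$ segments and connects edges at boundary distance $d$, the larger piece has roughly $k-d+c+1$ edges, so you need $d\ge c+1\ge 2\tau+2$. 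Your argument does not establish this. Relatedly, your choice of $f$ as the $\tau$-fence anchored on $E_M$ with rightmost endpoint no longer forces the second fence $g$ to be anchored outside $E_M$ in any useful boundary-distance sense.

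The paper circumvents this by first proving a Chazelle-type structural lemma (Lemma~\ref{lemma:cut_polygon}): every simple axis-parallel $k$-gon contains a vertical chord $\ell_0$ whose endpoints lie on horizontal edges at boundary distance $\ge k/3$. This chord induces a partition of the boundary (in the cyclic order, not by left/right) into groups $L_B,L_M,L_T,R_T,R_M,R_B$, each of size $\Omega(\tau)$. The case analysis then picks two $\tau$-fences, or a $\tau$-fence plus $\ell$ plus a $\tau$-fence, whose anchors are guaranteed to lie in non-adjacent groups, hence at boundary distance $\ge 2\tau+2$. A final step handles the possibility that $C$ meets $\partial P$ at interior points (so that $P\setminus C$ is not two simple polygons) by extracting a subpath $C_i\subseteq C$ that does split $P$ into two simple pieces without increasing complexity; your proposal glosses over this as well. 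The missing ingredient in your sketch is thus Lemma~\ref{lemma:cut_polygon} and the switch from ``left vertical edges, top to bottom'' to ``edges in boundary order around $\ell_0$''.
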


We introduce an (unprocessed) child vertex of $v$ corresponding to
each connected component of $P\setminus C$ which completes the processing
of $P$. 

We apply the above procedure recursively to each unprocessed vertex
$v$ of the tree until there are no more unprocessed vertices. Let
$T$ denote the tree obtained at the end, and let $\R'$ denote the
set of all rectangles that we assigned to some leaf during the recursion. 

\begin{lemma}
    The tree $T$ and the set $\R'$ satisfy the following properties.
    \begin{enumerate}[label=(\roman*)]
        \item For each node $v\in T$, the horizontal edges of
            $P_{v}$ do not intersect any rectangle in~$\OPT'$. 
        \item $T$ is a $228$-recursive partition for $\R'$.
        \item If a rectangle $R\in\OPT'$ is $7$-protected in $P_{v}$ for some node
            $v\in T$, then 
            \begin{itemize}
                \item it is $7$-protected in $P_{v'}$ for each descendant $v'$ of $v$, 
                \item $R\in\R'$.
            \end{itemize}
    \end{enumerate}
    \label{lemma:partition_3apx}
\end{lemma}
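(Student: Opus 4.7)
The plan is to establish each of the three claims in turn, using induction on the depth in $T$ for parts~(i) and~(iii), and a direct check of the definition for part~(ii).

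For part~(i), I argue by induction on the depth of $v$ in $T$. The base case $v=r$ is immediate since $P_r=S$ contains all rectangles of $\OPT'$ and hence no such rectangle can cross the boundary of $S$. For the inductive step, let $v$ be a child of a node $u$ obtained by applying Lemma~\ref{lemma:partitionningLemma} to $P_u$ with cut sequence $C$ and distinguished vertical segment $\ell$. Every horizontal edge of $P_v$ is either a portion of a horizontal edge of $P_u$---which does not intersect any rectangle of $\OPT'$ by the inductive hypothesis---or a horizontal segment of $C$. By property~(3) of the Partitioning Lemma, $\ell$ is the unique segment of $C$ meeting any rectangle of $\OPT'(P_u)$; the remaining (horizontal) segments of $C$ lie along $\tau$-fences of $P_u$, which by definition avoid all rectangles of $\OPT'$.

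For part~(ii), I verify each condition in the definition of a $k$-recursive partition with $k=228$. Setting $\tau=7$ in Lemma~\ref{lemma:partitionningLemma} gives $30\tau+18=228$, so every polygon $P_v$ arising from the recursion lies in $\P(228)$; the root polygon $S$ is trivially in $\P(228)$. Each internal node has exactly two children (since $P_v\setminus C$ has exactly two components), which is within the allowed bound of three; the child polygons partition $P_v$ in the intended sense (up to the measure-zero cut $C$). A leaf is created precisely when $|\OPT'(P_v)|\le 1$, so each leaf polygon contains at most one rectangle of $\R'$; and each rectangle added to $\R'$ is assigned to a unique leaf by construction.

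For part~(iii), I induct on the length of the descendant path from $v$ to $v'$. The base case $v'=v$ is the hypothesis. For the inductive step, suppose $R$ is $7$-protected in $P_{v'}$ and let $v''$ be the child of $v'$ whose polygon contains $R$. By part~(i), the horizontal segments of the cut $C$ at $v'$ intersect no rectangle of $\OPT'$; by property~(4) the distinguished segment $\ell$ does not intersect the $7$-protected rectangle $R$; hence $R\subseteq P_{v''}$. Let $e$ be the anchor vertical edge in $P_{v'}$ and $f_1,f_2$ the $7$-fences witnessing the protection of $R$. If $e$ lies on the boundary of $P_{v''}$ and both $f_1,f_2$ remain in $P_{v''}$, I reuse them directly; otherwise, using that $\ell$ is disjoint from the interiors of both fences and that horizontal segments of $C$ avoid all rectangles of $\OPT'$, I re-anchor on the new vertical edge formed by $\ell$ in $\partial P_{v''}$, splicing each $f_i$ with a short detour along segments of $C$ to obtain an $x$-monotone sequence of at most $7$ line segments from $\ell$ to the top (respectively bottom) edge of $R$. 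Propagating this down to a leaf containing $R$ forces that leaf to contain $R$ as its unique rectangle, so $R\in\R'$. The main obstacle I foresee is precisely this re-anchoring step: one must verify that the spliced sequences remain $x$-monotone, fit within the budget of $7$ segments, stay inside $P_{v''}$, and avoid all rectangles of $\OPT'(P_{v''})$. The choice $\tau=7$ is calibrated exactly to accommodate this splicing, and the remaining claims are bookkeeping consequences of Lemma~\ref{lemma:partitionningLemma}.
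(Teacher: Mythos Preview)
Your treatment of parts~(i) and~(ii) is essentially what the paper does. For part~(iii), however, you take a more complicated route than the paper, and the complication is where your own stated worry lies.

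The paper's argument for~(iii) is a single observation: if $f$ is a $\tau$-fence in $F(P_{v})$, then $f\cap P_{v'}$ is a $\tau$-fence in $F(P_{v'})$. That is, you do not splice anything with $C$; you simply \emph{restrict}. A sub-path of an $x$-monotone curve is still $x$-monotone, has at most as many line segments as the original, and still avoids all rectangles of $\OPT'$. So the segment budget of $7$ is preserved automatically---there is nothing to ``fit within''. The two restricted fences then witness $7$-protection of $R$ in $P_{v'}$, and induction carries $R$ down to a leaf.

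Your splicing-and-detour idea, by contrast, really could increase the segment count (the cut $C$ alone has up to $2\tau+1=15$ segments), which is exactly the obstacle you flag. Your assertion that ``the choice $\tau=7$ is calibrated exactly to accommodate this splicing'' is not correct: the value $7$ is dictated by the \emph{charging scheme} (one needs $5$-fences in the proof of Lemma~\ref{lemma:at_most_two_corners_charged_3}, and Observation~\ref{obs:necessary_condition_tau_protected} adds two more), not by any re-anchoring argument in the recursive partition. Once you see that restriction---not splicing---is the operation, the budget issue evaporates and the whole of part~(iii) is one line.
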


\begin{proof}
    The first and second properties follow from the definition of
    the fences and Lemma~\ref{lemma:partitionningLemma}. The third
    property follows from the fact that for any $\tau\ge 1$,  if $f$ is a $\tau$-fence in $F(P_{v})$
    then, $f\cap P_{v'}$ is a $\tau$-fence of $F(P_{v'})$. 
\end{proof}

\subsection{Proof of the Partitioning Lemma}

In this section, we prove the partitioning Lemma used to construct the recursive partitioning of $S$. First, given an axis-parallel polygon $P$ with $k$ edges (and with no holes), we label the edges of $P$ according to their order in the boundary starting from an arbitrary edge in a clockwise fashion: $E(P)=e_1, \dots, e_k$. The distance between two edges $e_i$ and $e_j$ with $i\le j$ is $d(e_i,e_j)=\min\{j-i, k-j+i\}$. 

We will start by proving the following helpful lemma, \mm{which establishes a similar result as the one from Chazelle (\cite{Chazelle82}, Theorem 2) but now in the context of axis-parallel polygons. }

\begin{lemma}
Given an axis-parallel polygon $P$ with $k$ edges, there exists a vertical line segment $\ell$ such that $\ell$ is contained in $P$ and its endpoints are respectively contained in two horizontal edges of distance at least $k/3$.
\label{lemma:cut_polygon}
\end{lemma}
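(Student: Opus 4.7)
The plan is to adapt Chazelle's classical argument (\cite{Chazelle82}, Theorem 2) for balanced diagonals in simple polygons to our axis-parallel setting, seeking specifically a vertical chord. First, I would construct the vertical decomposition of $P$: from each reflex vertex of $P$, extend a vertical segment into the interior until it meets the boundary. This partitions $P$ into axis-parallel rectangles $R_1, \ldots, R_t$. The dual graph, where two rectangles are adjacent whenever they share (a portion of) a vertical cut, is a tree, since $P$ is simply connected and the cuts are simple. Each rectangle $R_i$ has its top lying on a horizontal edge $T_i$ of $P$ and its bottom on some horizontal edge $B_i$ of $P$, so any vertical segment in the interior of $R_i$ is a candidate chord with endpoints on $T_i$ and $B_i$.

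Next I would identify a centroid rectangle $R^*$ in the dual tree---one whose removal leaves subtrees each of size at most $t/2$---and consider as candidate vertical chords (i) the interior chord of $R^*$ connecting $T^*$ and $B^*$, and (ii) each cut between $R^*$ and one of its neighbors in the dual tree. Every such cut is a vertical segment in $P$ whose endpoints lie on horizontal edges of $P$: each endpoint is either a reflex vertex of $P$ (lying on its incident horizontal edge) or an interior point of a horizontal edge of $P$. The centroid property yields an approximate balance in the number of rectangles, and hence in the number of horizontal edges of $P$, on each side of every such chord, since each horizontal edge of $P$ is the top or bottom of a connected subtree of the dual tree. Mimicking Chazelle's case analysis, I would argue that at least one of the candidate chords has its two endpoints on horizontal edges at boundary distance at least $k/3$, and report it as the desired $\ell$.

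The main obstacle I anticipate is handling degenerate cases in which every direct candidate chord has endpoints close in cyclic boundary order. This arises, for instance, when the cut between $R^*$ and a neighbor connects two reflex vertices whose associated horizontal edges are nearby $T^*$ or $B^*$ in boundary order, so neither the cut nor the chord inside $R^*$ directly achieves the $k/3$ bound. To handle this I would extend the vertical line at the cut's $x$-coordinate: when both endpoints of the cut are reflex vertices, the vertical boundary edges of $P$ at these vertices must extend \emph{outward} from the cut (otherwise the cut would lie along the boundary rather than in the interior), so the vertical line at this $x$-coordinate continues inside $P$ beyond the cut, offering sub-segments whose endpoints lie on horizontal edges farther apart in boundary order. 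The careful bookkeeping---associating each horizontal edge of $P$ with the subtree of rectangles it bounds and counting contributions on each side of each candidate chord to establish the $k/3$ bound---constitutes the main technical step of the proof.
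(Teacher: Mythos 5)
Your proposal and the paper's proof take genuinely different routes, even though both are inspired by Chazelle's polygon-cutting theorem (which the paper cites right before the lemma). You follow Chazelle's original strategy directly: build a decomposition of $P$ into elementary cells, take the dual tree, pick a centroid cell, and argue that one of a few candidate chords near the centroid must be balanced. The paper instead runs a purely local-improvement (extremal) argument on the set $\mathcal{L}$ of \emph{all} vertical chords of $P$: it shows by a four-case analysis that whenever $d(\ell) < k/3$ one can construct another chord $\ell'$ with $d(\ell') > d(\ell)$, so the maximizer of $d(\cdot)$ over $\mathcal{L}$ must satisfy $d(\ell_0)\ge k/3$. The paper's argument needs no auxiliary decomposition and no tree, at the cost of a slightly fiddly case analysis; your route is conceptually cleaner at the top level but pushes the difficulty into the accounting.

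That accounting is where your sketch has a real gap, and you are candid about it. You take the \emph{unweighted} centroid $R^*$, which balances the number of \emph{rectangles} in each subtree, and then say this ``hence'' balances the number of \emph{horizontal edges} of $P$. That inference is not justified: a single horizontal edge of $P$ can support an entire path of rectangles in the dual, so the map from rectangles to boundary edges is many-to-one with no uniform multiplicity. Chazelle's original proof gets around exactly this by using a \emph{weighted} centroid (weighting triangles by the polygon vertices they ``own''), and your version of the lemma needs the analogous step: weight each rectangle by the boundary edges it is responsible for, pick the centroid of the weighted tree, and then show one of the $O(1)$ candidate chords around it splits the cyclic edge sequence into arcs of length at least $k/3$. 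Beyond that, your list of candidate chords (one interior chord of $R^*$ plus the cuts to its neighbors) is not obviously sufficient; you flag the degenerate cases yourself, and the proposed fix (extend the vertical line past a cut and look for sub-segments) is left as a plan rather than an argument. Until the weighted-centroid step and the candidate exhaustion are actually carried out, the proposal does not yet constitute a proof of the $k/3$ bound, whereas the paper's extremal argument closes the loop directly.
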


\begin{proof}
Let $\mathcal{L}$ be the set of vertical line segments contained in $P$ and with both endpoints on $P$'s boundary. Notice that a vertical line segment $\ell\in \mathcal{L}$ may contain some of $P$'s vertical edges. 
We also denote by $e_T$ and $e_B$ the horizontal edges of $P$ in which the top and the bottom endpoint of $\ell$ are respectively contained. 
Finally we denote $d(\ell):=d(e_B, e_T)$. 

To prove the lemma, we must show that there exists $\ell\in\mathcal{L}$ such that $d(\ell)\ge k/3$. 
For this, consider any $\ell\in\mathcal{L}$ such that $d(\ell)<k/3$. We construct, via a complete case analysis,  another $\ell'\in\mathcal{L}$ such that $d(\ell')>d(\ell)$.  See Figure \ref{fig:n_over_3}. 

\begin{figure}
   \centering \includegraphics[page=22, width=0.99\textwidth]{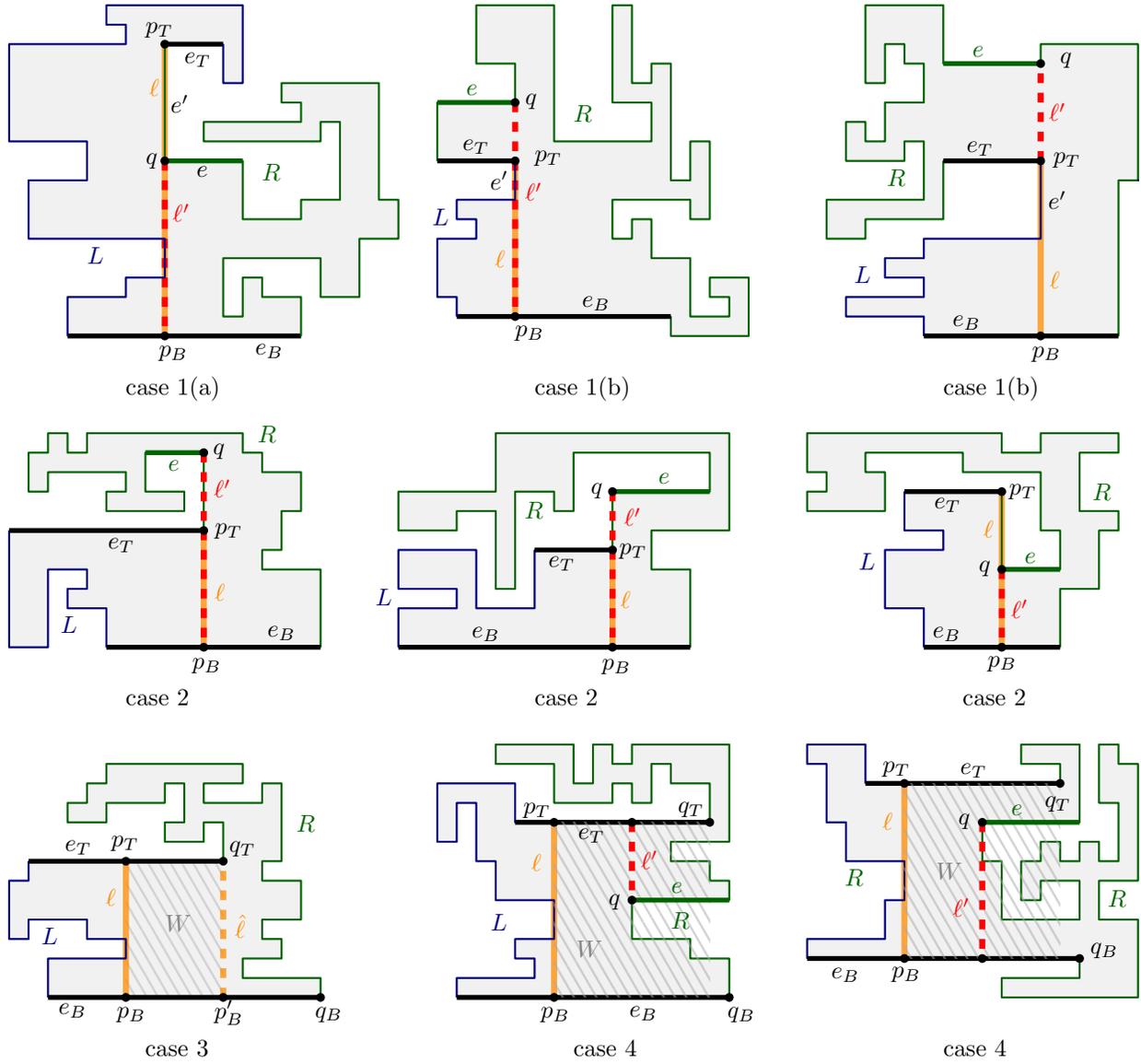}
\caption{Different cases in the proof of Lemma \ref{lemma:cut_polygon}. The line $\ell\in \mathcal{L}$ is displayed in orange, between points $p_B$ and $p_T$. On the polygon, edges from $L$ are shown in blue, while edges in $R$ are shown in green. The cut $\ell'$ such that $d(\ell')>d(\ell)$ is shown with a dashed thick red line.  }
\label{fig:n_over_3}
\end{figure}

\mm{We orient the edges of $P$ in a clockwise fashion. In this order, we can partition $P$'s boundary as: $E(P)=e_B\cup L \cup e_T\cup R$. }
W.l.o.g. we assume that $|R|\ge |L|$. In particular, we have $d(\ell)=|L|+1$. 

Let $p_T=\ell\cap e_T$ and $p_B=\ell\cap e_B$ denote the top and bottom endpoint of $\ell$ respectively. Similarly as the left and right vertical edges, we define the top horizontal edges (resp. bottom horizontal edges) of $P$ to any horizontal edge $e$, that contains in its interior a point $(x,y)$ such that the point $(x,y-1/2)\in P$  (resp. $(x,y+1/2)\in P$). 
\begin{enumerate}
\item[(case 1)] If $e_T$ is a bottom horizontal edge of $P$, then since $\ell$ is contained in $P$, $p_T$ must be an endpoint of $e_T$. 
\begin{enumerate}
\item[(a)] If $p_T$ is the left endpoint of $e_T$, then $p_T$ must be the top endpoint of an vertical edge $e'$ of $P$. 
    Furthermore, $e'$ is strictly contained in $\ell$, i.e., the bottom endpoint $q$ of $e'$ is contained in $\ell$ strictly above $p_B$.  The vertex $q$ is contained in a horizontal edge $e\in R$ and we have $d(e_B, e)=d(e_B,e_T)+2$. Thus, the vertical line segment $\ell'$ with endpoints $p_B$ and $q$ is in $\mathcal{L}$ and we have $d(\ell')=d(\ell)+2$.
\item[(b)] Otherwise, $p_T$ is the right endpoint of $e_T$. In this case, let $q$ be the lowest point above $p_T$, with the same $x$-coordinate as $p_T$ that is contained in the boundary of $P$. The point $q$ lies in a horizontal edge $e\in R$. We claim that either $d(e_B, e)$ or $d(e, e_T)$ is greater than $d(e_B, e_T)$. For the sake of a contradiction, assume that both are smaller than or equal to $d(\ell)$. Then, it must be that $d(e_B, e_T)+ d(e_T, e) + d(e, e_B) = k$, which implies that $d(e_B, e_T)\ge k/3$. Thus, we have either $d(e_T, e)>d(\ell)$ --- in which case we define $\ell'$ as the vertical line segment with endpoints $q$ and $p_T$ --- or $d(e_B, e)>d(\ell)$, and we then define $\ell'$ as the vertical line segment with endpoints $q$ and $p_B$. In both cases, we have $\ell'\in \mathcal{L}$ and $d(\ell')>d(\ell)$. 
\end{enumerate}
\item[(case 1')] If $e_B$ is a top horizontal edge of $P$, then we proceed symmetrically as case 1 to obtain $\ell'\in\mathcal{L}$ such that $d(\ell')>d(\ell)$.  
\end{enumerate}
Otherwise, $e_T$ is a top horizontal edge and $e_B$ is a bottom horizontal edge. Let $q_T$ and $q_B$ denote  the right endpoint of $e_T$ and the right endpoint of $e_B$, respectively. 
\begin{enumerate}
\item[(case 2)] If $p_T=q_T$, then $q_T$ must be the top endpoint (or the bottom) of a vertical edge $e'$ of $P$. The other endpoint $q$ of $e'$ is contained in a horizontal edge $e\in R$. Let $\ell'$ be the vertical line segment joining $p_B$ and $q$. It is clear that $\ell'$ is in $\mathcal{L}$ and since $e\in R$, we have $d(\ell')=d(\ell)+2$. 
\item[(case 2')]  If $p_B=q_B$, then we proceed symmetrically as case 2 to obtain $\ell'\in\mathcal{L}$ such that $d(\ell')=d(\ell)+2$.  
\end{enumerate}
Otherwise, both $q_T$ and $q_B$ are strictly on the right of $\ell$. W.l.o.g. we assume that $q_T$ has $x$-coordinate smaller than or equal to the one of $q_B$. Now, define the closed rectangular area $W$ that has $\ell$ as left side and $q_T$ as top-right corner. 
\begin{enumerate}
\item[(case 3)] If $W$ intersects no horizontal edge $e\in R$, then in particular the right side of $W$ is contained in $P$. Let $p'_B$ denote the bottom-right corner of $W$. By assumption, $p'_B$ lies in $e_B$. We replace $\ell$ by the vertical line segment $\hat{\ell}$ between $q_T$ and $p'_B$. We have $\hat{\ell}\in \mathcal{L}$, and $d(\hat{\ell})=d(\ell)$. Then, $\hat{\ell}$ satisfies the requirements of case 2. We apply the same process as in case 2 to obtain $\ell'\in\mathcal{L}$ such that $d(\ell')=d(\hat{\ell})+2=d(\ell)+2$. 

\item[(case 4)] Otherwise, $W$ intersects some horizontal edge in $R$. Since $\ell$ is contained in $P$, each such horizontal edge must have its left endpoint in $W$. Let $q\in W$ be (any of) the leftmost left endpoint of a horizontal edge $e\in R$. The vertical line segment between $e_B$ and $e_T$ that passes by $q$ is contained in $P$: by the choice of $q$,  it does not intersect the interior of any horizontal edge in $R$, and since $\ell\subset P$, no horizontal edge in $L$ intersects the interior of $W$. Then, since $e\in R$,  we can apply the same argumentation as in case 1(b) and prove that either $d(e_B, e)$ or $d(e_B, e)$ is greater than $d(e_B, e_T)=d(\ell)$. If $d(e_T, e)>d(e_B, e_T)$, we define $\ell'$ to be the vertical line segment between $e_B$ and $q$. Otherwise, we define $\ell'$ to be the vertical line segment between $e_T$ and $q$. In both cases, $\ell'\in \mathcal{L}$ and $d(\ell')>d(\ell)$. 
\end{enumerate}
Thus, for any $\ell\in \mathcal{L}$ such that $d(\ell)<k/3$, we have constructed another line $\ell'\in \mathcal{L}$ such that $d(\ell')> d(\ell)$. Consequently, the element $\ell_0\in \mathcal{L}$ that maximizes $d(\ell)$ over all $\ell\in \mathcal{L}$ must satisfy $d(\ell_0)\ge k/3$. 
\end{proof}



We can now prove our Partitioning Lemma restated below.

\lemmapartition*

\begin{proof}
Let  $P$ be an axis-parallel polygon with $k\le 30\tau +18=15(2\tau +1)+3$ edges.
\begin{enumerate}
\item[{\bf (case 0)}] If $k\le 15(2\tau +1)-1$, then let $R$ be one of the rectangles of $\OPT'(P)$ with the leftmost left side. The leftward horizontal ray $h_T$ (resp. $h_B$) from the top-right corner of $R$ (resp. from the bottom-right corner of $R$) reaches $P$'s boundary without intersecting any rectangle in $\OPT'(P)$. Let $C$ be the union of $h_T$, $h_B$, and the right side of $R$. The cut $C$ intersects no rectangles in $OPT'(P)$ and $P\setminus C$ has at least two connected components.  Since $C$ has three line segments, each of these component has at most $15(2\tau +1)-1+3+1=15(2\tau +1)+3$ edges. 

\end{enumerate}

\begin{figure}[h]
   \centering \includegraphics[page=25, width=0.98\textwidth]{figures/figures.pdf}
\label{fig:cases01}
\end{figure}

 We assume now that $P$ has $k\ge 15(2\tau +1)$ edges. \mm{To construct a cut $C$ we will identify two $\tau$-fences $f_1$ and $f_2$ in $P$, and one vertical line segment $\ell$ that connects two points on $f_1$ and $f_2$ (with integral coordinates), and does not intersect the interior of any other $\tau$-fences. In particular, no $\tau$-protected rectangle is intersected by $\ell$. We will define the cut $C$ as a subset of $f_1\cup\ell\cup f_2$, so in particular it will consists of at most $2\tau+1$ line segments. We will choose $f_1$ and $f_2$ so that they are respectively anchored on vertical edges of distance at least $2\tau+2$ from each other. This implies that each of the two connected components of $P\setminus C$ will have at most $k\le  30\tau +18$ edges.}
 
  \mm{In some degenerate cases, the connected components of $P\setminus C$ may not be simple axis-parallel polygons. More precisely, each connected component may have some vertical edges that intersect and some horizontal edges that intersect, but since $C$ is contained in $P$ no vertical edge can cross an horizontal edge. This is due to the fact that $\tau$-fences may contain some edges of the polygon, and thus, the cut $C$ may intersect $P$'s boundary not only on its endpoints. See Figure \ref{fig:non-simple_components}. If this is the case, we will argue at the end of the proof that  there exists a subset $C'\subset C$ that intersects $P$'s boundary only on its endpoints, and so that $P\setminus C'$ has exactly two connected components that are simple axis-parallel polygon with at most $k\le 30\tau +18$ edges. }
\arir{This para was a bit confusing on the first reading, may be add a figure for this degenerate case later.}

Let $\ell_0$ be the vertical line segment given by Lemma \ref{lemma:cut_polygon}. Recall that the edges of $P$ are ordered in a clockwise fashion. In this order, we have: $E(P)=e_B\cup L \cup e_T\cup R$. By the choice of $\ell_0$ we know that both $L$ and $R$ have size at least $k/3-1\ge 5(2\tau +1)-1$. Since $|L|$ and $|R|$ are odd numbers, we have  $|L|\ge 5(2\tau +1)$ and  $|R|\ge 5(2\tau +1)$. We further partition each of these groups into three subgroups according to the clockwise ordering: $L=L_B\cup L_M\cup L_T$ and $R=R_T\cup R_M\cup R_B$ (See Figure \ref{fig:partition_boundary}), such that 
\begin{itemize}
\item The middle-left group $L_M$  and the middle-right group $R_M$ contain each at least $2\tau+1$ edges. 
\item The bottom-left group $L_B$, the top-left group $L_T$, the bottom-right group $R_B$ and the top-right group $R_T$ contain at least $2(2\tau+1)$ edges each. 
\end{itemize}

\begin{figure}
   \centering \includegraphics[page=24, width=0.6\textwidth]{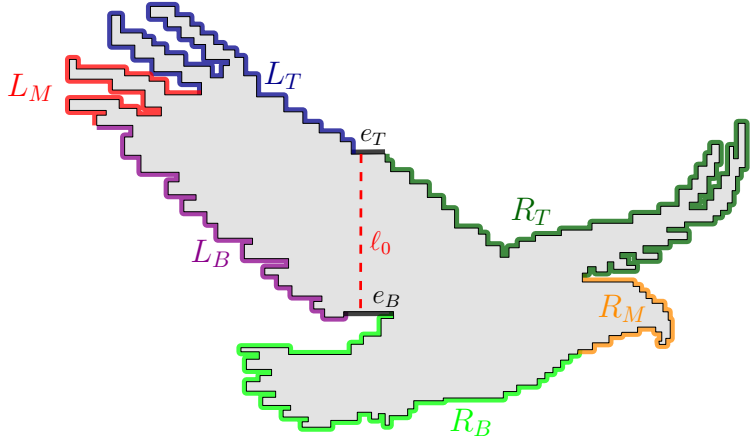}
\caption{The partition of $P$'s boundary into sets $\{e_B\},L_B,L_M,L_T,\{e_T\}, R_T, R_M,R_B$.  Middle groups $L_M$ and $R_M$ each contains at least $2\tau+1$ edges, while other groups ($L_T, L_B, R_B$ and $R_T$) each contains at least $2(2\tau +1)$ edges.}
\label{fig:partition_boundary}
\end{figure}

 Let $p_1, p_2, \dots, p_r$ the set of points of $\ell_0$, ordered from bottom to top, such that \ari{each $p_i$, $i \in [r]$} is contained in a $\tau$-fence. Let $E_i$ be the set of vertical edges $e$ such that there exists a \ari{$\tau$-fence} anchored in $e$ that contains $p_i$. Finally, we define $J=\cup_{i=1}^r E_i$. 

\emph{Remark.} Notice that the edge $e_B$ is covered by two $1$-fences $f$ and $f'$ (that are also $\tau$-fences), such that $f$ (resp. $f'$) is anchored on the vertical $e\in L_B$ (resp. $e'\in R_B$) that is incident to $e_B$. In particular, we have that $p_1\in e_B$, $E_1\cap L_B\neq \emptyset$ and $E_1\cap R_B\neq \emptyset$. Similarly, we remark that  $p_r\in e_T$, $E_r\cap L_T\neq \emptyset$ and $E_r\cap R_T\neq \emptyset$.

 \begin{enumerate}
 \item[{\bf(case 1)}] First assume that $(L_M\cup R_M)\cap J=\emptyset$, \ari{i.e., none of the fences anchored on edges in $L_M \cup R_M$ intersects $J$}. We define $B=R_B\cup e_B\cup L_B$ and $T=L_T\cup e_T\cup R_T$. For any edges $e\in B$ and $e'\in T$, we have $d(e,e')\ge \min(|L_M|, |R_M|)+1\ge 2\tau +2$. 
 \begin{enumerate}
 \item[{\bf (a)}] If there is an index $i$ such that $E_i\cap B\neq \emptyset$\madr{Is this $E_{i}$?} and $E_i\cap T\neq \emptyset$ then it means that $p_i$ is contained in a fence $f$ anchored on a edge $e$ of $B$ and $p_i$ is also contained in one fence $f'$ anchored on an edge $e'$ of $T$. Thus, there exists a cut $C\subseteq f\cup f'\subset P$ that is a sequence of at most $|f|+|f'|\le 2\tau$ line segments, that connects two points of the boundary contained on edges at distance at least $2\tau +2$ from each other\footnote{Notice that it is possible that $C\subsetneq  f\cup f'$, for instance in the case when $f$ and $f'$ share a point distinct than $p_i$.}\madr{What do you mean $P\subsetneq f \cup f'$?\ari{I think it should be $C\subsetneq f \cup f'$. Mathieu: Yes, thanks!}}. Thus, each connected components of $P\setminus C$ has at most $k-1\le 15(2\tau +1)+2$ edges. Since $C$ is a subset of two fences, it does not intersect any rectangle in $\OPT'(P)$. 

 \item[{\bf (b)}] If there is no such index, then each \ari{$E_i$} is either contained in $B$ or in $T$. In particular we obtain that $E_1\subseteq B$ and $E_r\subseteq T$. 
  Thus, there exists $1\le i\le r-1$ such that $E_i\subseteq B$ and  $E_{i+1}\subseteq T$. Thus, let $\ell$ be the vertical line segment between $p_i$ and $p_{i+1}$. The point $p_i$ is contained in a fence $f$ anchored on a edge $e\in B$ and $p_{i+1}$ is contained in a fence $f'$ anchored on an edge $e'\in T$. In particular, we have $d(e,e')\ge 2\tau+1$. Thus, there exists a sequence $C\subseteq f\cup \ell\cup  f'$ of at most $2\tau +1$ line segments that connects $e$ and $e'$, and  since $d(e,e')\ge 2\tau+1$, each component of $P\setminus C$ has at most $k\le 15(2\tau+1)+3$ edges. 
  
  We now check that conditions (3) and (4) hold. Since $f$ and $f'$ are $\tau$-fences, it is clear that only $\ell$ may intersect some rectangles in $\OPT'(P)$. Then, for the sake of a contradiction, assume that $\ell$ intersects a rectangle $R\in\OPT'(P)$ that is $\tau$-protected. There exists two $\tau$-fences $g$ and $g'$, both anchored on the same edge $e''$, such that the top edge of $R$ is contained in $g$ and the bottom side of $R$ is contained in $g'$.  Since $\ell$ intersects $R$, it must intersect the interior of $g$ and $g'$. This implies that $p_i\in g'\cap \ell$ and $p_{i+1}\in g\cap \ell$, and in particular we have $e''\in E_i\cap E_{i+1}$ and also $e''\notin (L_M\cup R_M)$. Thus, $e''\in T\cup B$, which brings a contradiction with the fact that $E_i\subseteq B$ and $E_{i+1}\subseteq T$. Therefore, $C$ does not intersect any $\tau$-protected rectangle. 
 \end{enumerate}

 \item[{\bf (case 2)}]
Now consider now the case $L_M\cap J\neq \emptyset$ or $R_M\cap J\neq \emptyset$; by symmetry, we can assume that $L_M\cap J\neq \emptyset$.  
 Let $f$ be the fence anchored on an edge $e\in L_M$, with the rightmost endpoint $p$ among all $\tau$-fences anchored on $L_M$. In particular, $f$ must intersect $\ell_0$.  Let $q$ denote the lowest point on the upward vertical ray starting from $p$, such that $q$ is contained in the interior of a $\tau$-fence $g$. We denote $e_g$ the edge on which $g$ is anchored. We also denote $\ell$ the vertical line segment between $p$ and $q$ (we may have $q=p$). 
Similarly, let $\hat{q}$ denote the highest point on the downward vertical ray $\hat{\ell}$ from $p$, such that $\hat{q}$ is contained in the interior of a $\tau$-fence $\hat{g}$ anchored on an edge $e_{\hat{g}}$ (we may also have $\hat{q}=p$). 
By the choice of $f$, we know that both $e_g$ and $e_{\hat{g}}$ are not in $L_M$. 

\begin{figure}
   \centering \includegraphics[page=26, width=0.88\textwidth]{figures/figures.pdf}
\label{fig:case2a}
\end{figure}
 \begin{enumerate}
 \item[{\bf (a)}] If $e_g\in R$, then the distance between $e$ and $e_g$ is at least $\min(|L_T|,|L_B|)+1\ge 2\tau+2$. Thus, there exists a cut $C\subseteq f\cup \ell\cup  g$ with at most $2\tau + 1$ line segments, that satisfies the properties (1)-(3) of the Partitioning Lemma.  To prove that the last property also holds, we can use a similar argumentation as in case 1(b). Here, assuming that $\ell$ intersects a $\tau$-protected rectangle would imply a contradiction with the definition of $q$ as the \emph{lowest} point above $p$ that is contained in the interior of a $\tau$-fence. 
\item[{\bf (a')}] Similarly, if $e_{\hat{g}}\in R$, then we can define a cut $C \subseteq f\cup\hat{\ell} \cup\hat{g}$. 
\item[{\bf (b)}] Consider now the case where both $e_g$ and $e_{\hat{g}}$ are in $L$. Recall that, by definition of $f$, we know that $e_g$ and  $e_{\hat{g}}$ are not in $L_M$. 
 
 \begin{figure}
   \centering \includegraphics[page=27, width=0.88\textwidth]{figures/figures.pdf}
\label{fig:case2bi}
\end{figure}

\begin{enumerate}
\item[{\bf (i)}] If $e_{\hat{g}}\in L_B$ and $e_g\in L_T$ then $d(e_{\hat{g}},e_g)\ge \min(|L_M|,|R|)+1\ge 2\tau+2$, and we define $C\subseteq g \cup \ell \cup \hat{\ell} \cup \hat{g}$. 
\item[{\bf (i')}] If $e_{\hat{g}}\in L_T$ and $e_g\in L_B$ then we also have $d(e_{\hat{g}},e_g)\ge \min(|L_M|,|R|)\ge 2\tau+2$.  In this case, $g$ and $\hat{g}$ must intersect and thus $C$ consists of the proper subset of $g\cup \hat{g}$ that connects $e_g$ and $e_{\hat{g}}$. 
\item[{\bf (ii)}] In the remaining case, $e_g$ and $e_{\hat{g}}$ are both in $L_T$ or both in $L_B$. These two cases are symmetrical so we only describe how to proceed in the former case. Let $\ell'$ denote the downward vertical ray from $f$'s endpoint $p$ to a point $q'$ on the boundary of $P$.  
 Let $q'_1,\dots, q'_{r'}$ be the points on $\ell'$, from top to bottom, such that each $q'_i$ is contained in \emph{ the interior} of a $\tau$-fence. Let $E'_1, \dots, E'_{r'}$ be sets of edges from $P$ such that $e'\in E'_i$ if there exists a fence anchored in $e'$ that contains $q'_i$. 
 We know that $E'_{r'}\cap (R\cup \{e_B\})\neq \emptyset$, and by definition of $f$ we also have that, for all $i\ge 1$,  $E'_i\cap L_M=\emptyset$. 
 
 We know that $q'$ lies on an horizontal edge contained in $R\cup\{e_B\}\subset E(P)\setminus(L_T\cup L_M)$.  
 Additionally, our assumption $e_{\hat{g}}\in L_T$ implies that there is an index $i_0$ such that $E'_{i_0}\cap L_T\neq \emptyset$. 
 Thus, there exists an index $i$, with $i_0\le i \le r'-1$ such that (a) $E'_i\cap L_T\neq \emptyset$ and (b) $(E'_i\cup E'_{i+1})\cap (E(P)\setminus (L_T\cup L_M))\neq\emptyset$. 
   We denote $g'$ the $\tau$-fence anchored on an edge $e'\in L_T$ such that $g'$ contains $q'_i$, and $g''$ the $\tau$-fence anchored on an edge $e''\in E(P)\setminus (L_T\cup L_M)$ such that $g''$ contains $q'_i$ or $q'_{i+1}$.
   
   We partition $L_T=L_T^-\cup L_T^+$ such that $L_T^+$ and $L_T^-$ are both continuous fraction of the boundary of $P$, the set $L_T^-$ is incident to $L_M$, and such that $|L_T^-|\ge \lfloor |L_T|/2\rfloor \ge 2\tau +1$ and $|L_T^+|\ge \lceil |L_T|/2\rceil\ge 2\tau +1$. We distinguish two subcases.
   
 
 \begin{figure}
   \centering \includegraphics[page=28, width=0.98\textwidth]{figures/figures.pdf}
\label{fig:case3bii}
\end{figure}

   \begin{enumerate}
   \item[{\bf (A)}] If $e'\in L_T^-$, then  $d(e',e'')\ge \min(|L_T^+|,|L_M|)+1\ge 2\tau +2$. Then, we define a cut $C\subseteq g' \cup g'' \cup \ell_i$ where $\ell_i$ is the vertical line segment between $q'_i$ and $q'_{i+1}$. In the case where $q_i\in g''$, the fences $g'$ and $g''$ intersects (on $q'_i$) (\ari{see the left figure for case 2(b)(ii)(A)}), so we can define $C$ as a subset of $g'\cup g''$.  
   
   \item[{\bf(B)}] Otherwise, we have $e'\in L_T^+$. Then $d(e',e)\ge \min(|L_T^-|,1+|R|+1+|L_B|)+1\ge 2\tau +2$. 
We claim $f$ and $g'$ must intersect. Let $\ell'_+\subset P$ denote the vertical line segment that connects $p$ (the right endpoint of $f$) to a point of $P$'s boundary, located above $p$. The set $f\cup \ell'_+$ separates $P$ into at least two connected components, among which there is one component $K_1$ that contains $e'$,  and another component $K_2$ that contains $\ell'$. 

Therefore, we know that $g'$ intersects the interior of $K_1$ and the interior of $K_2$ (since $g'$ intersects $\ell'$ on $q'_i$). Since $g'$ is connected and is contained in $P$, it must intersect $f\cup \ell'_+$. If it intersects $f$, the desired claim is proved. Otherwise there is a point $q_+\in \ell'_+\cap g'$. Since $g'$ was assumed to be $x$-monotone, the whole vertical line segment that connects $q'_i$ and $q_+$ is contained in $g'$. In particular $p\in g'$ which means that $g'\cap f\neq \emptyset$. 

Thus, we can define a cut $C\subset f \cup g'$ that have length at most $2\tau$ and connects edges $e$ and $e_g$, with $d(e,e_g)\ge 2\tau +2$. 
   \end{enumerate}
\end{enumerate}  
 \end{enumerate} 
  \end{enumerate}
  
  \mm{
At this stage, we have constructed a sequence $C$ of at most $2\tau+1$ line segments (with integral coordinates), that connects two edges $e$ and $e'$ on $P$'s boundary that are at distance at least $d(e,e')\ge2\tau+2$ from each other, and such that $P\setminus C$ has two connected components with at most $30\tau +18$ edges each. Therefore, $C$ satisfies property (1) of the Partitioning Lemma, and by construction it satisfies also properties (3) and (4).  If the two connected components of $P\setminus C$ are simple polygons, then such a cut satisfies the properties of the lemma. Otherwise, the interior of $C$ must intersect $P$'s boundary. 
See Figure \ref{fig:non-simple_components}. 
Let $C_1, \dots, C_r$ denote the maximal subpaths of $C$ whose interior are contained in the interior of $P$. In particular, the endpoints of each $C_i$ are on $P$'s boundary. So for each index $i$, $P\setminus C_i$ has exactly two connected components that are simple polygons. We claim that there exists an index $i$, such that these two connected components have at most $k\le 30\tau +18$ edges each. }
\arir{As I mentione earlier, perhaps a figure will be helpful here.}

\mm{For a contradiction assume that for each $i$, making the cut $C_i$ strictly increases the complexity. This means that the endpoints of $C_i$ are respectively on edges $e_i$ and $e'_i$ such that $d(e_i,e'_i)\le c_i$ where $c_i$ is the number of line segments of $C$ that are \emph{contained} in $C_i$. 
 Therefore, if $c\le 2\tau +1$ denotes the total number of line segments in $C$, we obtain the following contradiction:
$$
2\tau+2\le d(e,e') \le \sum_{i=1}^r d(e_i,e'_i)\le \sum_{i=1}^r c_i \le c \le 2\tau +1. 
$$
Therefore, there exists an index $i$ such that $c_i\le d(e_i,e'_i)-1$, which implies that $P\setminus C_i$ has exactly two connected components that are simple axis-parallel polygon and have at most $k\le 30\tau +18$ edges each. This finishes the proof. 
}

\begin{figure}
\centering
\includegraphics[page=32, width=0.8\textwidth]{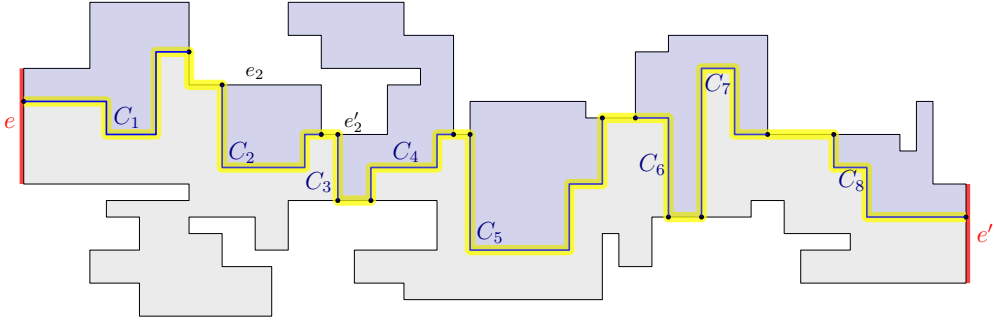}
\caption{Our cut $C$ (yellow) may partition a simple axis-parallel polygon $P$  into two connected components (blue and gray) that are not simple polygons. In this case, we identify one subpath $C'\subset C$ whose interior lies inside $P$ (and thus cuts $P$ into two simple polygons), such that cutting along $C'$ does not increase the complexity. For instance here,  $C'$ may be any of $\{C_3,C_4,C_6,C_7, C_8\}$. On the other hand, cutting $P$ along $C_1, C_2$ of $C_5$ would create one simple connected component with strictly more edges than $P$. Indeed, each such subpath $C_i$ ($i\in \{1,2,5\}$) connects two edges $e_i,e'_i$ of the boundary, that are at distance $d(e_i,e'_i)\le c_i$, where $c_i$ is the number of line segments of $C$ that $C_i$ contains. For instance, $2=d(e_2, e'_2)\le c_2=3$. We show that there exists always a subpath $C_i\subset C$ such that $d(e_i, e'_i)>c_i$. }
\label{fig:non-simple_components}
\end{figure}
\end{proof}

\subsection{Charging scheme and analysis.}
In this section we show that $|\R'|\geq |\OPT|/3$. We use a more sophisticated charging scheme than the one we previously described in Section~\ref{subsec:simple-recursion}. We still follow the same idea of charging non-horizontally nested rectangles that get intersected during the recursive partitioning to corners of rectangles that are not yet intersected. Here either we charge additional corners (that are not necessarily seen by the rectangle that is intersected), or we charge horizontally nested rectangles, and count them as ``saved''.  This will allow us to charge non-horizontally nested rectangle with a fractional charge of at most $\frac{1}{2}$. 

\paragraph{The charging scheme.}
Consider an internal node $v$ of the tree and let $\ell_{v}$ be the corresponding line segment $\ell$ defined above for partitioning $P_{v}$ as in Lemma~\ref{lemma:partitionningLemma}. Assume that there is a rectangle $R\subset P_{v}$ that is intersected by $\ell_{v}$ and that is not horizontally nested. By Lemma \ref{lemma:partitionningLemma} we know that $R$ is not $7$-protected in $P_{v}$ and, in particular, it is not protected (by a line fence) in $P_{v}$. Thus, Lemma \ref{lemma:charging_process} implies that $R$ sees at least one corner of some rectangle in $\OPT'(P_{v})$ on each side. For each such rectangle $R$, we introduce $4$ tokens that correspond to a fractional charge of $\frac{1}{4}$ each (we will later refer to these tokens as belonging to $R$). We distribute two tokens on the right and two on the left. Let us describe the token distribution on the right and the token distribution to the left will be symmetric. See Figure \ref{fig:token_distribution}.
\begin{figure}
   \centering \includegraphics[page=29, width=0.9\textwidth]{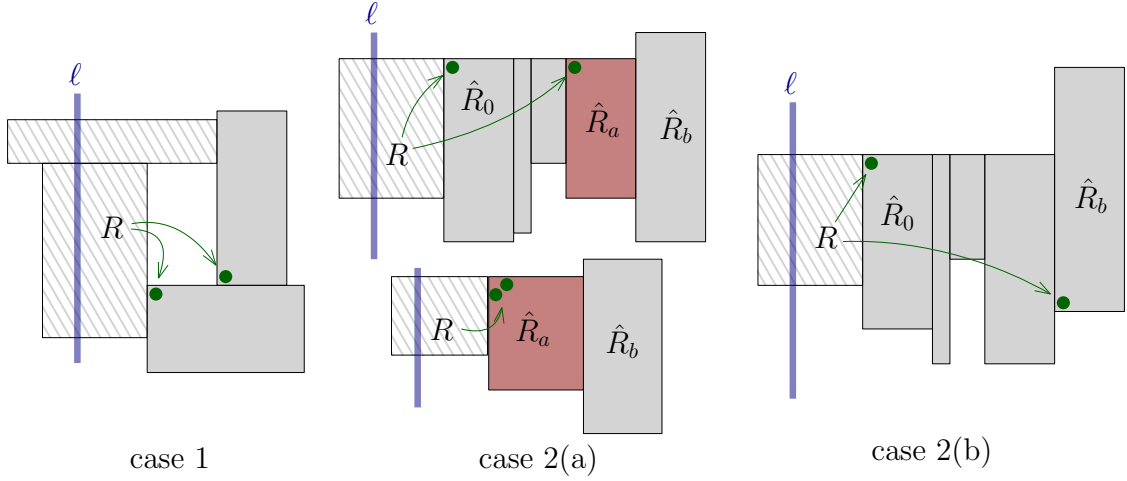}
\caption{Token distribution strategy for the charging scheme.}
\label{fig:token_distribution}
\end{figure}

Since $R$ is not horizontally nested, it sees at least one corner on its right. 
\begin{enumerate}
 \item If $R$ sees more than one corner, put one token on two of those corners each (choose any two such corners arbitrarily if needed). 
\item If $R$ sees only one corner $c$ of a rectangle, say $\hat R_{0}\in \OPT'$ on its right, then we give one token to $c$. To identify the second corner that will receive the second token from $R$, we first observe the following.

\begin{claim}
If $c$ is the top-left corner (resp. bottom-left corner) of $\hat R_{0}$, then the $y$-coordinates of the top edges (resp. the bottom edges) of $R$ and $\hat R_{0}$ are equal and the $x$-coordinates of the right edge of $R$ and left edge of $\hat R_{0}$ are equal. 
\end{claim}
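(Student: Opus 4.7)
The plan is to prove the claim by contradiction. From the hypothesis that $R$ sees only $c = (x_{R_0}^l, y_c)$ on its right, there is a horizontal segment $h$ from $p = (x_R^r, y_c)$ to $c$ whose interior contains no point of any rectangle in $\OPT'$, with $p$ not the bottom-right corner of $R$. This forces $y_R^b < y_c \le y_R^t$ and, by disjointness of $R$ and $\hat R_{0}$, $x_R^r \le x_{R_0}^l$. To prove the claim it then suffices to rule out the strict cases $x_R^r < x_{R_0}^l$ and $y_R^t > y_c$; in each case I would exhibit a second top-left or bottom-left corner on the right of $R$ that is seen by $R$, contradicting uniqueness.

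First I would rule out $x_R^r < x_{R_0}^l$. Here $h$ has positive length and empty interior. By maximality of $\hat R_{0}$ (it cannot extend leftward in $\OPT'$), there exists a rectangle $R^* \in \OPT'$ distinct from $R$ with $x_{R^*}^r = x_{R_0}^l$ and with $y$-range overlapping $(y_{R_0}^b, y_c)$. Emptiness of $h$ then forces $y_{R^*}^t \le y_c$. Choosing $R^*$ so that $y_{R^*}^t$ is largest, and applying the same maximality argument iteratively if $x_{R^*}^l \le x_R^r$ to find a blocker strictly to the right of $R$'s right edge, I would verify that the top-left corner $(x_{R^*}^l, y_{R^*}^t)$ of $R^*$ is seen by $R$ on its right: the connecting segment at height $y_{R^*}^t$ has its left endpoint on $R$'s right edge (since $y_R^b < y_c$ and $y_{R^*}^t \le y_c \le y_R^t$, using that $y_{R^*}^t > y_R^b$ by the non-horizontally-nested hypothesis on $R$), is empty, and contains no third rectangle's top edge by the extremal choice of $R^*$.

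Next, assuming $x_R^r = x_{R_0}^l$ (so $p = c$), I would rule out $y_R^t > y_c$. Using integrality, I would shoot a rightward horizontal ray from $(x_R^r, y_c + 1/2)$, which lies strictly inside $R$'s right edge. Since $h$ at $y = y_c$ was empty, any rectangle whose interior meets this slightly raised ray must have bottom edge at $y \ge y_c$; by maximality of $R$ rightward, combined with the non-horizontally-nested hypothesis, the first such blocker $R'$ has $y_{R'}^b = y_c$ and $x_{R'}^l > x_R^r$. Then the bottom-left corner $(x_{R'}^l, y_c)$ of $R'$ is a second corner seen by $R$ on its right, since the connecting segment is a subsegment of $h$, hence empty; picking $R'$ with leftmost left edge ensures the segment contains no third rectangle's bottom edge.

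The main obstacle will be the careful verification, in every sub-case, that the identified second corner satisfies all three clauses of the ``seen'' definition: empty segment interior, the chosen endpoint on $R$'s right edge is not its bottom-right corner, and no third rectangle's top (resp.~bottom) edge lies entirely in the segment. The degenerate configurations---several candidate blockers sharing a height with $\hat R_{0}$, or blocker-corners lying precisely on the boundary of $R$---would be resolved by picking the ``right'' candidate and using the non-horizontally-nested hypothesis together with the integrality of coordinates to rule out the remaining cases.
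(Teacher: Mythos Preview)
Your approach is different from the paper's, which simply points back to the case analysis in the proof of Lemma~\ref{lemma:charging_process}: that analysis already shows that the only configurations in which $R$ sees a single corner on its right are exactly case~1 with $y_t=y'_t$, $y'_b<y_b$ and case~2a with $y_b=y'_b$, $y_t<y'_t$, and in both one has $x_r=x'_l$. So the paper gets the claim essentially for free from work already done. Your direct ``find a second corner'' strategy is a reasonable alternative, but the plan as written has a real gap.

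In your first step (ruling out $x_R^r<x_{\hat R_0}^l$) you use maximality of $\hat R_0$ to produce a blocker $R^*$ with $x_{R^*}^r=x_{\hat R_0}^l$ and then assert $y_{R^*}^t>y_R^b$ ``by the non-horizontally-nested hypothesis on $R$''. That implication does not hold: non-horizontal-nestedness is a statement about the edges of $R$, not about the height of a blocker of $\hat R_0$, and nothing prevents $R^*$ from sitting entirely below $R$'s bottom edge. The iterative fix you sketch does not repair this, since each iterate can again lie below $y_R^b$. The clean way to run your strategy is to use maximality of $R$ rather than of $\hat R_0$: some rectangle $R^\dagger$ satisfies $x_{R^\dagger}^l=x_R^r$ with $y$-range overlapping $(y_R^b,y_R^t)$; since $R$ is not horizontally nested, one of the two left corners of $R^\dagger$ lies in $(y_R^b,y_R^t]$ (resp.\ $[y_R^b,y_R^t)$), and the degenerate witnessing segment is a single point, so $R$ sees it. As $x_{R^\dagger}^l=x_R^r<x_{\hat R_0}^l$ this corner is distinct from $c$, giving the contradiction immediately. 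Similarly, in your second step the appeal to ``emptiness of $h$'' is vacuous once $x_R^r=x_{\hat R_0}^l$ (then $h$ is a single point); the right tool there is maximality of $\hat R_0$ upward, which yields a rectangle with bottom edge at height $y_c$ whose bottom-left corner $R$ sees.
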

\begin{proof}
Let $y_{b}$ and $y_{t}$ denote the $y$-coordinates of the bottom
and top edge of $R$, respectively, and let $y'_{b}$ and $y'_{t}$ denote the $y$-coordinates
of the bottom and top edge of $\hat R_{0}$, respectively. Additionally, let $x_{r}$ denote
the $x$-coordinate of the right edge of $R$ and let $x'_{l}$ denote
the $x$-coordinate of the left edge of $\hat R_{0}$. Observe that in the proof of Lemma~\ref{lemma:charging_process}, the only cases when $R$ sees only one corner to its right are case $1$ when $y_{t}=y'_{t}$ and $y'_{b}< y_b$, and case $2a$ when $y_{t}<y'_{t}$ and $y_{b}=y'_{b}$. In both these cases we have that $x_{r}=x'_{l}$, which completes the proof of the claim.   
\end{proof}

We assume without loss of generality that the unique corner $c$ that $R$ sees on its right is the top-left corner of $\hat R_{0}$, as the other case can be handled symmetrically. Let $h$ be the rightwards horizontal ray from the top-right vertex $p$ of $R$ and let $\hat R_{b}\in\OPT'(P_{v})$ be the first rectangle such that $h$ intersects the interior of the left edge of $\hat R_{b}$. Notice that such a rectangle must exist, as otherwise the top edge of $R$ would be contained in a 1-fence emerging from a right vertical edge of $P$. This would imply that $R$ is $3$-protected and in particular, it would be $7$-protected, contradicting property (4) of Lemma~\ref{lemma:partitionningLemma}. Let $p'$ be the intersection of $h$ with the left edge of $\hat{R}_b$. Let $H^r_R$ be the set of all rectangles in $\OPT'(P_{v})$ that have their top edge contained in the horizontal line segment $pp'$.
\begin{enumerate}
\item Suppose there exists in $H^r_R$ a rectangle $\hat R_{a}$ that is horizontally nested. Notice that then, the right edge of $\hat R_{a}$ has to be contained in the left edge of $\hat R_{b}$. In this case, give the second token to the top-left corner of $\hat R_{a}$.  In particular, in the case when $\hat R_{a}=\hat R_{0}$, the top-left corner of $\hat R_{a}$ has received two tokens from $R$. 
\item If none of the rectangles in $H^r_R$ are horizontally nested, then give the second token to the bottom-left corner of $\hat R_{b}$.
(In the case when the bottom edges of $R$ and $\hat R_{0}$ have the same $y$-coordinate, we charge the top-left corner of $\hat R_{b}$).
\end{enumerate}     
\end{enumerate}

We say that the corners of rectangles $\hat R_{0}, \hat R_{a}$ or $\hat R_{b}$ in $\OPT'$ that received a token from $R$ were \emph{charged by} $R$. 
We perform this token distribution for each non-horizontally nested rectangle $R\in OPT'$ that is intersected during the recursive partitioning. See Figure \ref{fig:example_token_distribution} for an example of this token distribution. 

\begin{figure}[h]
\centering
 \includegraphics[page=31,trim=80 180 45 70, clip, width=1\textwidth]{figures/figures.pdf}
\caption{An example of the charging process for rectangles intersected (tilling pattern) by the vertical line segment (dashed yellow line) of a cut (yellow) of a polygon $P$ during the partitioning process. Rectangles in $\OPT'(P)$ that are horizontally nested are displayed in red. For each rectangle $R_i$ ($1\le i \le 5$) that is intersected by $\ell$ and not nested horizontally, we introduce 4 tokens (marked by the number $i$); and we distribute two tokens to corners of rectangles of $\OPT'(P)$ on the right, and two on the left. In further iterations of the partitioning process, rectangles that has been charged will be $3$-protected (Lemma \ref{lemma:properties_chaging_scheme_bis}) and thus will never be intersected. Each corner of a rectangle is charged with at most one token if it is not horizontally nested, and with at most two tokens if it is horizontally nested (Lemma \ref{lemma:corner_charged_at_most_once_3}). Additionally, if the left corner of a rectangle is charged (for example the bottom-left corner of $\hat R$), then all the rectangles on the right of $\hat R$ that may potentially charged right corners of $\hat R$ (such as $R'$), will be protected  by $7$-fences (green line segments) emerging from $\ell$, and thus right corners of $\hat R$ will not be charged later (Lemma \ref{lemma:at_most_two_corners_charged_3}).
} 
\label{fig:example_token_distribution}
\end{figure}

\paragraph{Analysis of the charging scheme.} 
We now analyze our charging scheme. We show that following the charging scheme mentioned above, we can guarantee that the number of tokens on a non-horizontally nested rectangle and a horizontally nested rectangle is at most two and four respectively. We show this by first proving that there can be at most one token on a corner of a non-horizontal rectangle and at most two tokens on a corner of a horizontally nested rectangle. Then we prove that only one of the two corners of any horizontal edge of a rectangle can be charged.    

To prove these results, we first need to establish the following technical properties. 
  
\begin{lemma}
\label{lemma:properties_chaging_scheme_bis}
 Let $R\in \OPT'(P_{v})$ be a non-horizontally nested rectangle intersected during the partitioning of $P_v$ for some node $v\in T$, and let $c$ be the left corner of a rectangle $\hat{R}\in\OPT'(P_{v})$ charged by $R$. 
 Then we have that: 
 \begin{enumerate}
 \item[(1)] $c$ is seen by $R$ or by the rightmost rectangle on the left of $\hat{R}$ in $H_R^r$;
 \item[(2)] if $v'\in T$ denotes the child of $v$ such that $\hat{R}\subseteq P_{v'}\subset P_v$, then $\hat{R}$ and all the rectangles in $H^r_R$ are protected by $3$-fences emerging from a left vertical edge of $P_{v'}$. 
 \end{enumerate}
\end{lemma}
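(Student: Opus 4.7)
I would prove the two parts separately after first fixing the geometry. By Lemma~\ref{lemma:partitionningLemma}, the cut $C$ of $P_v$ contains a single vertical line segment $\ell_v$ which crosses $R$; the relevant portion of $\ell_v$ becomes a left vertical edge $e^*$ of the child polygon $P_{v'}$ containing $\hat{R}$. The charged rectangle $\hat{R}$ must be one of $\hat{R}_0$, $\hat{R}_a$, or $\hat{R}_b$ from the charging scheme, and the charged corner $c$ is a left corner of $\hat{R}$.

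For part (2), the crucial observation is that the horizontal segment $pp'$ contains no point in the interior of any rectangle of $\OPT'(P_v)$, directly from the definition of $\hat{R}_b$ as the first rectangle whose left-edge interior is hit by the ray $h$. Consequently, the horizontal segment from $e^*$ to $p'$ that lies on the same horizontal line as $pp'$ is a $1$-fence anchored on $e^*$, and the top edge of every rectangle of $H^r_R$ is contained in this $1$-fence; by Observation~\ref{obs:necessary_condition_tau_protected} each such rectangle is $3$-protected in $P_{v'}$. For $\hat{R}_b$ itself I would exhibit two explicit $3$-fences anchored on $e^*$: namely (i) the horizontal segment along $pp'$ to $p'$, followed by (ii) the vertical sub-segment of $\hat{R}_b$'s left edge from $p'$ up to the top-left corner, followed by (iii) the top edge of $\hat{R}_b$; and the analogous downward construction producing a $3$-fence containing the bottom edge. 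Each is an $x$-monotone concatenation of at most three axis-parallel segments lying in $P_{v'}$ and avoiding all rectangle interiors, so $\hat{R}_b$ is $3$-protected.

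For part (1), I would do a case analysis on which of $\hat{R}_0$, $\hat{R}_a$, $\hat{R}_b$ the rectangle $\hat{R}$ is. If $\hat{R} = \hat{R}_0$ then $c$ is seen by $R$ by the very definition of $\hat{R}_0$ in the charging scheme. If $\hat{R} = \hat{R}_a \neq \hat{R}_0$, I would use that the rectangles of $H^r_R$ are pairwise disjoint and all have their top edges on $pp'$, so they are linearly ordered along $pp'$ from left to right; let $R^*$ denote the predecessor of $\hat{R}_a$ in this ordering (the rightmost rectangle of $H^r_R$ to the left of $\hat{R}_a$). The horizontal segment from the top-right corner of $R^*$ to the top-left corner of $\hat{R}_a$ lies on $pp'$ and therefore satisfies the three conditions of Definition~\ref{def:seeing}. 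For $\hat{R} = \hat{R}_b$ I would proceed symmetrically, taking $R^*$ to be the rightmost rectangle of $H^r_R$ (or $R$ itself if $H^r_R$ is empty) and using the geometry at $p'$ to build the required horizontal witness segment to the charged bottom-left (or top-left) corner of $\hat{R}_b$.

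The main obstacle I anticipate is the last condition of Definition~\ref{def:seeing}, namely that the witnessing horizontal segment must not \emph{contain} the top edge of another rectangle of $\OPT'$. This will follow from the linear ordering along $pp'$: between consecutive rectangles of $H^r_R$ there is by construction no other rectangle whose top edge lies on $pp'$, and no rectangle interior crosses $pp'$ by the choice of $\hat{R}_b$. A second subtlety is that when charging the bottom-left corner of $\hat{R}_b$ one must argue that the height of this corner lies within $R^*$'s $y$-range so that the witnessing segment starts on $R^*$'s right edge; this is precisely the content of the ``bottom edges of $R$ and $\hat{R}_0$ have different $y$-coordinates'' dichotomy used in the charging scheme to decide between top-left and bottom-left of $\hat{R}_b$.
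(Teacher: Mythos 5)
Your plan follows essentially the same route as the paper. Both proofs hinge on the horizontal segment $\overline{pp'}$ (the ray $h$ from the top-right corner $p$ of $R$ to the first left-edge interior point $p'$ of $\hat R_b$) as the backbone: it avoids rectangle interiors and contains the top edges of every member of $H^r_R$, so its portion inside $P_{v'}$ is a $1$-fence anchored on the left vertical edge that $\ell_v$ becomes; Observation~\ref{obs:necessary_condition_tau_protected} then turns the $1$-fence into $3$-protection for all of $H^r_R$, while for $\hat R_b$ you hang two explicit $3$-fences off $p'$ (one up to the top edge, one down to the bottom edge), which is exactly what the paper does by extending the $\overline{pp'} \cup \overline{p'c}$ path. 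For part~(1) the paper argues via the rightmost rectangle $\hat R_a$ of $H^r_R$ and re-uses the case analysis of Lemma~\ref{lemma:charging_process} to show $\hat R_a$ sees $c$; your predecessor-in-the-linear-order argument is the same idea. The only presentational differences are that you prove~(2) first and organize~(1) by which of $\hat R_0$, $\hat R_a$, $\hat R_b$ plays the role of $\hat R$, whereas the paper groups the first two cases as ``$c$ is seen by $R$ or $\hat R \in H^r_R$''; this is cosmetic.

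Two small points you should nail down when turning the plan into a full argument. First, you need to check that $H^r_R$ is always nonempty in case~2; this holds because the claim in the charging scheme forces the top edge of $\hat R_0$ to lie on $\overline{pp'}$ with $\hat R_0 \neq \hat R_b$, so $\hat R_0 \in H^r_R$, which makes your ``or $R$ itself if $H^r_R$ is empty'' clause moot. Second, for case~2(b) the paper leans on the case distinction of Lemma~\ref{lemma:charging_process} to conclude that $\hat R_a$ (which is known to be non-horizontally nested in this subcase) sees precisely the charged corner of $\hat R_b$; the degenerate subcase where the bottom edges of $R$ and $\hat R_0$ coincide (so the top-left corner of $\hat R_b$ is charged instead) needs to be addressed explicitly — your note that this is the ``content of the dichotomy'' acknowledges the issue but does not yet close it, so make sure the witness segment you build for that corner actually emanates from $\hat R_a$'s right edge within its $y$-range.
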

\begin{proof}
The proof of $(1)$ follows directly if $c$ is seen by $R$ or $\hat{R}\in H^{r}_{R}$ because the right most rectangle to the left of $\hat{R}$ in $H^{r}_{R}$ will belong to $\{R\}\cup H^{r}_{R}$ and it is easy to see that this rectangle will see the corner $c$ of $\hat{R}$ from the proof of Lemma \ref{lemma:charging_process}. If both conditions do not happen, then corner $c$ is charged via case $2(b)$ of the charging scheme. Let us assume without loss of generality that $c$ is the bottom left corner of $\hat{R}$. The rightward ray from the top right vertex $p$ of $R$ hits the interior of the left edge of $\hat{R}$ at $p'$ and the line segment $h$ joining $p$ and $p'$ contains the top edges of all the rectangles in $H^{r}_{R}$ by the way we find $\hat{R}$ and $c$ in the charging scheme. The fact that we charged $c$ implies that the rightmost rectangle $\hat{R}_{a}$ in $H^{r}_{R}$ is non-horizontally nested. Since the horizontal line segment $h'$ joining the top right vertex of $\hat{R}_{a}$ and $p'$ on the left edge of $\hat{R}$ does not intersect any rectangle in $\OPT'(P_{v})$ and also does not contain the top edges of any rectangle in $\OPT'(P_{v})$, we can say that $\hat{R}_{a}$ has to see the corner $c$ of $\hat{R}$ following a similar argument as in the proof of Lemma \ref{lemma:charging_process}.   

The proof of $(2)$ will follow from the above proof of $(1)$ because, if $c$ is seen by $R$ or $\hat{R}\in H^{r}_{R}$, then after partitioning $P_{v}$, the horizontal line segment connecting $c$ and the left boundary of $P_{v'}$ does not intersect any rectangle in $\OPT'(P_{v'})$. This means that $c$ is contained in a $1$-fence and hence $\hat{R}$ is $3$-protected. If $c$ is seen by the rightmost rectangle $\hat{R}_{a}$ in $H^{r}_{R}$, then we can say that $c$ is in a $3$-fence that basically connects $p,p'$ and $c$ in that order. Since $p'$ is in the interior of the left edge of $\hat{R}$, we can say that the top edge is also contained in a $3$-fence emerging from the same point as the $3$-fence that contained $c$, proving that $\hat{R}$ is $3$-protected. Rectangles in $H^{r}_{R}$ are protected by $3$-fences because their top edges (assuming $c$ is a top left corner) are contained in the $1$-fence emerging from the point of intersection of the top edge of $R$ with the part of the vertical cut $\ell_{v}$ partitioning $P_{v}$.     
\end{proof}

The last property of the claim ensures that a rectangle that is charged will be $3$-protected after being charged for the first time, and in particular using Lemma \ref{lemma:partition_3apx}(iii) we directly obtain the following property of our charging scheme. 

\begin{lemma}
If a rectangle $\hat R\in \OPT'$ is charged, then $\hat R\in \R'$. 
\label{lemma:charged_implies_protected_3} 
\end{lemma}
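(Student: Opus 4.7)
The plan is to leverage what has already been established: Lemma \ref{lemma:properties_chaging_scheme_bis}(2) shows that as soon as $\hat R$ is charged for the first time (by some rectangle $R$ intersected while partitioning the ancestor polygon $P_v$), the rectangle $\hat R$ is $3$-protected in the child polygon $P_{v'}$ containing it, i.e., both its top and bottom edges lie in $3$-fences emerging from a common vertical edge of $P_{v'}$.

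First, I would observe that being $3$-protected trivially implies being $7$-protected: an $x$-monotone sequence of at most $3$ axis-parallel segments is also an $x$-monotone sequence of at most $7$ such segments, so every $3$-fence is also a $7$-fence. Hence $\hat R$ is $7$-protected in $P_{v'}$.

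Next, I would invoke Lemma \ref{lemma:partition_3apx}(iii): once a rectangle is $7$-protected in some polygon $P_{v'}$ appearing in our recursion, it remains $7$-protected in every descendant polygon, and in particular it belongs to $\R'$. Combining the two steps yields $\hat R \in \R'$, which is what was to be shown.

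There is essentially no technical obstacle here — the structural work is done in Lemmas \ref{lemma:properties_chaging_scheme_bis} and \ref{lemma:partition_3apx}. The only subtlety worth stating explicitly is the inclusion of $3$-fences into $7$-fences, which is immediate from the definition of a $\tau$-fence as a sequence of \emph{at most} $\tau$ line segments.
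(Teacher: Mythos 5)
Your proof is correct and follows essentially the same route as the paper: it applies Lemma~\ref{lemma:properties_chaging_scheme_bis}(2) to conclude that a charged rectangle is $3$-protected (hence $7$-protected, since every $3$-fence is by definition a $7$-fence), and then Lemma~\ref{lemma:partition_3apx}(iii) to conclude membership in $\R'$. The only difference is that you state the $3$-protected $\Rightarrow$ $7$-protected inclusion explicitly, which the paper leaves implicit.
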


We now bound the number of tokens charged to a corner of a rectangle $\hat R\in \R'$. This bound depends on whether $\hat R$ is horizontally nested or not. 

\begin{lemma}
Let $\hat R\in \R'$ be a not horizontally nested (resp. horizontally nested) rectangle. Then, each of its corners is charged with at most {\bf one} token (resp. at most {\bf two} tokens).
\label{lemma:corner_charged_at_most_once_3}
\end{lemma}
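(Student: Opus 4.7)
The plan is to fix a corner $c$ of $\hat R\in\R'$ and enumerate every event that can place a token on $c$. A preliminary observation from inspecting the scheme is that every token distributed ``to the right'' by a rectangle $R$ lands on a top-left or bottom-left corner of another rectangle, and symmetrically for ``to the left''. Hence a left corner of $\hat R$ is only charged by rectangles $R$ lying to the left of $\hat R$, and by symmetry it suffices to treat the case when $c$ is the top-left corner of $\hat R$. The possible events are then: (D) $R$ sees $c$ directly, either through case~1 or through case~2 with $\hat R_0=\hat R$; (A) the second token of case~2a is placed on the top-left of a horizontally-nested $\hat R_a=\hat R\in H^r_R$; or (B) the second token of case~2b is placed on the top-left of $\hat R_b=\hat R$, which the scheme allows only in the degenerate subcase where the bottom edges of $R$ and $\hat R_0$ share a $y$-coordinate. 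Event (A) requires $\hat R$ to be horizontally nested, so for non-horizontally-nested $\hat R$ only (D) and (B) occur.

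First I would prove uniqueness of~(D): if two distinct $R_1,R_2$ on the left of $\hat R$ both see $c$, both have a horizontal witness at height $y_t(\hat R)$ from their right edge to $c$ that meets no open rectangle in $\OPT'$. Because the ``seeing'' definition forbids $p$ from being a bottom-right corner, $y_t(\hat R)$ lies strictly in the open $y$-range of both $R_1$ and $R_2$. Assuming $x_r(R_1)<x_r(R_2)$, the right edge of $R_2$ lies strictly between $R_1$'s right edge and the vertical line $x=x_l(\hat R)$, so the witness of $R_1$ would cross the interior of $R_2$, a contradiction. Thus at most one (D)-token reaches $c$. A symmetric uniqueness argument handles~(B): the tie-breaking condition forces $y_t(R)=y_t(\hat R)$ and pins down the height of $R$, and two distinct rectangles cannot both satisfy this while their rightward horizontal rays reach the interior of the left edge of $\hat R$ without mutual intersection.

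Next I would show that (D) and (B) cannot both occur at $c$: in a (B)-event delivering a top-left-corner token, the ray from the top-right vertex of $R$ runs at height $y_t(\hat R)$ to a point $p'$ on the interior of the left edge of $\hat R$, which is essentially the witness of a~(D)-event and can be extended along the top edge of $\hat R$ to $c$ without meeting any $\OPT'$ rectangle (since no rectangle in $H^r_R$ is horizontally nested in case~(b), and $\hat R$ is not horizontally nested in the base case). This collapses~(B) into~(D), proving the ``at most one token'' bound for non-horizontally-nested $\hat R$. For horizontally nested $\hat R$, I would prove uniqueness of (A) by the same type of witness argument applied to the horizontal ray of $R$ and the top edge of $\hat R_a=\hat R$, and observe that (A) precludes (B) on the same corner, because the presence of a horizontally-nested rectangle in $H^r_R$ triggers subcase~(a), not~(b). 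Combined with (D)-uniqueness this yields the ``at most two tokens'' bound.

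The hard part will be the rigorous case-analysis ruling out coexistence of the three event types, because the witnesses involved have slightly different shapes (a horizontal segment for~(D), a horizontal ray plus an auxiliary rectangle for~(A), a horizontal ray into the interior of a left edge for~(B)), and the scheme's tie-breaking rules (top-left vs.\ bottom-left on $\hat R_b$, (a) vs.\ (b) in case~2) spawn several subcases. Handling them cleanly requires careful use of the maximality of $\OPT'$, the ``no top edge contained'' clause in the definition of seeing, and the precise alignment conditions that characterize the configurations in which $R$ sees a unique corner, together with the symmetric treatment of the other three corner types.
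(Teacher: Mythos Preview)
Your approach has a fundamental gap: you try to prove the token bounds by a purely static geometric argument on $\OPT'$, but this cannot work for the indirect tokens. Concretely, take three rectangles $R_1,R_2,\hat R$ in $\OPT'$, left to right, all sharing the same top-edge height, with $\hat R$ horizontally nested and $R_1,R_2$ not. Then $\hat R\in H^r_{R_1}\cap H^r_{R_2}$, and both $R_1$ and $R_2$ are in a geometric configuration that would deliver a type-(A) token to the top-left corner of $\hat R$. Nothing in the static picture of $\OPT'$ rules this out, so a ``witness-crosses-the-other-rectangle'' argument like your (D)-uniqueness proof cannot succeed here: the witness ray of $R_1$ runs along the common top-edge height and does \emph{not} enter the interior of $R_2$. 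What actually prevents both events from happening is the recursive partitioning: if $R_1$ is intersected first (in $P_v$), then by Lemma~\ref{lemma:properties_chaging_scheme_bis}(2) every rectangle in $H^r_{R_1}$, in particular $R_2$, becomes $3$-protected in the child polygon and can never be intersected later; and if they are in the same $P_v$, a single vertical segment $\ell_v$ cannot hit two rectangles with the same $y$-range. The paper's proof runs exactly this case analysis on $P_v$ versus $P_{v'}$ (equal, nested one way, nested the other way) and uses Lemma~\ref{lemma:properties_chaging_scheme_bis} to derive the contradiction in each case.

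There are also concrete errors in your event analysis. In your (B) case for the top-left corner of $\hat R$, the symmetric subcase of 2(b) shoots the ray $h$ from the \emph{bottom}-right corner of $R$ at height $y_b(R)=y_b(\hat R_0)$, hitting the interior of the left edge of $\hat R=\hat R_b$; this gives $y_b(\hat R)<y_b(R)<y_t(\hat R)$ but does \emph{not} force $y_t(R)=y_t(\hat R)$, so your ``pins down the height'' claim is false. And your collapse of (B) into (D) cannot be right: in case~2 the rectangle $R$ sees exactly one corner on its right, namely a corner of $\hat R_0\neq\hat R_b$, so $R$ certainly does not see the top-left corner of $\hat R_b=\hat R$. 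More generally, even if a (B)-token were ``really'' a (D)-token from the same $R$, this would not prevent a \emph{different} rectangle $R'$ from placing an honest (D)-token on the same corner in a different step of the recursion; again the recursive protection argument is what rules this out. In short, the missing ingredient throughout is the interaction with the tree $T$ and Lemma~\ref{lemma:properties_chaging_scheme_bis}, which you never invoke.
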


\begin{proof}
We first distinguish two types of tokens. If the corner $c$ of a rectangle $\hat R\in \R'$, charged by a rectangle $R\in \OPT'$, is seen by $R$, then we say that the corresponding token is a \emph{direct} token, and that $c$ was \emph{directly charged by $R$}. Otherwise, we say that it is an \emph{indirect} token, and that $c$ was indirectly charged by $R$. For instance, in Figure \ref{fig:token_distribution}, the tokens received by $\hat R_{b}$ in case 2(b) and by $\hat R_{a}$ in case 2(a) top, are indirect tokens. In Figure \ref{fig:example_token_distribution}, direct tokens are shown in green, and indirect tokens are shown in red and blue. 


Since a corner of a rectangle in $\OPT'$ is seen by at most one rectangle in $\OPT'$, a corner cannot be \emph{directly} charged by two different rectangles. In particular, a rectangle $R$ charges a corner $c$ of a rectangle $\hat R$ with one token if either this corner is not the unique corner seen by $R$ on the same side as corner $c$ \ari{(see case 1 in Figure \ref{fig:token_distribution})} or $\hat R$ is non-horizontally nested, and with two tokens if $\hat R$ is horizontally nested and $c$ is the unique corner seen by $R$ to that side (this happens in the case 2(a), when $\hat R_a=\hat R_0$). Moreover, observe that a corner cannot be simultaneously charged directly and indirectly by the same rectangle. We now show that if a corner $c$ of a rectangle $\hat R\in \R'$ is directly charged by a rectangle $R\in \OPT'$, it cannot be also indirectly charged by another rectangle $R'\in \OPT'$.   

For a contradiction, assume that $c$ was directly charged by a rectangle $R$ intersected during the partitioning of $P_v$ for some node $v\in T$ and that $c$ was indirectly charged by a rectangle $R'\neq R$ that was intersected during the partitioning of $P_{v'}$ for some node $v'\in T$. We have either $P_v=P_{v'}$, $P_v\subset P_{v'}$ or $P_{v'}\subset  P_v$. 
Notice that from the first property of Lemma \ref{lemma:properties_chaging_scheme_bis}, it holds that $R\in H^r_{R'}$.  
\begin{itemize}
\item If $P_v=P_{v'}$, then $R$ and $R'$ were intersected by the same vertical line. This is a contradiction with the facts that $R$ is in $H^r_{R'}$ and that all the rectangles in $H^r_{R'}$ lie strictly on the right of $R'$. 
\item If $P_v\subset P_{v'}$, then since $R\in H^r_{R'}$, we know by Lemma \ref{lemma:properties_chaging_scheme_bis} that $R$ is $3$-protected in $P_v$ which contradicts the fact that it was intersected during the partitioning of $P_v$.
\item If $P_{v'}\subset  P_v$, then the leftwards horizontal ray from $c$ must intersect the boundary of $P_{v'}$ before intersecting any rectangles in $P_{v'}$, which contradicts the fact that $R'\in \OPT'(P_{v'})$.
\end{itemize}
Therefore, a corner of a rectangle in $\R'$ cannot be at the same time directly and indirectly charged.  

We now claim that a corner of a rectangle $\hat R\in \R'$ is charged with at most two indirect tokens, and that in this case, $\hat R$ must be horizontally nested. This will conclude the proof of the Lemma. 

To prove this fact, we first distinguish two types of indirect tokens: the token of type (a) charged to the horizontally nested rectangle $\hat R_{a}$ in case 2(a) of the charging process; and the token of type (b), charged to $\hat R_{b}$ in case 2(b). In Figure \ref{fig:example_token_distribution}, indirect tokens of type (a) and (b) are respectively shown in red and blue. 

We show that a corner $c$ of a rectangle $\hat R\in \R'$ cannot be charged with two indirect tokens of the same type. This implies that if a corner is charged with two indirect tokens, one of these indirect tokens is of type (a), which implies that $\hat R$ must be horizontally nested. 
\begin{enumerate}
\item[(a)] Suppose for a contradiction that a left corner $c$ of rectangle $\hat{R}$ has been charged with two tokens of type (a) by two distinct rectangles $R$ (intersected during the partitioning of $P_v$ for some $v\in T$) and $R'$ (intersected during the partitioning of $P_{v'}$ for some $v'\in T$). We assume without loss of generality that $c$ is the top-left corner, and that $P_{v'}\subseteq P_v$. Since $\hat{R}\in H_{R}^{r}$ and $\hat{R}\in H_{R'}^{r}$, there is minimal horizontal line segment connecting the top edges of $R,R'$ and $\hat{R}$ in $P_v$ that 
does not intersect any rectangle in $\OPT'(P_v)$. We obtain a contradiction similarly as before. If $P_{v'}=P_v$ then we obtain a contradiction with the fact that $R$ and $R'$ must be intersected by the same vertical line segment. Otherwise $R'$ was not intersected during the partitioning of $P_v$. In this case, if $R'$ is on the left of $R$, then the leftwards horizontal ray from $c$ must cross the boundary before reaching the top edge of $R'$; if not, then by Lemma \ref{lemma:properties_chaging_scheme_bis} $R'$ must be $3$-protected in $P_{v'}$. 

\item[(b)] 
Suppose a left corner $c$ of a rectangle $\hat{R}$ was charged with two indirect tokens of type (b) by two distinct rectangles $R$ and $R'$. We assume here without loss of generality that $c$ is a bottom-left corner. By Lemma \ref{lemma:properties_chaging_scheme_bis}, we know that $c$ is seen by a rectangle $\hat R_a\in H^r_R\cap H^r_{R'}$. Therefore, $R$ and $R'$ are on the left of $\hat R_a$ and there is an horizontal line segment that contains the top edges of $R, R'$ and $\hat R_a$. Thus, we can argue as in case (a) to obtain a contradiction. 
\end{enumerate}
This finishes the proof. 
\end{proof}

We now show that if a left corner of a rectangle $\hat R\in \R'$ is charged, then no right corner of this rectangle is charged, and vice-versa. Intuitively, by Lemma \ref{lemma:properties_chaging_scheme_bis}, we know that if a left corner of $\hat R$ has been charged, the top edge $\hat R$ will be contained in a 3-fence emerging from the left; by extending this 3-fence to 7-fences, we will protect all rectangles in $\OPT'$ that may potentially charge right corners of $\hat R$. 
\begin{lemma}
At most two corners per rectangle in $\R'$ are charged. 
    \label{lemma:at_most_two_corners_charged_3} 
\end{lemma}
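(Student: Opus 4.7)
The plan is to establish the following dichotomy: if any left corner of $\hat R \in \R'$ is ever charged, then no right corner of $\hat R$ is ever charged. Combined with its symmetric statement and the fact that each rectangle has exactly two left and two right corners, this immediately yields the claim that at most two corners of $\hat R$ are charged.

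Suppose for a contradiction that some left corner $c$ of $\hat R$ is charged by a rectangle $R$ intersected during the partitioning of $P_{v}$, and some right corner $c'$ of $\hat R$ is charged by a rectangle $R'$ intersected during the partitioning of $P_{v'}$. Since $\hat R \subseteq P_v \cap P_{v'}$, the nodes $v$ and $v'$ lie on the same root-to-leaf path of $T$, so without loss of generality $P_{v'} \subseteq P_v$. Let $v''$ denote the child of $v$ whose polygon $P_{v''}$ contains $\hat R$. By Lemma~\ref{lemma:properties_chaging_scheme_bis}(2), immediately after the partitioning of $P_v$ both $\hat R$ and every rectangle of $H^r_R$ are $3$-protected in $P_{v''}$, via pairs of $3$-fences emerging from the left vertical edge of $P_{v''}$ induced by the vertical segment $\ell_v$.

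The key geometric step is to extend these $3$-fences into $7$-fences that, emerging from the same left vertical edge of $P_{v''}$, contain respectively the top and the bottom edges of $R'$; this will show that $R'$ is $7$-protected in $P_{v''}$. Inspect the charging scheme applied (in symmetric form) to $R'$ and $c'$: either $R'$ directly sees $c'$, so there is a horizontal segment from the left edge of $R'$ to $c'$ disjoint from all rectangles of $\OPT'$; or $c'$ is charged through the indirect case $2(b)$, in which case the associated horizontal segment leaves $R'$ leftward past a rectangle $\hat R_a \in H^l_{R'}$ whose relevant horizontal edge is aligned with that of $\hat R$. In either case, one may splice the $3$-fence of $\hat R$ that contains its top (respectively bottom) edge with this horizontal segment, and then route along the boundary of $R'$ using its left edge and its top (respectively bottom) edge, producing an $x$-monotone sequence of at most $3+4=7$ axis-parallel segments contained in $P_{v''}$, disjoint from every rectangle of $\OPT'(P_{v''})$. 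A symmetric construction yields the second $7$-fence, so $R'$ is indeed $7$-protected in $P_{v''}$.

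By Lemma~\ref{lemma:partition_3apx}(iii), $R'$ remains $7$-protected in every descendant of $v''$, and Lemma~\ref{lemma:partitionningLemma}(4) then forbids any vertical cut from intersecting $R'$. This contradicts the assumption that $R'$ is intersected during the partitioning of $P_{v'} \subseteq P_{v''}$. The main obstacle is the careful subcase analysis needed to verify that the extended fences never use more than seven segments and never cross any rectangle of $\OPT'$ or any previously introduced fence; it is precisely for this count that the choice $\tau=7$ in the Partitioning Lemma is critical.
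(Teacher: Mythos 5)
Your proof follows the same general approach as the paper's: invoke Lemma~\ref{lemma:properties_chaging_scheme_bis}(2) to get that $\hat R$ and the rectangles in $H^r_R$ are $3$-protected via fences emerging from $\ell_v$, then extend those fences past $\hat R$ along $h$ and $R'$'s boundary to conclude that $R'$ is $7$-protected, and finally derive a contradiction. (The paper actually builds a single $5$-fence containing one of $R'$'s horizontal edges and then cites Observation~\ref{obs:necessary_condition_tau_protected} to upgrade it to $7$-protection, which is marginally cleaner than constructing both $7$-fences explicitly, but the two are equivalent.)

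However, there is a gap: you never handle the case $v = v'$, i.e., the possibility that the left corner and the right corner of $\hat R$ are charged by two rectangles $R, R'$ intersected by the \emph{same} cut $\ell_v$. In that situation, $P_{v'} = P_v$ is the \emph{parent} of $P_{v''}$, so your final inclusion ``$P_{v'} \subseteq P_{v''}$'' is false and the claimed contradiction with Lemma~\ref{lemma:partition_3apx}(iii) does not apply: $R'$ is cut \emph{before} the polygon $P_{v''}$ in which you established the $7$-protection even exists. The paper dispatches this case up front with a short geometric observation: left corners are charged from the left (by rectangles that lie on the left of $\hat R$), and right corners from the right, so if both $R$ and $R'$ were intersected by the same vertical segment $\ell_v$, then $\hat R$ would have to lie strictly to the right of $\ell_v$ (to receive the first charge) and strictly to the left of $\ell_v$ (to receive the second), which is impossible. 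You need to add this case distinction for the argument to be complete.
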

\begin{proof}
Suppose that the left corner $c$ of a rectangle $\hat R$ was charged by a rectangle $R\in \OPT'$ intersected during the partitioning of $P_v$ for some $v\in T$. First, since we charge left corners on the right and right corners on the left, it is clear that the right corners of $\hat R$ cannot be charged by any rectangle intersected during the partitioning of the same polygon $P_v$. 

Assume for a contradiction that a right corner of $\hat{R}$ was charged by a rectangle $R'$ during the partitioning of $P_{v'}$ for some node $v'\neq v$ such that $\hat R \subset P_{v'}\subset P_v$. 
By Lemma \ref{lemma:properties_chaging_scheme_bis}, we know that
 the top edge of $\hat R$ is contained in a $3$-fence $f$ emerging from a left vertical edge of $P_{v'}$.  Moreover, since $\hat R$ was charged by $R'$, there exists a point $p'$ contained in the right edge of $\hat R$ such that there exists an horizontal line segment $h$ that contains $p'$ and the top or the bottom edge of $R'$, such that $h$ intersects no rectangles in $\OPT'$. We now construct a $5$-fence $f'$ anchored on a left vertical edge of $P_{v'}$ as the union of $f$, the fraction of the right edge of $\hat R$ between its top-right corner and $p'$, and the horizontal line segment $h$. See Figure \ref{fig:example_token_distribution}.  
 Therefore, the top or the bottom edge of $R'$ is contained in a $5$-fence, which implies that it is $7$-protected in $P_{v'}$, and thus contradicts the assumptions that it was intersected during the partitioning of $P_{v'}$. This shows that a rectangle can be charged only on one side. 
\end{proof}

Combining these three lemmas, we can lower bound the size of $\R'$.
\begin{lemma}
It holds that $|\R'|\ge |\OPT'|/3$. 
\end{lemma}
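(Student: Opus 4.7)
The plan is to set up a simple accounting that converts the tokens of the charging scheme into a linear inequality between $|\OPT'|$ and $|\R'|$. Let $\OPT'_n$ denote the non-horizontally nested rectangles in $\OPT'$ and $\OPT'_h$ the horizontally nested ones, and set $\R'_n := \R' \cap \OPT'_n$ and $\R'_h := \R' \cap \OPT'_h$. Since every rectangle in $\OPT' \setminus \R'$ is intersected by some $\ell_v$ (as each cut $C$ produced by Lemma~\ref{lemma:partitionningLemma} contains only one such ``dangerous'' vertical segment), the number of non-horizontally nested rectangles on which the charging scheme is triggered is exactly $|\OPT'_n| - |\R'_n|$.

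Next I would count tokens from both sides. On the distributing side, each intersected non-horizontally nested rectangle places $4$ tokens, so the total number of tokens issued equals $4(|\OPT'_n| - |\R'_n|)$. On the receiving side, by Lemma~\ref{lemma:charged_implies_protected_3} every charged corner belongs to a rectangle in $\R'$, so all issued tokens land on rectangles in $\R'$. Combining Lemma~\ref{lemma:corner_charged_at_most_once_3} and Lemma~\ref{lemma:at_most_two_corners_charged_3}, a rectangle $\hat R \in \R'_n$ receives at most $2 \cdot 1 = 2$ tokens while a rectangle $\hat R \in \R'_h$ receives at most $2 \cdot 2 = 4$ tokens. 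Thus the total number of tokens received is at most $2|\R'_n| + 4|\R'_h|$, yielding
\begin{equation*}
4(|\OPT'_n| - |\R'_n|) \;\le\; 2|\R'_n| + 4|\R'_h|.
\end{equation*}

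Rearranging gives $|\OPT'_n| \le \tfrac{3}{2}|\R'_n| + |\R'_h| \le \tfrac{3}{2}(|\R'_n| + |\R'_h|) = \tfrac{3}{2}|\R'|$, hence $|\R'| \ge \tfrac{2}{3}|\OPT'_n|$. Since we assumed w.l.o.g.~that at most half of the rectangles in $\OPT'$ are horizontally nested, $|\OPT'_n| \ge |\OPT'|/2$, and therefore $|\R'| \ge |\OPT'|/3$ as desired. The main conceptual subtlety, already handled by the three lemmas above, is making sure that each corner of $\hat R$ receives at most the asserted number of tokens; what remains here is only the bookkeeping described above, so no further obstacle arises in proving this lemma.
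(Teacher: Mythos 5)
Your proof is correct and follows essentially the same route as the paper: both rest on the same three lemmas (Lemmas~\ref{lemma:charged_implies_protected_3}, \ref{lemma:corner_charged_at_most_once_3}, \ref{lemma:at_most_two_corners_charged_3}) and derive exactly the same token-counting inequality, namely $|\OPT'_n| - |\R'_n| \le |\R'_h| + \tfrac{1}{2}|\R'_n|$, which is Equation~(\ref{eq:2}) of the paper in slightly different notation. Your final algebra (directly bounding $|\OPT'_n| \le \tfrac32 |\R'|$ and then invoking $|\OPT'_n| \ge |\OPT'|/2$) is a bit more streamlined than the paper's chain of inequalities culminating in $\tfrac32|\I| \le |\OPT'|$, but it is the same argument.
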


\begin{proof}
 Let us define $\I:= \OPT'\setminus \R'$, the set of rectangles that were intersected during the recursive partitioning process. Let $\OPT'_h\subset \OPT'$ denote the set of rectangles in $\OPT'$ that are horizontally nested. By our initial assumption, we have
\begin{equation}
|\OPT'_h|\le \frac{1}{2}|\OPT'|
\label{eq:1}
\end{equation} 
By Lemma \ref{lemma:charged_implies_protected_3}, we know that all the rectangles in $\OPT'$ that are charged are in $\R'$. Combining lemmas \ref{lemma:corner_charged_at_most_once_3} and \ref{lemma:at_most_two_corners_charged_3}, we know that each rectangle in $\R'\cap \OPT'_h$ is charged with at most $4$ tokens, i.e., for a total fractional charge of $1$, and each rectangle in $\R'\setminus \OPT'_h$ is charged with at most $2$ tokens, i.e., for a total fractional charge of $1/2$. Thus, we deduce that:
\begin{equation}
|\I\setminus \OPT'_h|\le |\R'\cap \OPT'_h|+\frac{1}{2}|\R'\setminus \OPT'_h|
\label{eq:2}
\end{equation}

Now we have, 
\begin{align*}
\frac{3}{2}|\I|&= \frac{3}{2}|\I\setminus \OPT'_h| + \frac{3}{2}|\I\cap \OPT'_h| \\
&\le |\R'\cap \OPT'_h|+\frac{1}{2}|\R'\setminus \OPT'_h| + \frac{1}{2}|\I\setminus \OPT'_h| + \frac{3}{2}|\I\cap \OPT'_h| \\
&=  |\R'\cap \OPT'_h| +\frac{1}{2} \big(|\OPT'|-|\OPT'_h|-|\I\setminus \OPT'_h|\big)+ \frac{1}{2}|\I\setminus \OPT'_h| + \frac{3}{2}|\I\cap \OPT'_h| \\
&= \frac{1}{2}|\OPT'|+ \frac{1}{2}|\R'\cap \OPT'_h|+ |\I\cap \OPT'_h|\\
&\le  \frac{1}{2}|\OPT'|+ |\OPT'_h|\\
&\le \frac{1}{2}|\OPT'|+ \frac{1}{2}|\OPT'|=|\OPT'|,
\end{align*}
where the first and the last inequalities follow  from equations (\ref{eq:2}) and (\ref{eq:1}), respectively. Thus, at most two thirds of the rectangles in $\OPT'$ are intersected during the recursive partitioning process, and in particular $|\R'|\ge \frac{1}{3}|\OPT'|=\frac{1}{3}|\OPT|$. 
\end{proof}

This concludes the proof of Theorem~\ref{theorem:3apx}.

\color{black}

\section{An improved $(2+\varepsilon)$ approximation}\label{sec:2_plus_eps_apx}
In this section, we build on the ideas from Section \ref{sec:3apx}.  We show that by using the same dynamic program and only making slight modifications to the definition of nesting and charging scheme, we can push the approximation factor arbitrarily close to $2$ by using a larger parameter $k$ of the dynamic program. Let us assume that $\frac{1}{\varepsilon}\in \mathbb{N}$ whenever we refer to $\varepsilon$ below.

\begin{defn}
	A rectangle $R\in \OPT'$ is said to be \emph{horizontally nice} (resp. \emph{vertically nice}) if $R$ sees the \emph{bottom left} (resp. \emph{top right}) corner of some rectangle in $\OPT'$ to its right (resp. bottom) or its bottom edge (resp. right edge) is contained in the boundary of $S$.   
\end{defn} 

The below Observation extends Lemma~\ref{lemma:charging_process} to a more general result without any assumptions about the rectangle whether it is protected or non-horizontally nested.
\begin{obs}
	\label{obs:lemma:charging_process_extension}
	For any rectangle $R\in \OPT'$, if the horizontal line segment
	$h$ that connects its top-right corner to a right vertical edge $e'$
	of $S$ 
	\begin{enumerate}
		\item Intersects a rectangle in $\OPT'$ or contains the top edge of a rectangle in $\OPT'$, then if $R'\in \OPT'$ be the rectangle intersected by $h$, or that has its top side
		intersected by $h$, that is the closest to $R$. Then either $R$ sees the bottom left corner of $R'$ or $y_{t}\leq y_{t}', y_{b}' < y_{b} ,x_{r}=x_{l}'$, where $y_{t},y_{t}',y_{b},y_{b}'$ are the $y$-coordinates of the top edges of $R, R'$ and bottom edges of $R, R'$, respectively, and $x_{r}, x_{l}'$ are the $x$-coordinates of the right and left edges of $R$ and $R'$, respectively.
		\item Does not intersect any rectangle nor contains the top edge of any rectangle in $\OPT'$, then the right edge of $R$ is contained in the right edge of $S$
	\end{enumerate}
\end{obs}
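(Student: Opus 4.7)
The plan is to handle the two cases of the observation separately, relying on the maximality of $R$ and of other rectangles in $\OPT'$.

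For case 2, I would argue by contradiction. Suppose the right edge of $R$ is not contained in the right edge of $S$. Then by maximality of $R$, there is a blocker $R^{*}\in\OPT'$ with $x_{l}^{*}=x_{r}$ whose $y$-range overlaps that of $R$. A quick dichotomy shows $y_{t}^{*}<y_{t}$: if $y_{t}^{*}>y_{t}$ then $h$ enters the interior of $R^{*}$, and if $y_{t}^{*}=y_{t}$ then $h$ contains the top edge of $R^{*}$, both contradicting the hypothesis of case 2. Next, using the maximality of $R^{*}$, its top edge is blocked from above by a rectangle $R^{(1)}$ with $y_{b}^{(1)}=y_{t}^{*}$. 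Since $R^{(1)}$'s $y$-range overlaps $R$'s and the two rectangles are disjoint, we get $x_{l}^{(1)}\ge x_{r}$, so $R^{(1)}$ lies in the $x$-range swept by $h$. Applying the same dichotomy yields $y_{t}^{(1)}<y_{t}$, and iterating produces a strictly increasing chain of top $y$-coordinates, all in $(y_{t}^{*}, y_{t})$. Since $\OPT'$ is finite and coordinates are integers, the chain must terminate at some rectangle whose upward blocker has top edge at least $y_{t}$, giving the desired contradiction.

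For case 1, I would let $R'\in\OPT'$ be a rectangle realizing the hypothesis (either intersected by $h$ or with its top edge contained in $h$) that minimizes $x_{l}'$. Preliminary observations give $y_{t}'\ge y_{t}$, $y_{b}'<y_{t}$, and $x_{l}'\ge x_{r}$ (the last by disjointness of $R$ and $R'$, since their $y$-ranges overlap). Then I would split on $y_{b}'$ versus $y_{b}$. If $y_{b}'\ge y_{b}$, the horizontal segment $h''$ at height $y_{b}'$ from $(x_{r},y_{b}')$ to $(x_{l}',y_{b}')$ is a candidate witness that $R$ sees the bottom-left corner of $R'$; minimality of $x_{l}'$ rules out any intervening rectangle whose interior or bottom edge obstructs $h''$, because such an obstructing rectangle would, through its extension upward to height $y_{t}$, create an obstruction of $h$ at an $x$-coordinate strictly less than $x_{l}'$. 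If instead $y_{b}'<y_{b}$, the point $(x_{r},y_{b}')$ is not on $R$'s right edge and $R$ cannot see the bottom-left corner of $R'$ in the required sense; I would then prove $x_{l}'=x_{r}$, yielding the coordinate pattern $y_{t}\le y_{t}'$, $y_{b}'<y_{b}$, $x_{r}=x_{l}'$. Suppose for contradiction $x_{l}'>x_{r}$: the blocker $R^{*}$ of $R$'s rightward extension satisfies $x_{l}^{*}=x_{r}<x_{l}'$; if $y_{t}^{*}\ge y_{t}$, then $R^{*}$ obstructs $h$ strictly before $R'$, contradicting minimality of $x_{l}'$, so $y_{t}^{*}<y_{t}$. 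Running the same upward chain as in case 2 eventually produces a rectangle whose top edge is at least $y_{t}$ and whose left edge lies in $[x_{r},x_{l}')$, again contradicting the minimality of $x_{l}'$.

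The main obstacle I anticipate is the bookkeeping of the upward chain of blockers (used both in case 2 and in the subcase $y_{b}'<y_{b}$ of case 1), in particular certifying at each iteration that every link stays in the $x$-range $[x_{r},2n-1]$ so that the terminal rectangle (once its top edge reaches $y_{t}$) genuinely obstructs $h$ strictly to the left of $x_{l}'$. This requires a careful, repeated invocation of disjointness from $R$, exploiting that every link of the chain shares some $y$-interval with $R$.
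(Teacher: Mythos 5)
Your approach takes a genuinely different route from the paper's, which splits on whether $R$ is horizontally nested, reads the coordinate pattern directly from the definition of nesting in one branch, and reuses the case analysis from the proof of Lemma~\ref{lemma:charging_process} in the other; you instead argue directly from maximality of the rectangles in $\OPT'$ via a chain of upward blockers. Your case~2 is sound: there the chain only needs the lower bound $x_l^{(i)}\ge x_r$ (from disjointness with $R$, since each link's $y$-range overlaps $R$'s) for the terminal blocker to obstruct $h$, because $h$ reaches the right edge of $S$, and the strictly increasing, bounded sequence of integers $y_t^{(i)}$ forces a contradiction.

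In case~1 there is a genuine gap. Both subcases end with the claim that the chain ``eventually produces a rectangle whose top edge is at least $y_t$ and whose left edge lies in $[x_r,x_l')$'', yet you only ever invoke disjointness from $R$, which gives the lower bound $x_l^{(i)}\ge x_r$; indeed your ``main obstacle'' paragraph frames the required bookkeeping solely as keeping the chain in $[x_r,2n-1]$. That alone does not prevent the chain from drifting rightward past $x_l'$ (each link's $x$-range merely overlaps its predecessor's), and a terminal blocker with $x_l^{(j)}\ge x_l'$ obstructs $h$ \emph{behind} $R'$, yielding no contradiction with the minimality of $x_l'$. The missing step is disjointness from $R'$: in the subcase $y_b'<y_b$ one has $y_b'<y_b<y_b^{(i)}=y_t^{(i-1)}<y_t\le y_t'$ for every link with $i\ge 1$, so each $R^{(i)}$'s $y$-range overlaps $(y_b',y_t')$ and disjointness from $R'$ forces $x_r^{(i)}\le x_l'$, confining the entire chain to $[x_r,x_l']$ and producing the contradiction you assert. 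A similar overlap with $R'$ must be certified for the chain launched from an obstructor of $h''$ in the subcase $y_b'\ge y_b$. Without this second, equally essential disjointness invocation --- from $R'$, not from $R$ --- the argument as written does not close.
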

In the next proposition, we show that any rectangle in the optimal solution is either horizontally nice or vertically nice. 

\begin{prop}
	\label{prop:nice}
	A rectangle $R\in \OPT'$ has to be either horizontally nice or vertically nice. 
\end{prop}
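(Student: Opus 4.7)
The argument proceeds by contraposition: assume $R\in\OPT'$ is not horizontally nice, and show $R$ is vertically nice. By definition, $R$ neither sees the bottom-left corner of any rectangle on its right nor has its bottom edge on the boundary of $S$. Apply Observation~\ref{obs:lemma:charging_process_extension} to the horizontal segment $h$ from the top-right corner of $R$ to the right edge of $S$. In case~(2) of the observation, $R$'s right edge lies on the boundary of $S$, so $R$ is vertically nice. In case~(1), let $R'\in\OPT'$ be the closest rectangle intersected by $h$ or having its top edge contained in $h$; the alternative that $R$ sees the bottom-left corner of $R'$ contradicts the assumption, so we must have $y_t\le y_t'$, $y_b'<y_b$ and $x_r=x_l'$, i.e.\ $R'$ touches $R$ along its right edge and extends strictly below~$R$.

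Next I would exploit the maximality of $R$. Since the bottom edge of $R$ is not on the boundary of $S$, there exists a rectangle $R^*\in\OPT'$ with $y_t^*=y_b$ whose $x$-range meets $(x_l,x_r)$. I split on the position of the right edge $x_r^*$ of $R^*$. If some such $R^*$ satisfies $x_r^*\in(x_l,x_r)$, then the top-right corner of $R^*$ is an interior point of $R$'s bottom edge, and the length-zero vertical segment from that point witnesses that $R$ sees this top-right corner below (its endpoint is not the excluded bottom-left corner of $R$), so $R$ is vertically nice. Otherwise every such $R^*$ has $x_r^*\ge x_r$; if in addition $x_r^*>x_r$, then $R^*$ and $R'$ would overlap both in the $x$-range $(x_r,\min(x_r^*,x_r'))$ and in the $y$-range $(\max(y_b^*,y_b'),y_b)$ (the latter non-empty since $y_b'<y_b$ and $y_b^*<y_b$), contradicting the disjointness of rectangles in $\OPT'$.

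The remaining case is $x_r^*=x_r$, in which the top-right corner of $R^*$ coincides with $R$'s bottom-right corner $(x_r,y_b)$. I still claim $R$ sees this corner: the degenerate vertical segment at $p=(x_r,y_b)$ lies on the bottom edge of $R$, $p$ is not the excluded bottom-left corner, and its empty interior trivially satisfies the non-intersection conditions. Hence $R$ is vertically nice in this case as well. The main subtlety to spell out carefully is that, under the symmetric definition of $R$ \emph{seeing a top-right corner below}, the excluded starting point on $R$'s bottom edge is the bottom-left corner (not the bottom-right), which is exactly what makes these degenerate witnesses legitimate and lets the final case go through.
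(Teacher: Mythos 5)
Your proof is correct, but it takes a genuinely different route from the paper's. The paper argues by contradiction from the hypothesis that $R$ is \emph{neither} horizontally nor vertically nice, applies Observation~\ref{obs:lemma:charging_process_extension} (and its $90^\circ$-rotated counterpart) to conclude that the bottom-right corner of $R$ lies simultaneously in the interior of a vertical edge of some $R'$ and in the interior of a horizontal edge of some $R''$, and then derives the same kind of open-overlap contradiction as in Proposition~\ref{prop:not-both-nested}. You instead argue by contraposition, use the Observation only once (on $h$, for the ``not horizontally nice'' side), and then bring in the \emph{maximality} of $R$: the rectangle $R^*$ supporting $R$'s bottom edge is used to exhibit a concrete top-right corner that $R$ sees below, except in the subcase $x_r^*>x_r$ where you instead produce the same overlap contradiction between $R^*$ and $R'$ that the paper derives between its $R'$ and $R''$. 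The paper's version is shorter and more symmetric; yours is more constructive (it explicitly names the corner that witnesses vertical niceness) but has to handle the degenerate-segment cases, and is a bit more case-heavy.

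Two remarks on rigor, neither fatal. First, you correctly flag that the excluded endpoint for ``$R$ sees a top-right corner below'' is the bottom-left corner of $R$; this follows from rotating Definition~\ref{def:seeing} and is essential for Case~B2, so it deserves the explicit derivation you promise (``sees top-left on right, excluding bottom-right of $R$'' becomes, under a $90^\circ$ clockwise rotation, ``sees top-right below, excluding bottom-left of $R$,'' with the ``top edge'' condition becoming ``right edge''). Second, Cases~A and~B2 hinge on a length-zero segment counting as a valid witness; this is consistent with the paper's conventions (it explicitly allows a line fence to degenerate to a single point), but since the paper only gives the definition of seeing in the horizontal direction and says the other directions are ``symmetric,'' a sentence acknowledging the degenerate interpretation would close the loop. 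Also note in Case~B1 one should check $R^*\neq R'$, which is immediate from $y_t^*=y_b<y_t\le y_t'$.
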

\begin{proof}
	For contradiction, let us assume that there exists a rectangle $R\in \OPT'$ such that it is neither horizontally nice nor vertically nice. Consider a horizontal line segment $h$ that connects the top-right corner of $R$ to the right vertical edge of $S$. 
	If $h$ intersects some rectangle in $\OPT'$ or contains the top edge of some rectangle in $\OPT'$, then let $R'\in \OPT'$ denote the rectangle intersected by $h$, or that has its top side contained in $h$, that is closest to $R$. From Observation \ref{obs:lemma:charging_process_extension}, we know that either $R$ sees the bottom left corner of $R'$ or $y_{t}\leq y_{t}', y_{b}' < y_{b} ,x_{r}=x_{l}'$, where $y_{t},y_{t}',y_{b},y_{b}'$ are the $y$-coordinates of the top edges of $R, R'$ and bottom edges of $R, R'$, respectively, and $x_{r}, x_{l}'$ are the $x$-coordinates of the right and left edges of $R$ and $R'$, respectively.
	and if $h$ does not intersect any rectangle and does not contain the top edge of any other rectangle in $\OPT'$, then again by Observation \ref{obs:lemma:charging_process_extension}, we know that the right edge of $R$ is contained in the right edge of $S$ contradicting our assumption on $R$. So we have deduced that if a rectangle is not horizontally nice, then there exists a rectangle $R'$ (possibly $S$) such that the bottom right vertex of $R$ is strictly contained in the interior of a vertical edge of $R'$. Using symmetry, we can also prove that if a rectangle is not vertically nice, then there exists a rectangle $R''$(possibly $S$) such that the bottom right vertex of $R$ is strictly contained in the interior of a horizontal edge of $R''$. Since $R$ is neither horizontally nice nor vertically nice, we should have two rectangles that have a common interior point in $\OPT'$ -- contradicting our assumption about rectangles in $\OPT'$.    
\end{proof}
Using Proposition \ref{prop:nice}, we can assume without loss of generality that the number of horizontally nice rectangles is at least $|\OPT|/2$.
Now let us restate the partitioning lemma, where we will use $\tau = {4}/{\varepsilon}+3$. 
\lemmapartition*

We introduce an (unprocessed) child vertex of $v$ corresponding to
each connected component of $P\setminus C$ which completes the processing
of $P$. 

We apply the above procedure recursively to each unprocessed vertex
$v$ of the tree until there are no more unprocessed vertices. Let
$T$ denote the tree obtained at the end, and let $\R'$ denote the
set of all rectangles that we assigned to some leaf during the recursion. 
For any given $\varepsilon \geq 0$, we show that by using $\tau = {4}/{\varepsilon}+1$ in the partition lemma above, we can guarantee that $|\R'|\geq |\OPT|/(2+\varepsilon)$. 

\begin{lemma}
	The tree $T$ and the set $\R'$ satisfy the following properties.
	\begin{enumerate}[label=(\roman*)]
		\item For each node $v\in T$, the horizontal edges of
		$P_{v}$ do not intersect any rectangle in~$\OPT'$. 
		\item $T$ is a $120/\varepsilon+ 108$-recursive partition for $\R'$.
		\item If a rectangle $R\in\OPT'$ is $(4/\varepsilon +3)$-protected in $P_{v}$ for some node
		$v\in T$, then 
		\begin{itemize}
			\item it is $(4/\varepsilon+3)$-protected in $P_{v'}$ for each descendant $v'$ of $v$, 
			\item $R\in\R'$.
		\end{itemize}
	\end{enumerate}
	\label{lemma:partition_2pluseps_apx}
\end{lemma}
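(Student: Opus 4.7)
The plan is to follow the exact same template as the proof of Lemma~\ref{lemma:partition_3apx}, simply tracking how the parameter $\tau$ propagates through the three claims when we choose $\tau = 4/\varepsilon + 3$ instead of $\tau = 7$. All three properties are immediate consequences of the Partitioning Lemma (Lemma~\ref{lemma:partitionningLemma}) together with the definition of $\tau$-fences and the recursive procedure above; the work is mostly bookkeeping rather than a new geometric argument.

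For property (i), I would observe that at every node $v$ we either stop (at a leaf, where nothing is cut) or invoke Lemma~\ref{lemma:partitionningLemma} on $P_v$ to obtain a cut $C$; by property~(3) of that lemma, only the single vertical segment $\ell \in C$ may intersect rectangles of $\OPT'$, so every horizontal edge introduced into $P_{v'}$ (for a child $v'$) comes from a $\tau$-fence or from a horizontal segment of $C$, neither of which intersects any rectangle of $\OPT'$ by construction. Combined with the inductive hypothesis that horizontal edges of $P_v$ already satisfy this, we conclude that the same holds for $P_{v'}$.

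For property (ii), I would check the three structural conditions in the definition of a $k$-recursive partition. The polygon at the root is $S$, which belongs to $\mathcal{P}(30\tau + 18) = \mathcal{P}(120/\varepsilon + 108)$ since it has four edges. Each non-leaf polygon $P_v$ is split into at most three children (in our case, exactly two connected components by Lemma~\ref{lemma:partitionningLemma}(2)), each of which is again an axis-parallel simple polygon with at most $30\tau + 18 = 120/\varepsilon + 108$ edges, so belongs to $\mathcal{P}(120/\varepsilon + 108)$. The leaf condition and the property that each rectangle of $\R'$ is assigned to a unique leaf follow directly from the construction of $\R'$, which assigns to $v$ the unique rectangle of $\OPT'$ contained in $P_v$ when $|\OPT'(P_v)| = 1$.

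For property (iii), the first bullet follows from the observation that $\tau$-fences are preserved under the partition: if $f \in F_\tau(P_v)$ is a $\tau$-fence anchored on some vertical edge $e$ of $P_v$, and $v'$ is a child of $v$ with $P_{v'} \subseteq P_v$, then $f \cap P_{v'}$ remains an $x$-monotone sequence of at most $\tau$ horizontal and vertical line segments contained in $P_{v'}$, still anchored on some vertical edge of $P_{v'}$ (possibly a new one lying on the cut $C$). Hence if $R$ has its top and bottom edges contained in two $\tau$-fences emerging from a common vertical edge of $P_v$, the same holds in $P_{v'}$. This is the only step where the $x$-monotonicity of fences is truly used, and the argument is identical to the one behind Lemma~\ref{lemma:partition_3apx}; the larger value of $\tau$ does not affect it. The second bullet then follows by induction along the recursion: a $\tau$-protected rectangle $R$ is never intersected by the single segment $\ell$ produced by Lemma~\ref{lemma:partitionningLemma}(4) in any $P_{v'}$ containing it, and therefore $R$ eventually lies inside a leaf polygon and is added to $\R'$. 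No new obstacle appears; the main (and only nontrivial) place where the choice of $\tau$ matters is the charging analysis in the next lemma, not here.
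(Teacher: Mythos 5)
Your proof is correct and follows essentially the same route as the paper's (much terser) proof: all three properties reduce to the Partitioning Lemma applied with $\tau = 4/\varepsilon+3$, the arithmetic $30\tau+18 = 120/\varepsilon+108$, and the observation that a $\tau$-fence of $P_v$ restricts to a $\tau$-fence of any child $P_{v'}$. Your write-up simply spells out the bookkeeping the paper leaves implicit.
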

\begin{proof}
	The first and second properties follow from the definition of
	the fences and Lemma~\ref{lemma:partitionningLemma}(2) and $\tau = {4}/{\varepsilon}+3$. The third
	property follows from the fact that for any $\tau\ge 1$,  if $f$ is a $\tau$-fence in $F(P_{v})$
	then, $f\cap P_{v'}$ is a $\tau$-fence of $F(P_{v'})$. 
\end{proof}

\subsection{Charging scheme and analysis.}

\mm{
To describe our charging scheme, we first define a directed graph $H$ whose vertex set is $\OPT'$ and where there is a directed edge from a node corresponding to a rectangle $R\in \OPT'$ to that of $\hat{R}\in \OPT'$ if and only if $R$ sees the bottom-left corner of $\hat{R}$ on its right.  Since a corner of a rectangle can be seen by at most one rectangle on its left, the in-degree of each node in $H$ is at most one. Therefore, $H$ is a directed forest. Notice that in that forest, nodes with out-degree zero correspond to rectangles that are not horizontally nice. }

\mm{Before presenting formally the charging process, we need the following Claim that gives us a bridge between $\tau$-protection and distances in $H$. 
For the ease of the presentation, we identify a node in $H$ with its corresponding rectangle in $\OPT'$. 
\begin{claim}
Let $R$ and $R'$ be two rectangles in $\OPT'$ such that there exists an oriented path in $H$ of length $d$ that connects $R$ to $R'$. Then, there exists a sequence $f$ of at most $2d+1$ horizontal and vertical line segments such that $f$ intersects no rectangles in $\OPT'$ and such that the bottom edges of $R$ and $R'$ are respectively contained in the first and the last horizontal segment of $f$. 
\label{claim:distance_fence}
\end{claim}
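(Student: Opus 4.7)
The plan is to proceed by induction on $d$. The base case $d=0$ is immediate: since $R = R'$, take $f$ to consist of the single horizontal segment given by the bottom edge of $R$, which satisfies $|f| = 1 = 2(0)+1$. The bottom edge of $R$ cannot meet the interior of any other rectangle $R'' \in \OPT'$, since otherwise a small neighborhood of some point on this edge would lie simultaneously in the open rectangle $R$ and in $R''$, contradicting the pairwise disjointness of $\OPT'$.

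For the inductive step, write the path as $R = R_0 \to R_1 \to \cdots \to R_d = R'$ and apply the induction hypothesis to the subpath $R_1 \to \cdots \to R_d$ to obtain a sequence $f_1$ of at most $2d-1$ horizontal and vertical segments whose first horizontal segment is the bottom edge $b_{R_1}$ of $R_1$ and whose last horizontal segment contains the bottom edge of $R'$. Since $R_0 \to R_1$ in $H$, the rectangle $R$ sees the bottom-left corner $c$ of $R_1$ on its right, so by the symmetric version of Definition~\ref{def:seeing} there is a horizontal segment $h$ from a point $p$ on the right edge of $R$ (with $p$ not the top-right corner of $R$) to $c$, such that $h$ does not intersect any rectangle in $\OPT'$. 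I would then define $f$ by prepending three pieces to $f_1$: the bottom edge $b_R$ of $R$, the vertical segment $v$ along the right edge of $R$ from the bottom-right corner of $R$ to $p$ (possibly degenerate when $p$ is that corner), and the witness segment $h$. The key observation is that $h$ and the first segment $b_{R_1}$ of $f_1$ lie on the same horizontal line at height $y_{b,R_1}$ and share the endpoint $c$, so they merge into a single horizontal segment; the resulting sequence therefore has at most $3 + (2d-1) - 1 = 2d+1$ segments.

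The main thing to check, and the only real obstacle, is that every segment of $f$ avoids the interior of every rectangle in $\OPT'$. The segment $h$ is clean by the definition of sees; the segments $b_R$ and $v$ lie on the boundary of the open rectangle $R$, so by the same disjointness argument used in the base case they cannot meet the interior of any other rectangle in $\OPT'$; and the remainder of $f_1$ is clean by the induction hypothesis. Finally, the first segment of $f$ is $b_R$ (or, in the degenerate case $p = $ bottom-right corner of $R$ where $b_R$, $h$, and $b_{R_1}$ all collinearize, a horizontal segment containing $b_R$), so it contains the bottom edge of $R$; and the last segment of $f$ is inherited unchanged from $f_1$ and thus contains the bottom edge of $R'$. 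This yields all the claimed properties and completes the induction.
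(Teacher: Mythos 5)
Your proof is correct and follows essentially the same inductive strategy as the paper. The only cosmetic difference is the direction of the recursion: you peel off the first edge of the path in $H$ and prepend $b_R$, $v$, $h$ to the inductively obtained chain for $R_1\to\cdots\to R'$, while the paper peels off the last edge and appends the vertical segment and the witness segment for $\hat R\to R'$ to the chain for $R\to\cdots\to\hat R$; the bookkeeping (merging two collinear horizontal segments to keep the count at $2d+1$) is the same in both, and your handling of the degenerate case where $p$ coincides with the bottom-right corner of $R$ is a welcome extra level of care.
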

\begin{proof}
We prove the claim by induction on $d$. First, if $d=0$, then $R=R'$ and we set $f$ as the bottom edge of $R$. Then, assume that $R$ and $R'$ are at distance $d\ge 1$ in $H$. Let $\hat{R}$ denote the rectangle incident to ${R'}$ on the directed path from $R$ to $R'$. By induction hypothesis, there exists a set $\hat{f}$ of at most $2(d-1)+1$ horizontal and vertical line segments that intersects no rectangle in $\OPT'$ and whose first and last horizontal segment respectively contain the bottom edge of $R$ and the bottom edge of $R'$. Since $\hat{R}$ sees the bottom-left corner of $R'$, there is a horizontal line segment $h$ that connects the bottom-right corner of $R'$ to a point $x$ on the right edge of $\hat{R}$, without intersecting any rectangle in $\OPT'$. We define $f$ as the union of $\hat{f}$, the vertical line segment from the bottom-right corner of $\hat{R}$ to $x$, and $h$. One can easily check that $f$ satisfies the expected properties. This terminates the proof of the claim. 
\end{proof}
}

\mm{
We now describe the charging process. Consider the tree $T$ obtained at the end of the recursive partitioning process. 
For each node $v\in T$ and each rectangle $R$ intersected by $\ell_v$ during the partitioning of $P_v$, we proceed as follows. Let $\gamma$ be any \emph{maximal} oriented path in $H$, rooted at $R$, such that all rectangles in $\gamma$ are contained in $P_v$. 
\begin{enumerate}
\item If $\gamma$ has length at least $2/\varepsilon$, then we distribute a (fractional) charge of $\varepsilon/2$ to each of the first $2/\varepsilon$ rectangles. 
\item Otherwise, $\gamma$ has length less than $2/\varepsilon$. Let us denote by $R'$ the last rectangle on $\gamma$. Then, 
\begin{enumerate}
\item either $R'$ is a leaf in $H$, i.e., it is not horizontally nice. In that case,  we distribute a charge of $1$ to $R'$. 
\item Or, $R'$ sees the bottom-left corner of a rectangle $R''$ that is not contained in $P_v$. Using the Claim above, we deduce that the bottom edge of $R$ is contained in a $(4/\varepsilon+1)$-fence emerging from a right vertical edge of $P_v$, i.e., $R$ is $(4/\varepsilon+3)$-protected in $P_v$. This is a contradiction with Lemma \ref{lemma:partitionningLemma}$(4)$ and Lemma \ref{lemma:partition_2pluseps_apx}$(iii)$. Thus, this case cannot happen. 
\end{enumerate}
\end{enumerate}
See Figure \ref{fig:2approx} for an illustration of the charging process. We now discuss the correctness of our charging process. First, since each node in $H$ have in-degree at most one, it is clear that a rectangle cannot be charged by two rectangles intersected by the same vertical line segment $\ell_v$. Then, using Claim \ref{claim:distance_fence},  we know that after being charged, each rectangle on $\gamma$ will be $(4/\varepsilon +3)$-protected in $P_{v'}$ for any node $v'\in T$ such that $P_{v'}\subset P_v$, and therefore rectangles that are charged will be part of the final solution $\R'$ (Lemma \ref{lemma:partition_2pluseps_apx}$(iii)$). 
}

\begin{figure}[h]
\centering
 \includegraphics[page=33,trim=30 45 45 70, clip, width=1\textwidth]{figures/figures.pdf}
\caption{An example of the charging process for the $(2+\varepsilon)$-approximation algorithm with $\varepsilon=\frac{2}{5}$. Rectangles that are not horizontally nice are displayed in red. For each rectangle intersected (tilling pattern) by the vertical line segment (dashed yellow line) of a cut (yellow) of a polygon $P$ during the partitioning process, we will charge some rectangles in $\OPT'(P)$ that will be protected by long enough fences (green line segments) in further iterations of recursive partitioning process. 
To decide which rectangles to charge, we use the directed forest $H$ displayed by purple arcs, and select one directed path (thicker lines) rooted at each rectangle that we need to pay for. When this path is long enough, we distribute a charge (big blue dots) of $\varepsilon/2={1}/{5}$ to each of the $2/\varepsilon=5$ first horizontally nice rectangles. Otherwise, the path ends on a non-horizontally nice rectangle and then we give a full charge of $1$ to this rectangle. 
} 
\label{fig:2approx}
\end{figure}

\paragraph{Analysis of the charging scheme.}
Now we finally show that $|\R'|\geq |\OPT'|/(2+\varepsilon)$.

\begin{lemma}
	It holds that $|\R'|\geq |\OPT'|/(2+\varepsilon)$.
\end{lemma}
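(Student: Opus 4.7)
The plan is to establish the stronger inequality $|\OPT'_h|\le (1+\varepsilon/2)|\R'|$, from which $|\OPT'|\le 2|\OPT'_h|\le (2+\varepsilon)|\R'|$ follows immediately using the standing assumption that at least half of $\OPT'$ is horizontally nice (valid by Proposition \ref{prop:nice}). Setting $\I:=\OPT'\setminus\R'$ and $\OPT'_v:=\OPT'\setminus\OPT'_h$, the decomposition $|\OPT'_h|=|\I\cap\OPT'_h|+|\R'\cap\OPT'_h|$ reduces the goal to the charge inequality $|\I\cap\OPT'_h|\le \frac{\varepsilon}{2}|\R'\cap\OPT'_h|+|\R'\cap\OPT'_v|$, which I plan to establish by double-counting the charges produced by the scheme.

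For the double-counting, the scheme assigns exactly one unit of charge per rectangle in $\I\cap\OPT'_h$, distributed either as $\varepsilon/2$ over $2/\varepsilon$ descendants in $\gamma_R$ (case 1) or entirely on the terminal leaf of $\gamma_R$ (case 2(a)); case 2(b) is excluded by Lemma \ref{lemma:partitionningLemma}(4). By Claim \ref{claim:distance_fence} and Lemma \ref{lemma:partition_2pluseps_apx}(iii), every charged rectangle becomes $(4/\varepsilon+3)$-protected in every subsequent sub-polygon and therefore lies in $\R'$. Summing the distributed charge equals the total received charge, so the required inequality is equivalent to the per-rectangle bounds $c_{\hat{R}}\le \varepsilon/2$ for $\hat{R}\in\R'\cap\OPT'_h$ and $c_{\hat{R}}\le 1$ for $\hat{R}\in\R'\cap\OPT'_v$.

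Establishing these per-rectangle bounds is the main technical step. A horizontally nice $\hat{R}$ is not a leaf of $H$, so case 2(a) cannot charge it and only case 1 contributes, with a single charge of $\varepsilon/2$. For both kinds of $\hat{R}$, the bound that $\hat{R}$ is charged at most once follows from two facts: in-degrees in $H$ are at most one, so the ancestors of $\hat{R}$ form a unique chain $R_{-1},R_{-2},\dots$; and whenever some ancestor $R_{-k}\in\I$ is processed with $\gamma_{R_{-k}}$ reaching $\hat{R}$, all intermediate ancestors $R_{-i}$ with $i<k$ lie on that path and become $(4/\varepsilon+3)$-protected (hence no longer in $\I$), while all ancestors $R_{-j}$ with $j>k$ lie geometrically to the left of the vertical cut at $R_{-k}$, so after that cut they reside in a different sub-polygon from $\hat{R}$ and their restricted paths cannot reach $\hat{R}$. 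The hard part will be rigorously handling the branching case where $R_{-k}$ has several outgoing edges in $H$ and $\gamma_{R_{-k}}$ chooses a branch that avoids $\hat{R}$: one must argue that even then $\hat{R}$ can only be charged by the unique ancestor that first encounters it along the recursion, which requires tracking the interplay between the cut geometry and the constraint $P_{v'}\subsetneq P_v$ on where subsequent paths live.
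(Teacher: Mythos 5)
Your route is the paper's route with the algebra repackaged: your target inequality $|\I\cap\OPT'_h|\le\frac{\varepsilon}{2}|\R'\cap\OPT'_h|+|\R'\cap\OPT'_v|$ is the paper's $b_\ell\le\frac{\varepsilon}{2}b_c+r_c$, and both are combined with the standing assumption $|\OPT'_h|\ge|\OPT'|/2$ to reach the same conclusion.

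The ``hard part'' you flag in your last paragraph is not actually hard, and the geometric mechanism you sketch (ancestors $R_{-j}$ with $j>k$ lying ``to the left of the vertical cut'' and hence in a different sub-polygon) is not the right one --- the cut $C$ is not a straight vertical line, so ``left of $\ell_v$'' does not cleanly determine which side of the cut a rectangle lands on. What actually forecloses double charging is purely combinatorial plus the containment constraint on $\gamma$. Since $H$ has in-degree at most $1$, the ancestor chain of $\hat R$ is a single unique path. If $\hat R$ were charged both from $R$ at node $v$ and from $R'$ at a descendant node $v'$, then $R$ and $R'$ both lie on $\hat R$'s unique ancestor chain, so one is an $H$-ancestor of the other. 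If $R\to\cdots\to R'\to\cdots\to\hat R$, the unique path from $R$ to $\hat R$ passes through $R'$, so $R'\in\gamma_R$ and is $(4/\varepsilon+3)$-protected in $P_{v'}$ by Claim~\ref{claim:distance_fence} and Lemma~\ref{lemma:partition_2pluseps_apx}(iii), contradicting that $\ell_{v'}$ intersects it. If $R'\to\cdots\to R\to\cdots\to\hat R$, then the unique $H$-path from $R'$ to $\hat R$ must pass through $R$, but $R\notin\OPT'(P_{v'})$ (it was cut at $v$), and $\gamma_{R'}$ is required to stay inside $\OPT'(P_{v'})$; hence $\gamma_{R'}$ stops before reaching $\hat R$ and cannot charge it. Branching at high-out-degree nodes is irrelevant: out-degree only affects which descendants a path $\gamma$ chooses to explore, never which ancestors of $\hat R$ exist or which unique path they take to reach $\hat R$.
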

\begin{proof}
	Let $b_{c},b_{\ell},r_{c},r_{\ell}$ be the fraction of blue rectangles that are charged, blue rectangles that are intersected(and hence lost), red rectangles that are charged, and red rectangles that are intersected respectively.  
	Then we have
	\begin{align}
		b_{c}+b_{\ell}+r_{c}+r_{\ell}&=1 \label{eq2:3}\\
		r_{c}+r_{\ell} &\leq \frac{1}{2} \label{eq2:2}
	\end{align}
	Here Equation \eqref{eq2:2} follows from the assumption that at least half of the rectangles are horizontally nice. 
	
	Now from the charging scheme, each saved red rectangle will receive a charge of at most $1$ and each saved blue rectangle will receive a charge of at most $\varepsilon/2$. Thus we observe that $b_\ell |\OPT'| \le  (b_c |\OPT'| \cdot \eps/2) + r_c |\OPT'|$. Hence,
	\begin{align}
		\frac{2}{\varepsilon}(b_{\ell}-r_{c}) &\leq b_{c} \label{eq2:1}
	\end{align}
	This property of charging is crucial in our analysis. 
	
	Multiplying Equation \eqref{eq2:1} by $\varepsilon$, and Equation  \eqref{eq2:2} by  $2$ and then adding them gives 
	\begin{align}
		2(b_{\ell}+r_{\ell}) &\leq \varepsilon b_{c}+1.
	\end{align}
	
	Using Equation \eqref{eq2:3}, as $b_\ell+r_\ell=1-b_c+r_c$, we obtain
	\begin{align}
		1\leq (2+\varepsilon)b_{c}+2(r_{c})  \leq (2+\varepsilon)(b_c+r_c).
	\end{align}
	This proves that $b_{c}+r_{c}\geq {1}/({2+\varepsilon})$, i.e., $|\R'|\geq |\OPT'|/(2+\varepsilon)$.
\end{proof}

\section{Omitted proofs}

\subsection{Proof of Lemma~\ref{lemma:DP-running-time}}

Polygons associated to the cells of the dynamic program are axis-parallel,
have at most $k$ edges, and each vertex has integral coordinates
in $[0,\dots,2n-1]\times[0,\dots,2n-1]$. Thus, the number of cells
is $O(n^{k})$. For each cell, the algorithm enumerates all possible
partitions of the corresponding polygon $P\in\P(k)$ into two or three
polygons $P_{1},P_{2},P_{3}$ in $\P(k)$. Such a partition can be
characterized by two sequences $\sigma$ and $\sigma'$ of line segments
whose union corresponds to edges of $P_{i}$ (with $i=1,2,3$) that
are contained in the interior of $P$. Since each $P_{i}$ has at
most $k$ edges and each such edge is shared by two polygons $P_{i}$,
the total number of line segments in $\sigma\cup\sigma'$ is at most
$3k/2$. Thus, the running time to process each DP cell is $O(n^{3k/2})$.
This gives an overall running time of $O(n^{5k/2})$. \qed

\subsection{Proof of Lemma~\ref{lemma:k-recursive-sufficient}}

Notice that we can assume w.l.o.g. that there is no pair of rectangles $R_{p},R_{q}\in\R$
such that $R_{p}\subseteq R_{q}$ (if that is the case, we can safely
replace $R_{q}$ by $R_{p}$ in any feasible solution containing $R_{q}$).

We will prove the statement by induction on the height of the tree.
If the tree has height $1$, it means that $|\R'|$ is constant and
each node aside from the root contains either a unique or no rectangle
from $\R'$. For each polygon in one of the leaves we can define a
feasible partition into a constant number of polygons from $\P$ by
circumventing the corresponding rectangle from $\R'$ and extending
its horizontal boundary until touching the boundary of the polygon.
This sequence of polygons represents a feasible partition that our
Dynamic Program considers, meaning that the returned solution will
have size at least $|\R'|$ as the polygons containing the rectangles
from $\R'$ will be leaves containing a unique rectangle.

Following the inductive proof, we can \ari{recurse} the argument on each
subtree rooted at a child of the root (notice that the subtrees are not $k$-recursive partitions of their corresponding sets of rectangles as the polygons associated to their roots are not $S$; however, that requirement does not play a role in this argumentation so, for the sake of this proof, we can assume that partitions allow to associate arbitrary polygons from $\P$ to the root node). Again the subdivision into
polygons induced by the recursive partition for $\R'$ is a feasible
candidate that our Dynamic Program considers, leading to a returned
solution of size at least $|\R'|$.\qed

\subsection{Proof of Lemma~\ref{lemma:OPT-prime-ok}}

Let $\mathcal{T}$ be a $k$-recursive partition for a set $\R'\subseteq\OPT'$,
and consider $\tilde{\R}\subseteq\OPT$ to be the set of original
rectangles that were maximally expanded so as to obtain $\R'$. In
particular $|\R'|=|\tilde{\R}|$. Since each rectangle $R\in\tilde{\R}$
is contained in a unique rectangle $R'\in{\R'}$, we conclude that
$\mathcal{T}$ is also a $k$-recursive partition for $\tilde{\R}$.
\qed

\subsection{Proof of Proposition~\ref{prop:not-both-nested}}

For a contradiction, assume that a rectangle $R\in\OPT'$ is vertically
\emph{and} horizontally nested. W.l.o.g. we can assume that its top
edge is contained in the interior of an edge of a rectangle $R'\in\OPT'$
and its left edge is contained in the interior of an edge of a rectangle
$R''\in\OPT'$. Let $c=(x,y)$ be the coordinates of the top-left
corner of $R$. Since $c$ lies in the interior of the bottom edge
of $R'$ and coordinates of rectangles in $\OPT'$ are integral, we
deduce that the point $c'=(x-1/2,y+1/2)$ must lie in the interior
of $R'$. An analogous argumentation indicates that $c'$ is also
contained in the interior of $R''$. This is a contradiction with
the fact that rectangles in $\OPT'$ are disjoint. \qed

\subsection{Proof of Lemma~\ref{lemma:charging_process}}

Let $R$ be a rectangle in $\OPT'(P)$ that is not horizontally nested.
We prove that $R$ sees a corner on its right, or is protected. The
left case is symmetrical. Consider the horizontal line segment
$h$ that connects its top-right corner to a right vertical edge $e'$
of $P$. If $h$ does not intersect any rectangle in $\OPT'(P)$,
then $R$ must be protected by a fence that emerges from a point in
$e'$, i.e., $R$ is protected in $P$. Otherwise, let $R'\in\OPT'(P)$
denote the rectangle intersected by $h$, or that has its top side
intersected by $h$, that is the closest to $R$. We now use the fact
that $R$ is not horizontally nested to prove that $R$ sees a corner
of $R'$.

\begin{figure}[H]
	\centering \includegraphics[page=12, width=1\textwidth]{figures/figures.pdf}
	\label{fig:find_corner}
\end{figure}

Let $y_{b}$ and $y_{t}$ denote the $y$-coordinates of the bottom
and top edge of $R$ and let $y'_{b}$ and $y'_{t}$ denote the $y$-coordinates
of the bottom and top edge of $R'$. Additionally, let $x_{r}$ denote
the $x$-coordinate of the right edge of $R$ and let $x'_{l}$ denote
the $x$-coordinate of the left edge of $R'$. We know that $x_{r}\le x'_{l}$.
We now treat all the possible cases.

\begin{enumerate}
	\item[1.] If $y_{t}=y'_{t}$ then, by definition of $R'$, we obtain directly
	that $R$ sees the top-left corner of $R'$\footnote{In this case, one can prove that we must also have $x_{r}=x'_{l}$.}. 
	\begin{enumerate}
		\item[2.] Otherwise, we have $y'_{b}<y_{t}<y'_{t}$. 
		\begin{enumerate}
			\item[a.] If $x_{r}=x'_{l}$ then necessarily we have that $y_{b}\le y'_{b}$
			since we assumed that $R$ was not horizontally nested. In this case,
			it is clear that $R$ sees the bottom-left corner of $R'$. 
			\item[b.] Otherwise, we have $x_{r}<x'_{l}$, and then either $y'_{b}<y_{b}$
			or $y_{b}\le y'_{b}$. 
			\begin{enumerate}
				\item[i.] We show that $y'_{b}<y_{b}$ cannot happen. Indeed, assume that $y'_{b}<y_{b}$
				and consider the rectangular area $W$ with bottom-left corner $(x_{r},y_{b})$
				and top-right corner $(x'_{l},y_{t})$. In particular, the right edge
				of $R$ is the left edge of $W$. Since rectangles in $\OPT'$ are
				maximally large in each dimension, $W$ must intersect some rectangle
				in $\OPT'$. By maximality again, there must be a rectangle $R''\in\OPT'$
				that intersects $h$ within $W$, or that has its top side contained
				in $h$ within $W$. If $R''$ is contained in $P$, then we get a
				contradiction with the definition of $R'$. Otherwise, if $R''$ is
				not contained in $P$, then either $R''$ is intersected by the boundary
				of $P$, and by Lemma \ref{lemma:partitionningLemma6}(iii), it must be intersected
				by a \emph{vertical} edge of $P$, or $R''$ must be contained in
				the complement of $P$. In both cases, the horizontal rightwards ray
				from the top-right corner of $R$ must reach the boundary of $P$
				before reaching the left edge of $R'$. This is a contradiction with
				the definition of $R'$. 
				\item[ii.] If $y_{b}\le y'_{b}$ then we prove that $R$ sees the bottom-left
				corner of $R'$. As in the previous case, consider the rectangular
				area $W$ with bottom-left corner $(x_{r},y'_{b})$ and top-right
				corner $(x'_{l},y_{t})$. If $R$ does not see the bottom-left corner
				of $R'$, then there must be a rectangle in $\OPT'$ that intersects
				$W$. By maximality, this implies that there exists a rectangle $R''$
				that intersects $h$ or that has its top side contained in $h$. By
				the same argumentation as before we obtain a contradiction with the
				definition of $R'$.  \qed
			\end{enumerate}
		\end{enumerate}
	\end{enumerate}
\end{enumerate}
\subsection{Proof of Observation~\ref{obs:lemma:charging_process_extension}}
	In case $1$, we know from the proof of Lemma~\ref{lemma:charging_process} that if $R$ is non-horizontally nested, then $R$ sees the bottom left corner of $R'$ or $y_{t}=y_{t}',y_{b}'<y_{b},x_{r}=x_{l}'$ (basically the top picture in case $1$ of Figure \ref{fig:find_corner}). When $R$ is horizontally nested, we already have that $y_{t}<y_{t}', y_{b}' < y_{b} ,x_{r}=x_{l}'$ by the definition of horizontally nestedness. Which completes the proof for this case of the observation. In case $2$, if $R$ is non-horizontally nested, by the exact same arguments as in case $(2.b.i)$ in the proof of Lemma~\ref{lemma:charging_process}(refer to the figure of case $(2.b.i)$ of Figure \ref{fig:find_corner}) , we get a contradiction. Otherwise, the only case for $R$ to be horizontally nested in case $2$ of our observation is if the right edge of $R$ has to be contained in the right edge of $S$. This completes the proof of the observation

\subsection{Proof of Lemma~\ref{lemma:charged_implies_protected}}

Suppose that $R'$ receives a charge from a rectangle $R$ during
the partitioning of $P_{v}$, for some $v\in T$. In particular, $R$
sees a corner $c$ of $R'$. W.l.o.g. we assume that $R'$ is on the
right of $R$, and $c$ is the top-left corner of $R'$ (the case
where $c$ is the bottom-left corner is analogous). Consider $v'\in T$
to be the child of $v$ such that $P_{v'}$ contains $R'$. We show
that $R'$ is protected in $P_{v'}$. As discussed before, as a consequence of property (4) of Lemma \ref{lemma:partitionningLemma6},
this will imply that $R'\in\R'$. The vertical line segment $\ell_{v}$
that intersected $R$ is a vertical edge of $P_{v'}$. Since $R$
sees $c$, the horizontal leftwards ray from $c$ does not intersect
any rectangle in $\OPT'(P_{v'})$ and reaches a left vertical edge
of $P_{v'}$, say on a point $p$. Then, the horizontal edge of $R'$
that contains $c$ is contained in the line fence of $F(P_{v'})$
emerging from $p$, i.e., $R'$ is protected in $P_{v'}$. \qed

\subsection{Proof of Lemma~\ref{lemma:corner_charged_once}}

Let $c$ be a corner of a rectangle $R'\in\R'$. W.l.o.g. we assume
that $c$ lies on the left side of $R'$. We also assume that $c$
is the top-left corner of $R'$ (the case when $c$ is the bottom-left
corner is analogous). For a contradiction, assume that $c$ is charged
at least twice. Then, there are two rectangles $R_{1}\in\OPT'(P_{v_{1}})$
and $R_{2}\in\OPT'(P_{v_{2}})$, for some nodes $v_{1},v_{2}\in T$,
such that $c$ receives charges from $R_{1}$ during the partitioning
of $P_{v_{1}}$ and also receives charge from $R_{2}$ during the
partitioning of $P_{v_{2}}$. In particular, both $R_{1}$ and $R_{2}$
see $c$, and $R'\subseteq P_{v_{2}}\cap P_{v_{1}}$. We have $P_{v_{2}}\subseteq P_{v_{1}}$
or $P_{v_{1}}\subseteq P_{v_{2}}$. W.l.o.g. we assume that $P_{v_{2}}\subseteq P_{v_{1}}$.
In particular, both $R_{1}$ and $R_{2}$ are contained in $P_{v_{1}}$.

First, if $v_{1}=v_{2}$, then $R_{1}$ and $R_{2}$ are both intersected
by the same vertical line segment $\ell_{v_{1}}$ (Lemma \ref{lemma:partitionningLemma6},
property (3)). In particular, the top edge of $R_{2}$ is below the
bottom edge of $R_{1}$, or the contrary. Since both $R_{1}$ and
$R_{2}$ see $c$, the leftwards horizontal ray $h$ from $c$ intersect
the right edges of both $R_{1}$ and $R_{2}$. These two facts together
indicate that $h$ intersect the bottom edge of $R_{1}$ and the top
edge of $R_{2}$, or the contrary. This brings a contradiction with
the third condition of Definition \ref{def:seeing}.

Otherwise, we have that $v_{2}$ is a descendant of $v_{1}$. We show
that the contradiction comes here from the fact that $R_{2}$ must
be protected in $P_{v_{2}}$. Since $c$ was charged from $R_{1}$
during the partitioning of $P_{v_{1}}$, the argumentation in the
proof of Lemma \ref{lemma:charged_implies_protected} indicates that
$c$ is contained in a line fence $f$ emerging from a right vertical
side of $P_{v_{2}}$. In particular, since $R_{2}$ is contained in
$P_{v_{2}}$ and sees $c$, it must be that the top edge (or the bottom
edge) of $R_{2}$ is contained in $f$; otherwise we would get a contradiction
with the assumption that fences in $P_{v_{2}}$ do not penetrate any
rectangle in $\OPT'(P_{v_{2}})$. This implies that $R_{2}$ must
be protected in $P_{v_{2}}$. Thus, by property (4) of Lemma \ref{lemma:partitionningLemma6},
$R_{2}$ cannot be intersected during the partitioning of $P_{v_{2}}$,
and $R'$ does not receive any charge from $R_{2}$. \qed

\subsection{Proof of Lemma~\ref{lemma:factor-6}}

We can write $\OPT'=\OPT'_{h}\cup\R'\cup L$, where $\OPT'_{h}$ are
the rectangles in $\OPT'$ that are horizontally nested and $L$ are
the rectangles in $\OPT'$ not horizontally nested that are intersected
by $\ell_{v}$, for some node $v\in T$. Notice that $L\cap\R'=\emptyset$.
Since a rectangle has $4$ corners, we know from Lemma \ref{lemma:corner_charged_once}
that each rectangle in $\R'$ receives a charge of at most $2$. Then,
by Lemma \ref{lemma:charged_implies_protected}, no corners of rectangles
in $L$ are charged. Thus, $|L|\le2|\R'|$. By our initial assumption
that at most half of the rectangles in $\OPT'$ are horizontally nested,
it holds that $|L|+|\R'|\ge|\OPT'|/2$. Thus, $|\R'|\ge|\OPT'|/6=|\OPT|/6$.
\qed



\subsection{Remaining details of proof of Lemma~\ref{lemma:partitionningLemma6}}\label{apx:details-Lemma6}
We would like to ensure that $P\setminus C$ has at least two connected
components. This is clearly true if $\ell$ or $\overline{pp'}$ contain
a point in the interior of $P$. Otherwise, $\ell$ is contained in
some edge of $P$. If $\ell$ is contained in a left vertical edge
$e$ of $P$, then $p=p'$ and in particular $p$ is contained in
the interior of the left edge of some rectangle $R\in\OPT'$. Then,
also the top-left corner of $R$ is contained in $e$, and the line
segment joining the top-left corner of $R$ with the top-right corner
of $R$ is a fence emerging from $e$ that is longer than $f$, which
contradicts the choice of $f$. If $\ell$ is contained in a right
vertical edge $e'$ of $P$, then $\overline{pp'}$ must be identical
with the top or the bottom edge of $P$ (since otherwise $\overline{pp'}$
would contain a point in the interior of $P$). Assume w.l.o.g. that
it is the top edge of $P$. In this case, $P\setminus C$ has only
one connected component; however, then $\left\lfloor s/3\right\rfloor +1=1$
which implies that $P$ has at most $4s\le8$ edges. Therefore, this
is an easy case since we can afford to increase the complexity of
$P$ substantially by our cut. \mm{For example, since in this case $p'$ must be the top-right corner of a rectangle $R\in\OPT'(P)$, 
	we can define a cut $C'$
	that consists of the horizontal line segment between $p$ and the top-left corner of $R$, the left edge of $R$, and the
	bottom edge of $R$. }
Then, $P\setminus C'$ has two \mm{(horizontally convex)} connected components,
each with at most $4s+2\le10$ edges each, and $C'$ does not intersect
any rectangle from $\OPT'(P)$.

\end{document}